\newenvironment{breakablealgorithm}
  {
   \begin{center}
     \refstepcounter{algorithm}
     \hrule height.8pt depth0pt \kern2pt
     \renewcommand{\caption}[2][\relax]{
       {\raggedright\textbf{\fname@algorithm~\thealgorithm} ##2\par}%
       \ifx\relax##1\relax
         \addcontentsline{loa}{algorithm}{\protect\numberline{\thealgorithm}##2}%
       \else
         \addcontentsline{loa}{algorithm}{\protect\numberline{\thealgorithm}##1}%
       \fi
       \kern2pt\hrule\kern2pt
     }
  }{
     \kern2pt\hrule\relax
   \end{center}
  }
\newcommand{\refeq}[1]{(\ref{#1})}
\newcommand{\ket}[1]{\left|\, #1 \, \right\rangle}
\newcommand{\e}{\mathrm{e}}
\newcommand{\dd}{\mathrm{d}}
\newcommand{\imag}{i}
\newcommand{\ordo}{O} % Note: \Omicron is not defined.
\newcommand{\ceil}[1]{\left\lceil #1 \right\rceil}
\newcommand{\floor}[1]{\left\lfloor #1 \right\rfloor}
\newcommand{\round}[1]{\left\lfloor #1 \right\rceil}
\newcommand{\sgn}{\text{sgn}}
\newcommand{\lcm}{\text{lcm}}
\newcommand{\poly}{\mathrm{poly}}
\renewcommand{\vec}{\mathbf}
\newcommand{\Si}{\mathop{\text{Si}}}
\renewcommand{\parallel}{=}
\newcommand{\inset}{\cap}
\newcommand{\deltaexpr}{6 \sqrt{3} \cdot 2^{\Delta}}
\newtheorem{theorem}{Theorem}
\newtheorem{lemma}{Lemma}
\newtheorem{theoremcorollary}{Corollary}[theorem]
\newtheorem{appdefinition}{Definition}[section]
\newtheorem{appclaim}{Claim}[section]
\newtheorem{applemma}{Lemma}[section]
\newenvironment{notes}{\noindent\emph{Notes.}}{\ignorespacesafterend}
\newlist{pseudocode}{enumerate}{4}
\setlist[pseudocode]{
  label={\arabic*},
  ref={\arabic*},
  before=\raggedright,
  leftmargin=*,
  itemsep=0pt,
  topsep=2pt}
\setlist[pseudocode,2]{label*={.\arabic*}, topsep=1pt}
\setlist[pseudocode,3]{label*={.\arabic*}, topsep=1pt}
\setlist[pseudocode,4]{label*={.\arabic*}, topsep=1pt}
\newlist{romanlist}{enumerate}{1}
\setlist[romanlist]{
  label={\roman*)},
  leftmargin=*,
  itemsep=3pt,
  topsep=3pt}
\title{On the success probability of quantum order finding}
\author[1,2]{\href{mailto:ekera@kth.se}{Martin Ekerå}}
\affil[1]{\small KTH Royal Institute of Technology, Stockholm, Sweden}
\affil[2]{\small Swedish NCSA, Swedish Armed Forces, Stockholm, Sweden}
\begin{document}
\maketitle

\begin{abstract}
We prove a lower bound on the probability of Shor's order-finding algorithm successfully recovering the order~$r$ in a single run.
The bound implies that by performing two limited searches in the classical post-processing part of the algorithm, a high success probability can be guaranteed, for any~$r$, without re-running the quantum part or increasing the exponent length compared to Shor.
Asymptotically, in the limit as~$r$ tends to infinity, the probability of successfully recovering~$r$ in a single run tends to one.
Already for moderate~$r$, a high success probability exceeding e.g.~$1 - 10^{-4}$ can be guaranteed.
As corollaries, we prove analogous results for the probability of completely factoring any integer~$N$ in a single run of the order-finding algorithm.
\end{abstract}

\section{Introduction}
In~\cite{shor94, shor97}, Shor famously introduced a polynomial-time quantum algorithm for factoring integers.
Shor's factoring algorithm works by first classically reducing the integer factoring problem (IFP) to an order-finding problem (OFP).
The resulting OFP is then solved using an order-finding algorithm, that in turn consists of a quantum part and a classical post-processing part.

Although the order-finding algorithm was originally described by Shor in the context of integer factorization, the algorithm is generic:
Let~$g$ be a generator of a finite cyclic group, that we write multiplicatively in what follows, and for which the group operation may be implemented efficiently quantumly.
Shor's order-finding algorithm then finds the order~$r$ of~$g$ in polynomial time.
It is also possible to view the algorithm as a generic period-finding algorithm~\cite[p.~1501]{shor97}.

In this work, we provide a detailed analysis of the success probability of Shor's order-finding algorithm.
In particular, we prove a lower bound on the probability of the algorithm successfully recovering~$r$ in a single run, and we show this probability to be high already for moderate~$r$.
Asymptotically, in the limit as~$r$ tends to infinity, the success probability tends to one.
A key to achieving these results is to perform two limited searches in the classical post-processing.
We review earlier results and analyses in Sect.~\ref{sec:earlier-works} for comparison to the results we obtain in this work.

As a corollary, we use the reduction from the IFP to the OFP in~\cite{completely} to prove a lower bound on the probability of completely factoring any integer~$N$ efficiently in a single run of the order-finding algorithm.
Compared to the bound in~\cite{completely}, our lower bound also accounts for the probability of the order-finding algorithm failing to recover~$r$.
Again, already for moderate~$N$, a high probability of recovering the complete factorization of~$N$ in a single run can be guaranteed.
Asymptotically, in the limit as~$N$ tends to infinity, the success probability tends to one.

\subsection{Preliminaries}
Assuming quantum computation to be more expensive than classical computation, it is advantageous to strive to reduce the requirements imposed on the quantum computer at the expense of increasing the classical computational burden.

This provided that it remains feasible to execute the classical part in practice.
A minimum requirement is that it must execute in polynomial time.
Against this backdrop, we parameterize our algorithms:
We pick parameters in our asymptotic analyses so that both the quantum and classical parts execute in polynomial time.
For concrete problem instances, we pick parameters so that the classical part can be executed in practice on, for instance, an ordinary workstation or cluster.

In Shor's order-finding algorithm, a generator~$g$ is exponentiated to an exponent in superposition.
In a typical implementation, the standard square-and-multiply algorithm is used to classically pre-compute powers~$g^{2^i}$ of~$g$ that are then composed quantumly.
Hence, as a rule of thumb, the longer the exponent, the longer the runtime and the required coherence time, and the larger the circuit depth, and so forth.
Consequently, we strive to keep the exponent as short as possible, whilst respecting the constraint that a single run shall in general suffice to compute~$r$.

Compared to Shor's original work on performing order-finding in the context of integer factorization, the exponent length is on par or slightly shorter in this work\footnote{Depending on how the order $r$ is upper-bounded, for further details see App.~\ref{appendix:notes-shor-order-finding-factoring}.} when using traditional continued fractions-based classical post-processing.
A few more bits can be peeled off by instead using lattice-based post-processing.

It is possible to further reduce the exponent length, at the expense of performing several runs of the quantum part of the algorithm and jointly post-processing the outputs classically, as in Seifert's variation~\cite{seifert} of Shor's order-finding algorithm.

Although several parts of our analysis cover both Shor's original order-finding algorithm and Seifert's variation of it, the focus in this work is on analyzing the success probability when performing order finding with the aim of computing~$r$ in a single run of the quantum part of the algorithm.
For a detailed analysis of Seifert's algorithm, the interested reader is instead referred to~\cite[App.~A]{general}.

Note also that if the reason for performing order finding is to factor RSA integers so as to e.g.\ break the RSA cryptosystem~\cite{rsa}, then the specialized quantum algorithm of Ekerå and Håstad~\cite{ekera-pp, ekera-hastad} outperforms Shor's and Seifert's algorithms.

\subsection{Quantum order finding}
\label{sec:quantum-order-finding}
In what follows, we re-use notation and elements from~\cite[App.~A]{general}:
We let~$g$ be a generator of a cyclic group of order $r \ge 2$, and induce the quantum state
\begin{align}
  \label{eq:quantum-state}
  \frac{1}{2^{m+\ell}} \sum_{a, \, j \, = \, 0}^{2^{m+\ell - 1}}
  \exp \left[ \frac{2\pi\imag}{2^{m+\ell}} aj \right]
  \ket{j, g^a}
\end{align}
for~$m$ an upper bound on the bit length of~$r$ such that $r < 2^m$, and $\ell \sim m/s$ a positive integer for~$s$ some tradeoff factor.
In Shor's original algorithm~\cite{shor94, shor97}, the tradeoff factor $s = 1$, whereas $s > 1$ in Seifert's variation~\cite{seifert} of the algorithm.
As explained in~\cite[App.~A]{general}, a single run yields at least\footnote{If~$m$ is the bit length of~$r$, the algorithm yields $\sim \ell$ bits of information on~$r$.
If~$m$ is greater than the bit length of~$r$, the algorithm yields more than $\sim \ell$ bits of information on~$r$.} $\sim \ell$ bits of information on~$r$.

Let $\mathcal A(e) = \{ t \in [0, 2^{m+\ell}) \inset \mathbb Z \mid t \equiv e \:\: (\text{mod } r) \}$.
Following~\cite[App.~A]{general}, if we measure the first register of~\refeq{eq:quantum-state}, then we observe the frequency~$j$ with probability
\begin{align}
  &\frac{1}{2^{2(m+\ell)}}
  \sum_{e \, = \, 0}^{r-1} \,
  \left|\,
  \sum_{a \, \in \, \mathcal A(e)}
  \exp \left[ \frac{2\pi\imag}{2^{m+\ell}} aj \right]
  \,\right|^2 = \notag \\
  &\frac{1}{2^{2(m+\ell)}}
  \sum_{e \, = \, 0}^{r-1} \,
  \left|\,
  \sum_{b \, = \, 0}^{\floor{(2^{m+\ell} - e - 1) / r}}
  \exp \left[ \frac{2\pi\imag}{2^{m+\ell}} (e + rb)j \right]
  \,\right|^2 = \notag \\
  &\frac{1}{2^{2(m+\ell)}}
  \sum_{e \, = \, 0}^{r-1} \,
  \left|\,
  \sum_{b \, = \, 0}^{\floor{(2^{m+\ell} - e - 1) / r}}
  \e^{\imag \theta_r b}
  \,\right|^2 = \notag \\
  &\frac{\beta}{2^{2(m+\ell)}} \left|\, \sum_{b \, = \, 0}^{L} \e^{\imag \theta_r b} \,\right|^2
  +
  \frac{r - \beta}{2^{2(m+\ell)}} \left|\, \sum_{b \, = \, 0}^{L - 1} \e^{\imag \theta_r b} \,\right|^2 \label{eq:pr}
\end{align}
where~$\beta = 2^{m+\ell} \text{ mod } r$, $L = \floor{2^{m+\ell} / r}$, and
\begin{align*}
  \theta_r = \frac{2 \pi \alpha_r}{2^{m+\ell}}
  \quad \text{ for } \quad
  \alpha_r = \{ rj \}_{2^{m+\ell}}
\end{align*}
where $\{u\}_N$ denotes $u \text{ mod } N$ constrained to $[-N/2, N/2)$.
In what follows, we refer to~$\theta_r$ and~$\alpha_r$ as angles and arguments, respectively, associated with~$j$.

Furthermore, $|\, u \,|$ denotes the absolute value of $u \in \mathbb C$, whereas $\floor{f}$, $\ceil{f}$ and $\round{f}$ denote $f \in \mathbb R$ rounded down, up or to the closest integer, respectively.
Ties are broken by requiring that $\round{f} = f + \delta_f$ for some $\delta_f \in (-1/2, 1/2]$.

If $\alpha_r \neq 0$, we may write~\refeq{eq:pr} on closed form as
\begin{align}
  \label{eq:P-non-zero}
  P(\alpha_r)
  &=
  \frac{\beta}{2^{2(m+\ell)}} \frac{1 - \cos(2 \pi \alpha_r (L + 1) / 2^{m+\ell})}{1 - \cos(2 \pi \alpha_r / 2^{m+\ell})}
  + \hphantom{x} \notag \\
  &\quad\quad\quad
  \frac{r - \beta}{2^{2(m+\ell)}} \frac{1 - \cos(2 \pi \alpha_r L / 2^{m+\ell})}{1 - \cos(2 \pi \alpha_r / 2^{m+\ell})}.
\end{align}

Otherwise, if $\alpha_r = 0$, we may write~\refeq{eq:pr} on closed form as
\begin{align}
  \label{eq:P-zero}
  P(0)
  =
  \frac{\beta}{2^{2(m+\ell)}} (L+1)^2
  +
  \frac{r - \beta}{2^{2(m+\ell)}} L^2
  =
  \frac{L^2 r + (2L + 1)\beta}{2^{2(m+\ell)}}.
\end{align}

In what follows, we furthermore let $\ln u$, $\log u$ and $\log_q u$ denote the natural, base-two and base-$q$ logarithms of $u$.
We let~$\mathbb Z_{>0}$ and~$\mathbb Z_{\ge 0}$ denote the positive and non-negative integers, respectively.

\subsection{Understanding the probability distribution}
\label{sec:understanding-the-probability-distribution}
The probability distribution induced by the order-finding algorithm has~$r$ peaks, located approximately a distance $2^{m+\ell} / r$ apart in~$j$.
The first peak is at $j = 0$.

More specifically, the optimal integer frequency of the peak with index~$z$ is
\begin{align*}
  j_0(z) = \round{\frac{2^{m+\ell}}{r} z}
  \quad \text{ for } \quad
  z \in [0, r) \inset \mathbb Z,
\end{align*}
which implies, for $\alpha_0(z)$ the argument associated with $j_0(z)$, that
\begin{align*}
  \alpha_0(z)
  &=
  \{ r j_0(z) \}_{2^{m+\ell}}
  =
  \{ r \round{2^{m+\ell} z / r} \}_{2^{m+\ell}}
  =
  \{ r (2^{m+\ell} z / r + \delta_z) \}_{2^{m+\ell}} \\
  &=
  \{ 2^{m+\ell} z + \delta_z r \}_{2^{m+\ell}}
  =
  \{ \delta_z r \}_{2^{m+\ell}} = \delta_z r \in (-r/2, r/2]
\end{align*}
for some $\delta_z \in (-\frac{1}{2}, \frac{1}{2}]$ that depends on~$z$.

If~$r$ is a power of two, we get~$r$ distinct peaks for which $P(\alpha_0(z)) = 1/r$, whereas $P(\alpha_r) = 0$ for all other~$\alpha_r$.
If~$r$ is not a power of two, $P(\alpha_0(z))$ will vary slightly between the~$r$ peaks, and there will be some noise around the peaks, due to rounding effects.
Hence, the probability mass will be slightly spread out.

The frequency $j_0(z)$ is optimal for the peak with index~$z$ in the sense that other frequencies in the neighborhood of $j_0(z)$ yield smaller or at most equal contributions to the probability mass associated with the peak.

\subsection{Shor's original analysis and post-processing}
\label{sec:shors-original-work}
In his original paper, Shor~\cite{shor94, shor97} lower-bounds the success probability of quantum order finding:
Specifically, Shor shows~\cite[p.~1500]{shor97} that the probability of observing $j_0(z)$ for $z \in [0, r) \inset \mathbb Z$ is asymptotically lower-bounded by $4/(r \pi^2)$ independent of~$z$.
Hence, the probability of observing $j_0(z)$ for some $z \in [0, r) \inset \mathbb Z$ may asymptotically be taken as $4/\pi^2$, where, critically, $z$ is then uniformly distributed on $[0, r) \inset \mathbb Z$.

For $j_0(z)$, the convergent $z/r$ appears in the continued fraction expansion of $j_0(z) / 2^{m + \ell}$ if~$\ell$ is selected sufficiently large (so that $2^{m+\ell} > r^2$, see~\refeq{eq:hw-requirement-j0} in Sect.~\ref{sec:find-order}).

Given the convergent $z/r$, the order~$r$ may of course be trivially recovered, unless cancellations occur due to~$z$ and~$r$ not being coprime.
Shor points out~\cite[p.~1501]{shor97} that at least a fraction $\phi(r)/r$ of the~$r$ values of $z \in [0, r) \inset \mathbb Z$ are coprime to~$r$, and that it follows from~\cite[Thm.~328 on p.~267]{hw} that $\phi(r)/r = \Omega(1 / \log \log r)$.

In summary, the above analysis gives a lower bound on the success probability, and hence an upper bound on the expected number of runs required to recover~$r$.
This being said, the procedure may be improved, as Shor states~\cite[p.~1501]{shor97}:

Firstly, even if an optimal~$j$ is not observed, the~$j$ observed may be close to~$j_0(z)$.
Hence, by trying to solve not only~$j$, but also $j \pm 1$, $j \pm 2$‚ $\ldots$, for $z/r$, the success probability may be increased beyond the asymptotic lower bound of $4/\pi^2$.

Secondly, Shor credits Odlyzko~\cite[p.~1501]{shor97} for pointing out in personal communication that for $d = \gcd(r, z)$, one may recover $r/d$ from the denominator of $z/r$ and then search over~$d$ to recover~$r$ from $r/d$.
This improves the expected number of runs from $\ordo(\log \log r)$ without searching to $\ordo(1)$ provided one exhausts on the order of $(\log r)^{1+\epsilon}$ values of~$d$.\footnote{The statement in~\cite{shor97} is made in the context of factoring an integer $N$, and is expressed in terms of~$N$, by using that $r < N$ for $r$ the order of $g \in \mathbb Z_N^*$. We restate the claim in terms of~$r$.}

\subsection{Earlier works that follow up on Shor's work}
\label{sec:earlier-works}
Following the publication of Shor's original work, a number of analyses of the success probability of quantum order finding have been entered into the literature:

For instance, Knill~\cite{knill} points out that the probability of recovering the order~$r$ may be increased by running Shor's order-finding algorithm multiple times for the same~$g$ and taking the least common multiple of the tentative orders $r / \gcd(r, z)$ extracted from the convergents $z/r$.
After two runs, that yield convergents $z_1 / r$ and $z_2 / r$, respectively, with~$z_1$ and~$z_2$ selected uniformly at random from $[0, r) \inset \mathbb Z$, the probability of finding~$r$ from $z_1/r$ and $z_2/r$ is then lower-bounded by a constant.

Knill furthermore explores tradeoffs between the exponent length, the search space in the classical post-processing, and the probability of recovering $z/r$ from the frequency~$j$ in the classical post-processing.
Cleve et al.~\cite[Sect.~5 on p.~348, App.~C]{cemm97} also explore such tradeoffs.
Seifert~\cite{{seifert}}, and later Ekerå~\cite[App.~A]{general}, explore tradeoffs in the context of jointly solving a set of frequencies for~$r$ --- the idea being to perform less work in each run of the quantum part of the algorithm, at the expense of performing more runs of the quantum part, and more work overall.

McAnally~\cite{mcanally} increases the probability of finding some non-trivial divisor of the order in a single run to ``negligibly less than one'', at the expense of increasing the exponent length by approximately a factor of~$1.5$ compared to Shor.

Asymptotically, the probability of finding the order~$r$ in~$n$ runs is reported to be greater than $1/\zeta(n) - \ordo(N^{-\epsilon})$, when performing order finding in~$\mathbb Z_N^*$, and for~$\zeta$ the Riemann zeta function, as~$N$ tends to infinity for~$\epsilon$ a positive number.
The probability is hence at least~$0\%$ after one run --- i.e.\ we get no information, slightly greater than~$60\%$ after two runs, and slightly greater than~$90\%$ after four runs.

Gerjuoy~\cite[Sect.~III.C.8 on p.~534]{gerjuoy}, provides a lower bound of~$90\%$ on the probability of recovering some non-trivial divisor of the order in a single run, when using the same exponent length~\cite[Sect.~III.C.1 on p.~529]{gerjuoy} as Shor.
This when performing order finding in~$\mathbb Z_N^*$, for~$N$ an odd integer with at least two distinct prime factors as in Shor's original work.
Gerjuoy accomplishes this improvement by leveraging that $r \le \lambda(N) < N/2$ in his analysis~\cite[Sect.~III.C.7 on p.~533]{gerjuoy}, for~$\lambda$ the Carmichael function, whereas Shor uses~$N$ as an upper bound on~$r$.

Bourdon and Williams~\cite{bourdon} report sharp bounds on the probability of Shor's order-finding algorithm returning some non-trivial divisor of the order in a single run; both when the exponent length is as in Shor's original algorithm, and when the exponent is allowed to increase in length by~$q$ bits compared to Shor.

In the former case, they improve upon Gerjuoy's result and obtain an asymptotic lower bound on the probability of $2 \Si(4\pi) / \pi \approx 0.9499$, where~$\Si$ is the sine integral function.
In the latter case, they obtain an asymptotic lower bound of $2 \Si(2^{q+2} \pi) / \pi$.
Both bounds pertain to order finding in~$\mathbb Z_N^*$, for~$N$ an odd integer with at least two distinct prime factors, as in Shor's and Gerjuoy's works.

Proos and Zalka~\cite[App.~A.1]{proos-zalka} briefly discuss how the success probability of Shor's order-finding algorithm is affected when attempting to solve not only the frequency observed~$j$, but also $j \pm 1$, $\ldots$, $j \pm B$, for the order~$r$.

Einarsson~\cite{einarsson} investigates the probability distribution induced by Shor's order-finding algorithm, and the expected number of runs.
He gives no formal lower bound on the success probability, but concludes that there is a high probability of the algorithm yielding a non-trivial divisor of~$r$.
This when performing order finding in~$\mathbb Z_N^*$, for~$N$ a semiprime, with an exponent of the same length as Shor.

Furthermore, Einarsson points out that one can test if a candidate for~$r$ is a positive integer multiple of $r$.
If only a non-trivial divisor of~$r$ is found, then the missing factor may be found ``by trial or possibly by a more efficient algorithm''.

Davis~\cite{davis} follows up on the work by Bourdon and Williams with a slightly improved analysis in the context of benchmarking quantum computers.

In the context of general quantum phase estimation, where a phase is to be estimated with~$s$ bits of precision using what the authors' refer to as~$p$ extra qubits (i.e.\ extra exponent bits), Chappell et al.~\cite{clsia} present an exact formula for the error probability $\epsilon(s, p)$, and explore the limits of this formula as $s \rightarrow \infty$ and $p \rightarrow \infty$.

\subsubsection{Precise simulations and estimates}
\label{sec:earlier-works-simulations}
In~\cite[App.~A]{general}, Ekerå studies the success probability of quantum order finding by means of simulations, when using lattice-based post-processing, and with respect both to Shor's original order-finding algorithm and to Seifert's algorithm.

The idea is to first capture the probability distribution induced for a given known order~$r$, and parameters~$m$ and~$\ell$, and to then sample the probability distri\-bution to simulate the quantum part of the algorithm.
This is feasible for large, crypto\-graphically relevant, problem instances for as long as~$r$ is known, enabling accurate estimates of the success probability and the expected number of runs to be derived for various post-processing strategies and parameterizations.

The simulations show~\cite[App.~A]{general} that a single run of Shor's order-finding algorithm with $\ell \sim m$ usually suffices to recover~$r$, provided a limited enumeration of the lattice is performed.\footnote{Note that~$m$ is the bit length of~$r$ in~\cite{general}, and an upper bound on the bit length in this paper.}
For Seifert's algorithm with $s \ge 1$ an integer, it suffices to perform slightly more than~$s$ runs to achieve $\ge 99\%$ success probability, see e.g.~\cite[Tab.~A1--A2]{general}, when not enumerating the lattice. This may be slightly improved by enumerating a limited number of vectors in the lattice.

As a rule of thumb, it is possible to decrease~$\ell$ at the expense of enumerating more vectors in the lattice, and vice versa.
The enumeration of the lattice essentially captures the two searches in offsets in~$j$ and in multiples of the denominator in the convergent $z/r$ that are required to achieve a high success probability when using continued fractions-based post-processing.
If the order is partially very smooth, it may however be necessary to perform an additional search in the classical post-processing to recover~$r$.
An algorithm for this search is given in~\cite[Sect.~6.2.4]{general}.

A takeaway from~\cite[App.~A]{general} is that the simulations show that a single run with~$\ell \sim m$ is usually sufficient to recover the order~$r$.
In this paper, we formalize this result, by proving a lower bound on the success probability that holds for any~$r$.

\subsection{Overview of our contribution}
\label{sec:overview-our-contribution}
As stated above, there are a number of works that follow up on Shor's original work to give better bounds on the success probability of the order-finding algorithm.

We are however not currently aware of any previous formal lower bound on the success probability that indicates that a single run of the quantum part is usually sufficient to recover the order~$r$ without increasing the exponent length, and that the success probability tends to one in the limit as~$r$ tends to infinity.

In this work, we prove such a bound, for the quantum part of the algorithm as described in Sect.~\ref{sec:quantum-order-finding}.
A key to achieving this result is to perform limited searches in the classical post-processing, and to account for these searches in the analysis.
As stated in Sect.~\ref{sec:shors-original-work}, Shor points to searching as an option in his original work.
Before giving an overview of our analysis, let us first revisit the exponent length:

Suppose that we use traditional continued fractions-based post-processing:
We then pick~$m$ and~$\ell$ such that $2^m > r$ and $2^{m+\ell} > r^2$.
The exponent, which is of length $m + \ell$~bits in this work, is then on par with or slightly shorter than the exponent in Shor's original work, depending on how~$r$ is bounded, see App.~\ref{appendix:notes-shor-order-finding-factoring}.

Suppose that we instead use lattice-based post-processing:
We may then pick a slightly smaller~$\ell$, without voiding the lower bound on the probability of successfully recovering~$r$ in a single run, provided that we accept to perform a limited enumeration of the lattice.
For further details, see the next section and App.~\ref{appendix:lattice-based-post-processing}.

\subsubsection{Overview of our analysis}
In the first part of our analysis, we sum up $P(\alpha_r)$ in a small $B$-neighborhood around the optimal frequency $j_0(z)$ for the peak with index~$z$ for $z \in [0, r) \inset \mathbb Z$.
Specifically, for $B \in [1, B_{\max}) \inset \mathbb Z$ where $B_{\max} = (2^{m+\ell}/r - 1)/2$, we lower-bound the sum
\begin{align*}
 S(z)
 =
 \sum_{t \, = \, -B}^{B} P(\{ r(j_0(z) + t) \}_{2^{m+\ell}})
 =
 \sum_{t \, = \, -B}^{B} P(\alpha_0(z) + rt)
\end{align*}
where we have used that for $\alpha_0(z) \in (-r/2, r/2]$ and $t \in [-B, B] \inset \mathbb Z$, it holds that
\begin{align}
  \{ r(j_0(z) + t) \}_{2^{m+\ell}}
  &=
  \{ \alpha_0(z) + rt \}_{2^{m+\ell}} \notag \\
  &=
  \alpha_0(z) + rt \in [-2^{m+\ell-1}, 2^{m+\ell-1}), \label{eq:requirement-met-by-Bmax}
\end{align}
enabling us to eliminate the modular reductions.

To this end, in Sect.~\ref{sec:approximation}, we first introduce an error-bounded approximation $\widetilde{P}(\alpha_r)$ of $P(\alpha_r)$.
In Sect.~\ref{sec:uniform}, we then use $\widetilde{P}(\alpha_r)$ to lower-bound $S(z)$.
By Thm.~\ref{th:P-a0-universal},
\begin{align*}
  S(z)
  \ge
  \frac{1}{r} \left(1 - \frac{1}{\pi^2} \left( \frac{2}{B} + \frac{1}{B^2} + \frac{1}{3B^3} \right) \right)
  -
  \frac{\pi^2 (2B+1)}{2^{m+\ell}}
\end{align*}
independent of~$z$.
This implies that if we try to solve not only the frequency~$j$ observed but also $j \pm 1$, $j \pm 2$, $\ldots$, $j \pm B$ for $z/r$, then with probability
\begin{align*}
  1 - \frac{1}{\pi^2} \left( \frac{2}{B} + \frac{1}{B^2} + \frac{1}{3B^3} \right)
  -
  \frac{\pi^2 r (2B+1)}{2^{m+\ell}}
\end{align*}
we will solve $j_0(z)$ for $z/r$, where critically~$z$ is uniformly distributed on $[0, r) \inset \mathbb Z$.

In the second part of our analysis, we first explain in Sect.~\ref{sec:find-order} that we will find $z/r$ --- and hence $\tilde r = r / d$ where $d = \gcd(r, z)$ --- if we pick~$\ell$ such that $2^{m+\ell} > r^2$ and solve $j_0(z)$ for $z/r$ by expanding $j_0(z) / 2^{m+\ell}$ in a continued fraction.

As an alternative approach, we furthermore show in Sect.~\ref{sec:find-order} and App.~\ref{appendix:lattice-based-post-processing} that we may instead pick $\ell = m - \Delta$ for some small $\Delta$ and use lattice-based techniques to solve~$j_0(z)$ for~$\tilde r$ at the expense of enumerating at most $\deltaexpr$ lattice vectors.

Next, we lower-bound the probability of~$d$ being $cm$-smooth\footnote{For the definition of $cm$-smooth used in this work, please see Sect.~\ref{sec:cm-smooth}.}, and we show that~$d$ and hence~$r$ can be efficiently recovered from~$\tilde r$ when~$d$ is $cm$-smooth.
We give algorithms for recovering either a multiple~$r'$ of~$r$ (see Alg.~\ref{alg:recover-multiple-of-r}), or~$r$ (see Alg.~\ref{alg:recover-r}--\ref{alg:recover-r-tree}), and for filtering the candidates for~$\tilde r$ generated in the post-processing (see Alg.~\ref{alg:filter-tilde-r}).

Finally, we wrap up the analysis in our main theorem, Thm.~\ref{th:main-theorem}:
Specifically, we lower-bound the probability of recovering~$r$ in a single run of the quantum part of the order-finding algorithm as a function of~$B$ and an additional parameter~$c$.
This when using either continued fractions-based or lattice-based post-processing.
In Cor.~\ref{cor:main-theorem-lattice} to Thm.~\ref{th:main-theorem} we give an analogous bound that holds when using lattice-based post-processing and enumerating a bounded number of lattice vectors.

In Cor.~\ref{cor:main-theorem-asymptotic} to Thm.~\ref{th:main-theorem}, we analyze the asymptotic behavior of the bound:
In particular, we show that for $B = m = \ordo(\poly(\log r))$ and $c = 1$, the success prob\-ability tends to one in the limit as~$r$ tends to infinity.
For these choices of parameters, the algorithm as a whole executes in polynomial time.\footnote{Note that a bound $m = \ordo(\poly(\log r))$ such that $r < 2^m$ must be known.}

Already for moderate $m = 128$, a high success probability exceeding e.g.~$1 - 10^{-4}$ can be guaranteed, see Tab.~\ref{tab:probability} where the bound is tabulated.\footnote{We could pick an even smaller value of~$m$, and still achieve a high success probability, but as~$m$ decreases below $128$ the order-finding problem starts to become classically tractable.}

\subsubsection{On the relation to our recent work on factoring completely}
In~\cite{completely}, it was recently shown that given the order~$r$ of a single element~$g$ selected uniformly at random from~$\mathbb Z_N^*$, the complete factorization of~$N$ may be recovered in classical polynomial time with very high probability of success.

This served as our motivation for proving in this paper that a single run of the quantum part of the order-finding algorithm is sufficient to find~$r$ given~$g$ with high probability --- and hence, by extension via~\cite{completely}, the complete factorization of~$N$.

In Cor.~\ref{cor:main-theorem-factor} and Cor.~\ref{cor:main-theorem-factor-lattice} in Sect.~\ref{sec:notes-on-factoring-via-order-finding}, we give lower bounds on the probability of completely factoring any integer~$N$ in a single order-finding run.
This when account\-ing for the probability of the order-finding algorithm failing to recover~$r$.

Furthermore, in Cor.~\ref{cor:main-theorem-asymptotic-factor} we show that the probability of completely factoring~$N$ in a single run tends to one in the limit as~$N$ tends to infinity.

\section{Approximating $P(\alpha_r)$ by $\widetilde{P}(\alpha_r)$}
\label{sec:approximation}
In this section, as a precursor to the main result, we upper-bound the error when approximating~$P(\alpha_r)$ by~$\widetilde{P}(\alpha_r)$, where
\begin{align*}
  \widetilde{P}(\alpha_r)
  =
  \frac{r}{2^{2(m+\ell)}}
  \frac{2(1 - \cos(2 \pi \alpha_r / r))}{(2 \pi \alpha_r / 2^{m+\ell})^2}.
\end{align*}

We derive the bound in two steps, by first upper-bounding the error when approx\-imating $P(\alpha_r)$ by $T(\alpha_r)$, and by then upper-bounding the error when approx\-imating $T(\alpha_r)$ by $\widetilde{P}(\alpha_r)$, where
\begin{align*}
  T(\alpha_r)
  =
  \frac{r}{2^{2(m+\ell)}}
  \frac{1 - \cos(2 \pi \alpha_r / r)}{1 - \cos(2 \pi \alpha_r / 2^{m+\ell})}.
\end{align*}

\subsection{Supporting claims}
\label{sec:approximations-supporting-claims}
To simplify our analysis, we introduce a few supporting claims in this section.
For the proofs of these claims, the reader is referred to App.~\ref{appendix:proofs-approximating}.

\begin{restatable}{claim}{mvtclaim}
  \label{claim:mvt}
  For any $u, v \in \mathbb R$, it holds that
  \begin{align*}
    |\, \cos(u) - \cos(v) \,| \le |\, u - v \,| \cdot \max\left( |\,u\,|, |\,v\,| \right).
  \end{align*}
\end{restatable}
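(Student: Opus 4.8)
The plan is to reduce the inequality to the standard mean value theorem for the cosine function. First I would dispense with the trivial case $u = v$, where both sides are zero. For $u \neq v$, I would apply the mean value theorem to $\cos$ on the closed interval with endpoints $u$ and $v$: there exists a point $\xi$ strictly between $u$ and $v$ such that $\cos(u) - \cos(v) = -\sin(\xi)\,(u - v)$. Taking absolute values gives $|\cos(u) - \cos(v)| = |\sin(\xi)| \cdot |u - v|$.

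The remaining step is to bound $|\sin(\xi)|$ by $\max(|u|, |v|)$. Since $\xi$ lies strictly between $u$ and $v$, we have $|\xi| \le \max(|u|, |v|)$ (the interval with endpoints $u$ and $v$ is contained in $[-\max(|u|,|v|), \max(|u|,|v|)]$). I would then invoke the elementary bound $|\sin(x)| \le |x|$, valid for all real $x$, to conclude $|\sin(\xi)| \le |\xi| \le \max(|u|, |v|)$. Chaining these together yields
\begin{align*}
  |\cos(u) - \cos(v)| = |\sin(\xi)| \cdot |u - v| \le |u - v| \cdot \max(|u|, |v|),
\end{align*}
which is exactly the claimed inequality.

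There is essentially no obstacle here: the statement is a routine consequence of the mean value theorem combined with $|\sin x| \le |x|$. The only point requiring a modicum of care is the observation that the intermediate point $\xi$ satisfies $|\xi| \le \max(|u|, |v|)$ regardless of the signs of $u$ and $v$, which is immediate once one notes that the segment between $u$ and $v$ cannot leave the symmetric interval of radius $\max(|u|, |v|)$. One could alternatively phrase the whole argument via the integral representation $\cos(u) - \cos(v) = -\int_v^u \sin(x)\,dx$ and bound the integrand pointwise by $|x| \le \max(|u|,|v|)$, which avoids even naming an intermediate point; either route is a few lines.
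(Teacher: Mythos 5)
Your proposal is correct and follows essentially the same route as the paper: apply the mean value theorem to $\cos$ to get $|\cos(u)-\cos(v)| = |\sin(\xi)|\,|u-v|$ with $\xi$ between $u$ and $v$, then bound $|\sin(\xi)| \le |\xi| \le \max(|u|,|v|)$ (the paper derives $|\sin t| \le |t|$ by a second application of the MVT, whereas you simply cite it as elementary, which is fine). No gaps.
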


\begin{restatable}{claim}{steponeclaimone}
\label{claim:step1-claim1}
It holds that
\begin{align*}
\left|\, \frac{2 \pi \alpha_r L}{2^{m+\ell}} \,\right| \le
\left|\, \frac{2 \pi \alpha_r (L + 1)}{2^{m+\ell}} \,\right| \le
\left|\, \frac{3 \pi \alpha_r}{r} \,\right|.
\end{align*}
\end{restatable}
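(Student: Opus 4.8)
The plan is to treat both inequalities as statements about the integer $L = \floor{2^{m+\ell}/r}$, with the factor $|\, 2\pi\alpha_r\, |$ (resp.\ $|\, \pi\alpha_r\, |$) playing the role of a common nonnegative multiplier that can be divided out or, where it vanishes, handled by inspection.

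For the left inequality I would simply observe that $L \ge 0$ — in fact $L \ge 1$, since $r < 2^m \le 2^{m+\ell}$ forces $2^{m+\ell}/r > 1$ — so that $0 \le L \le L+1$, and multiplying this chain through by the nonnegative quantity $|\, 2\pi\alpha_r/2^{m+\ell}\, |$ preserves the ordering. For the right inequality I would first dispose of the degenerate case $\alpha_r = 0$, in which both sides are zero; for $\alpha_r \neq 0$, dividing both sides by $\pi |\, \alpha_r\, | > 0$ reduces the claim to $2(L+1)/2^{m+\ell} \le 3/r$, i.e.\ to $2r(L+1) \le 3 \cdot 2^{m+\ell}$. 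From the definition of the floor we have $L + 1 \le 2^{m+\ell}/r + 1$, whence $2r(L+1) \le 2 \cdot 2^{m+\ell} + 2r$, so it suffices to show $2r \le 2^{m+\ell}$. This is immediate because $\ell$ is a positive integer and $r < 2^m$, giving $2r < 2^{m+1} \le 2^{m+\ell}$.

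The argument is entirely routine; there is no real obstacle. The only points that need a modicum of care are the short case split on whether $\alpha_r$ vanishes (needed to justify dividing by $|\, \alpha_r\, |$) and noting that the hypothesis $\ell \ge 1$ is exactly what lets the leftover additive term $2r$ be absorbed into $2^{m+\ell}$.
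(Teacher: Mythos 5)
Your proposal is correct and follows essentially the same route as the paper: both reduce the right-hand inequality to the bound $(L+1)/2^{m+\ell} \le 1/r + 1/(2r) = 3/(2r)$, using that $r < 2^m$ and $\ell \ge 1$ give $2r < 2^{m+\ell}$ (the paper tracks the exact fractional part $\delta = \beta/r$ where you just use $L \le 2^{m+\ell}/r$, an inessential difference). No gaps.
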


\begin{restatable}{claim}{steponeclaimtwo}
\label{claim:step1-claim2}
It holds that
\begin{align*}
  \left|\, \frac{L+1}{2^{m+\ell}} - \frac{1}{r} \,\right|
  \le
  \frac{1}{2^{m+\ell}}
  \quad \text { and } \quad
  \left|\, \frac{L}{2^{m+\ell}} - \frac{1}{r} \,\right|
  <
  \frac{1}{2^{m+\ell}}.
\end{align*}
\end{restatable}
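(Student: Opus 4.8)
\medskip
\noindent\emph{Proof idea.}
The plan is to reduce both inequalities to the elementary bound $0 \le \beta \le r-1$ on $\beta = 2^{m+\ell} \bmod r$. Writing $2^{m+\ell} = Lr + \beta$ --- which is just the definition of $L = \floor{2^{m+\ell}/r}$ together with $\beta$ --- and bringing each difference to the common denominator $2^{m+\ell} r$, the numerators collapse to simple expressions in $\beta$.

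Concretely, I would first compute
\[
  \left| \frac{L+1}{2^{m+\ell}} - \frac{1}{r} \right|
  = \frac{\left| (L+1)r - 2^{m+\ell} \right|}{2^{m+\ell}\, r}
  = \frac{\left| r - \beta \right|}{2^{m+\ell}\, r}
  = \frac{r - \beta}{2^{m+\ell}\, r},
\]
where the last equality uses $\beta \le r - 1 < r$. Since $1 \le r - \beta \le r$, this quantity is at most $1/2^{m+\ell}$, which is the first claimed bound. For the second, I would analogously compute
\[
  \left| \frac{L}{2^{m+\ell}} - \frac{1}{r} \right|
  = \frac{\left| Lr - 2^{m+\ell} \right|}{2^{m+\ell}\, r}
  = \frac{\beta}{2^{m+\ell}\, r},
\]
and then invoke $\beta \le r - 1 < r$ to conclude that this is strictly less than $1/2^{m+\ell}$.

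There is no real obstacle here; the only point worth flagging is the asymmetry between the two bounds. The first inequality is non-strict precisely because equality is attained when $\beta = 0$, i.e.\ exactly when $r$ is a power of two (the case in which the probability mass sits entirely on the $r$ optimal frequencies). The second inequality is strict because $\beta$ always lies in $\{0, 1, \ldots, r-1\}$ and is hence bounded away from $r$, irrespective of whether $r$ is a power of two.
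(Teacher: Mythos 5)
Your proof is correct and takes essentially the same route as the paper: both reduce the claim to the identity $2^{m+\ell} = Lr + \beta$ with $0 \le \beta < r$ (the paper phrases it via $\delta = \beta/r \in [0,1)$, so that $L/2^{m+\ell} - 1/r = -\delta/2^{m+\ell}$, which is your computation up to dividing the numerator by $r$). Your closing remark correctly identifies where the non-strict inequality is tight, namely $\beta = 0$, i.e.\ $r$ a power of two.
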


\begin{restatable}{claim}{steponeclaimcos}
\label{claim:cos}
For any $\phi \in [-\pi, \pi]$, it holds that
\begin{align*}
  \frac{2 \phi^2}{\pi^2} \le 1 - \cos \phi \le \frac{\phi^2}{2}.
\end{align*}
\end{restatable}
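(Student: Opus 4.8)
The plan is to reduce both bounds to elementary properties of $\sin$ via the half-angle identity $1 - \cos\phi = 2\sin^2(\phi/2)$. Setting $x = \phi/2$, so that $x \in [-\pi/2, \pi/2]$ under the hypothesis, the claim becomes equivalent to $\tfrac{2}{\pi}|x| \le |\sin x| \le |x|$ on that interval, after squaring and multiplying through by $2$.

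For the upper bound I would use the standard fact that $|\sin x| \le |x|$ for all real $x$ (which follows e.g.\ from $\sin x = \int_0^x \cos t \,\dd t$ together with $|\cos t| \le 1$). Squaring gives $\sin^2 x \le x^2$, hence $1 - \cos\phi = 2\sin^2(\phi/2) \le 2(\phi/2)^2 = \phi^2/2$, with no restriction on $\phi$ needed.

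For the lower bound the key ingredient is Jordan's inequality, $|\sin x| \ge \tfrac{2}{\pi}|x|$ for $x \in [-\pi/2, \pi/2]$. Since $\sin$ is odd it suffices to prove $\sin x \ge \tfrac{2}{\pi} x$ for $x \in [0, \pi/2]$, and here I would argue by concavity: on $[0, \pi/2]$ one has $(\sin x)'' = -\sin x \le 0$, so the graph of $\sin$ lies on or above the chord through the endpoints $(0,0)$ and $(\pi/2, 1)$, which is precisely the line $y = \tfrac{2}{\pi} x$. Squaring this and multiplying by $2$, then substituting back $x = \phi/2$, yields $1 - \cos\phi \ge \tfrac{2\phi^2}{\pi^2}$; this is the only place the hypothesis $\phi \in [-\pi,\pi]$ is used, namely to keep $x = \phi/2$ inside the range where Jordan's inequality applies. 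Note that equality holds at $\phi = 0$ and at $\phi = \pm\pi$, so the constant $2/\pi^2$ is sharp and cannot be improved.

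I do not anticipate any genuine obstacle; the only step that warrants a word of justification is Jordan's inequality, and the concavity/chord argument above handles it self-containedly. One could instead invoke that $x \mapsto (\sin x)/x$ is nonincreasing on $(0, \pi/2]$ and evaluate at $x = \pi/2$, but establishing that monotonicity takes essentially the same work, so the direct concavity argument is preferable.
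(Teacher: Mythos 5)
Your proof is correct, and it takes a genuinely different route from the paper's. The paper proves both inequalities directly from the alternating Taylor series of $\cos\phi$: the upper bound falls out immediately from $\tfrac{\phi^2}{2} - \tfrac{\phi^4}{4!} \le 1 - \cos\phi \le \tfrac{\phi^2}{2}$, but the lower bound then requires a two-regime argument --- first showing $\tfrac{2\phi^2}{\pi^2} \le \tfrac{\phi^2}{2} - \tfrac{\phi^4}{4!}$ holds only up to $|\phi| = \varphi = \tfrac{2}{\pi}\sqrt{3(\pi^2-4)}$, and then a separate sign-of-derivative argument to cover $\varphi \le |\phi| \le \pi$, where equality is attained at the endpoint. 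Your reduction via $1 - \cos\phi = 2\sin^2(\phi/2)$ to the pair $\tfrac{2}{\pi}|x| \le |\sin x| \le |x|$ on $[-\pi/2,\pi/2]$, with Jordan's inequality handled by the concavity-above-the-chord argument, avoids the case split entirely and is shorter and cleaner; it also makes the sharpness at $\phi = \pm\pi$ transparent, since that is exactly where the chord meets the sine curve. What the paper's approach buys in exchange is that the same series machinery (the alternating-series error bound) is reused verbatim in the companion Claim~\ref{claim:cos2} bounding $|(1-\cos\phi) - \phi^2/2|$, so the two claims share their setup; your argument would not extend to that second claim without reverting to the expansion.
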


\begin{restatable}{claim}{steponeclaimcostwo}
\label{claim:cos2}
For any $\phi \in [-\pi, \pi]$, it holds that
\begin{align*}
  \left|\, (1 - \cos \phi) - \frac{\phi^2}{2} \,\right| \le \frac{\phi^4}{4!}.
\end{align*}
\end{restatable}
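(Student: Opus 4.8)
The aim is to bound the remainder after the degree-two Taylor term of $\cos$. The most direct route is to write $\cos\phi$ using its alternating series $\cos\phi = 1 - \phi^2/2! + \phi^4/4! - \phi^6/6! + \dotsb$, so that
\begin{align*}
  (1 - \cos\phi) - \frac{\phi^2}{2}
  = -\frac{\phi^4}{4!} + \frac{\phi^6}{6!} - \frac{\phi^8}{8!} + \dotsb
  = -\sum_{k \ge 2} \frac{(-1)^k \phi^{2k}}{(2k)!}.
\end{align*}
I would then invoke the standard fact that for an alternating series whose terms decrease in absolute value, the sum is bounded in absolute value by the first term. Concretely, the tail $\sum_{k \ge 2} (-1)^k \phi^{2k}/(2k)!$ has terms $|\phi|^{2k}/(2k)!$ that are non-increasing in $k$ precisely when $\phi^2 \le (2k+1)(2k+2)$ for all $k \ge 2$, i.e.\ when $\phi^2 \le 30$, which certainly holds for $\phi \in [-\pi, \pi]$ since $\pi^2 < 30$. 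Hence the absolute value of the tail is at most its leading term $\phi^4/4!$, which is exactly the claimed bound.

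An alternative, if one prefers to avoid series-convergence bookkeeping, is to use Taylor's theorem with the Lagrange form of the remainder: writing $f(\phi) = 1 - \cos\phi$, we have $f(\phi) = \phi^2/2 + f^{(4)}(\xi)\,\phi^4/4!$ for some $\xi$ between $0$ and $\phi$ (the degree-three term vanishes since $f^{(3)}(0) = 0$ and we expand to fourth order), and $f^{(4)}(\xi) = \cos\xi$, so $|f^{(4)}(\xi)| \le 1$ and the remainder is bounded by $\phi^4/4!$. This is cleaner but requires care that the odd-order terms in the expansion vanish; the series argument makes that transparent.

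I expect no real obstacle here: the statement is a routine Taylor-remainder estimate, and the only point needing a line of justification is the monotonicity of the series terms on $[-\pi,\pi]$, which follows from $\pi^2 < 30$. I would present the alternating-series version as the main proof, since it keeps everything elementary and self-contained and mirrors the style of the neighbouring claims (such as Claim~\ref{claim:cos}, which similarly pins $\cos$ between quadratic bounds on $[-\pi,\pi]$).
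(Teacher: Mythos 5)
Your main argument is exactly the paper's: expand $\cos\phi$ as its alternating series, note that the terms $\phi^{2k}/(2k)!$ decrease on $[-\pi,\pi]$, and bound the remainder after the $\phi^2/2$ term by the leading tail term $\phi^4/4!$. The proposal is correct and matches the paper's proof; the Lagrange-remainder alternative you mention is a fine variant but not needed.
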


\subsection{Approximating $P(\alpha_r)$ by $T(\alpha_r)$}
\begin{lemma}
  \label{lemma:step1}
  The error when approximating $P(\alpha_r)$ by $T(\alpha_r)$ is bounded by
  \begin{align*}
    \left|\, P(\alpha_r) - T(\alpha_r) \,\right| < \frac{1}{2^{m+\ell}} \cdot \frac{3 \pi^2}{4}
    \quad \text{ when } \quad
    \alpha_r \neq 0.
  \end{align*}
\end{lemma}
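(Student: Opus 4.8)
The plan is to directly compare the closed-form expressions for $P(\alpha_r)$ in~\refeq{eq:P-non-zero} and $T(\alpha_r)$ term by term. Both have common denominator $2^{2(m+\ell)}(1 - \cos(2\pi\alpha_r/2^{m+\ell}))$, so writing
\begin{align*}
  P(\alpha_r) - T(\alpha_r)
  =
  \frac{\beta\bigl(1 - \cos(2\pi\alpha_r(L+1)/2^{m+\ell})\bigr) + (r-\beta)\bigl(1 - \cos(2\pi\alpha_r L/2^{m+\ell})\bigr) - r\bigl(1 - \cos(2\pi\alpha_r/r)\bigr)}{2^{2(m+\ell)}\bigl(1 - \cos(2\pi\alpha_r/2^{m+\ell})\bigr)},
\end{align*}
the idea is to replace each of the first two cosine terms in the numerator by $1 - \cos(2\pi\alpha_r/r)$, so that the $\beta$ and $r-\beta$ contributions collapse into exactly the $-r(1 - \cos(2\pi\alpha_r/r))$ term and cancel. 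What remains is controlled by the approximation errors incurred in those two replacements, each of which I would bound using Claim~\ref{claim:mvt}.

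Concretely, apply Claim~\ref{claim:mvt} with $u = 2\pi\alpha_r(L+1)/2^{m+\ell}$ and $v = 2\pi\alpha_r/r$ (and similarly with $L$ in place of $L+1$). For the difference $|u - v|$ I would invoke Claim~\ref{claim:step1-claim2}, which gives $|u - v| = 2\pi|\alpha_r|\,|(L+1)/2^{m+\ell} - 1/r| \le 2\pi|\alpha_r|/2^{m+\ell}$ (and strictly less with $L$). For the $\max(|u|,|v|)$ factor I would use Claim~\ref{claim:step1-claim1}, which bounds both $|u|$ and $|v|$ by $|3\pi\alpha_r/r|$. Multiplying these, the numerator's error is at most $\beta \cdot (2\pi|\alpha_r|/2^{m+\ell})(3\pi|\alpha_r|/r) + (r-\beta)\cdot(2\pi|\alpha_r|/2^{m+\ell})(3\pi|\alpha_r|/r) = r \cdot 6\pi^2\alpha_r^2/(r\,2^{m+\ell}) = 6\pi^2\alpha_r^2/2^{m+\ell}$. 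For the denominator, I would lower-bound $1 - \cos(2\pi\alpha_r/2^{m+\ell})$ using the left inequality of Claim~\ref{claim:cos} (valid since $|2\pi\alpha_r/2^{m+\ell}| \le \pi$ as $|\alpha_r| < 2^{m+\ell-1}$), obtaining $1 - \cos(2\pi\alpha_r/2^{m+\ell}) \ge (2/\pi^2)(2\pi\alpha_r/2^{m+\ell})^2 = 8\alpha_r^2/2^{2(m+\ell)}$.

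Combining, $|P(\alpha_r) - T(\alpha_r)| \le \dfrac{6\pi^2\alpha_r^2/2^{m+\ell}}{2^{2(m+\ell)} \cdot 8\alpha_r^2/2^{2(m+\ell)}} = \dfrac{6\pi^2}{8 \cdot 2^{m+\ell}} = \dfrac{1}{2^{m+\ell}}\cdot\dfrac{3\pi^2}{4}$, which is the claimed bound; the strict inequality comes from the strict version of Claim~\ref{claim:step1-claim2} for the $L$ term (or more carefully, from the fact that not all bounds are simultaneously tight). The main obstacle I anticipate is bookkeeping rather than conceptual: making sure the $\beta$ and $r-\beta$ weights are handled uniformly so that the $\alpha_r^2$ factors cancel cleanly between numerator and denominator, and double-checking that the arguments of all invoked claims lie in the required ranges (in particular that $|2\pi\alpha_r/2^{m+\ell}| \le \pi$, which holds because $\alpha_r \in [-2^{m+\ell-1}, 2^{m+\ell-1})$). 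A minor subtlety is justifying the strict inequality in the conclusion, which I would obtain by noting that equality in Claim~\ref{claim:step1-claim2} for the $L$ term fails, or alternatively by absorbing a small slack.
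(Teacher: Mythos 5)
Your proposal is correct and follows essentially the same route as the paper's proof: both bound the two numerator replacements via Claim~\ref{claim:mvt} together with Claims~\ref{claim:step1-claim1}--\ref{claim:step1-claim2}, lower-bound $1-\cos(2\pi\alpha_r/2^{m+\ell})$ via Claim~\ref{claim:cos}, and let the $\beta$ and $r-\beta$ weights sum to $r$ so the $\alpha_r^2$ factors cancel. The only difference is organizational (you combine over a common denominator at the outset, whereas the paper bounds the two ratio differences separately and then recombines), and your handling of the strict inequality via the strict bound for the $L$ term matches the paper's.
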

\begin{proof}
First verify that
\begin{align}
       &\left|\, (1 - \cos(2 \pi \alpha_r (L + 1) / 2^{m+\ell})) - (1 - \cos(2 \pi \alpha_r / r)) \,\right| \notag \\
  = \, &\left|\, \cos \left( \frac{2 \pi \alpha_r (L + 1)}{2^{m+\ell}} \right) - \cos \left( \frac{2 \pi \alpha_r}{r} \right) \,\right| \notag \\
\le \, &\left|\, \frac{2 \pi \alpha_r(L + 1)}{2^{m+\ell}} - \frac{2 \pi \alpha_r}{r} \,\right| \cdot \left|\, \frac{3 \pi \alpha_r}{r} \,\right| \label{lemma:step1-approx1} \\
  = \, &\frac{6 \pi^2 \alpha^2_r}{r} \cdot \left|\, \frac{L + 1}{2^{m+\ell}} - \frac{1}{r} \,\right|
\le \,  \frac{6 \pi^2 \alpha^2_r}{2^{m+\ell} r} \label{lemma:step1-approx2}
\end{align}
where we have used Claim~\ref{claim:mvt} and~\ref{claim:step1-claim1} in~\refeq{lemma:step1-approx1}, and Claim~\ref{claim:step1-claim2} in~\refeq{lemma:step1-approx2}.
Analogously,
\begin{align}
  &\left|\, (1 - \cos(2 \pi \alpha_r L / 2^{m+\ell})) - (1 - \cos(2 \pi \alpha_r / r)) \,\right| \notag \\
= \, &\left|\, \cos \left( \frac{2 \pi \alpha_r L}{2^{m+\ell}} \right) - \cos \left( \frac{2 \pi \alpha_r}{r} \right) \,\right| \notag \\
\le \, &\left|\, \frac{2 \pi \alpha_r L}{2^{m+\ell}} - \frac{2 \pi \alpha_r}{r} \,\right| \cdot \left|\, \frac{3 \pi \alpha_r}{r} \,\right| \label{lemma:step1-approx1-2} \\
= \, &\frac{6 \pi^2 \alpha^2_r}{r} \cdot \left|\, \frac{L}{2^{m+\ell}} - \frac{1}{r} \,\right|
< \frac{6 \pi^2 \alpha^2_r}{2^{m+\ell} r} \label{lemma:step1-approx2-2}
\end{align}
where we have used Claim~\ref{claim:mvt} and~\ref{claim:step1-claim1} in~\refeq{lemma:step1-approx1-2}, and Claim~\ref{claim:step1-claim2} in~\refeq{lemma:step1-approx2-2}.

It now follows from~\refeq{lemma:step1-approx2} that
\begin{align}
  A
  &=
  \left|\,
  \frac{1 - \cos(2 \pi \alpha_r (L + 1) / 2^{m+\ell})}{1 - \cos(2 \pi \alpha_r / 2^{m+\ell})}
  -
  \frac{1 - \cos(2 \pi \alpha_r / r)}{1 - \cos(2 \pi \alpha_r / 2^{m+\ell})}
  \,\right| \notag \\
  &\le \,
  \frac{6 \pi^2 \alpha_r^2}{2^{m+\ell} r} \cdot
  \frac{1}{\left|\, 1 - \cos(2 \pi \alpha_r / 2^{m+\ell}) \,\right|} \notag \\
  &\le
  \frac{6 \pi^2 \alpha_r^2}{2^{m+\ell} r} \frac{\pi^2}{2 (2 \pi \alpha_r / 2^{m+\ell})^2}
  =
  \frac{2^{m+\ell}}{r} \cdot \frac{3 \pi^2}{4}, \label{lemma:step1-approx3}
\end{align}
where we have used Claim~\ref{claim:cos} in~\refeq{lemma:step1-approx3}.
Analogously, it follows from~\refeq{lemma:step1-approx2-2} that
\begin{align}
  B
  &=
  \left|\,
  \frac{1 - \cos(2 \pi \alpha_r L / 2^{m+\ell})}{1 - \cos(2 \pi \alpha_r / 2^{m+\ell})}
  -
  \frac{1 - \cos(2 \pi \alpha_r / r)}{1 - \cos(2 \pi \alpha_r / 2^{m+\ell})}
  \,\right| <
  \frac{2^{m+\ell}}{r} \cdot \frac{3 \pi^2}{4}. \label{lemma:step1-approx3-2}
\end{align}

Finally, it follows from~\refeq{eq:P-non-zero}, ~\refeq{lemma:step1-approx3} and~\refeq{lemma:step1-approx3-2} that
\begin{align*}
  \left|\, P(\alpha_r) - T(\alpha_r) \,\right|
  &=
  \frac{\beta}{2^{2(m+\ell)}} A + \frac{r-\beta}{2^{2(m+\ell)}} B
  <
  \frac{r}{2^{2(m+\ell)}} \cdot \frac{2^{m+\ell}}{r} \frac{3 \pi^2}{4}
  =
  \frac{1}{2^{m+\ell}} \cdot \frac{3 \pi^2}{4}
\end{align*}
and so the lemma follows.
\end{proof}

\subsection{Approximating $T(\alpha_r)$ by $\widetilde{P}(\alpha_r)$}
\begin{lemma}
  \label{lemma:step2}
  The error when approximating $T(\alpha_r)$ by $\widetilde{P}(\alpha_r)$ is bounded by
  \begin{align*}
    \left|\, T(\alpha_r) - \widetilde{P}(\alpha_r) \,\right| \le \frac{r}{2^{2(m+\ell)}}\frac{\pi^2}{12}
    \quad \text{ when } \quad
    \alpha_r \neq 0.
  \end{align*}
\end{lemma}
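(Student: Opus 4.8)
The plan is to isolate what actually differs between the two expressions. Both $T(\alpha_r)$ and $\widetilde{P}(\alpha_r)$ carry the common prefactor $r/2^{2(m+\ell)}$ and the common numerator $1 - \cos(2\pi\alpha_r/r)$, so I would first write
\begin{align*}
  T(\alpha_r) - \widetilde{P}(\alpha_r)
  = \frac{r}{2^{2(m+\ell)}}\,(1 - \cos(2\pi\alpha_r/r))
    \left( \frac{1}{1-\cos\phi} - \frac{2}{\phi^2} \right),
  \qquad \phi := \frac{2\pi\alpha_r}{2^{m+\ell}}.
\end{align*}
Since $\alpha_r$ is a nonzero integer lying in $[-2^{m+\ell-1}, 2^{m+\ell-1})$, we have $\phi \in [-\pi,\pi]$ with $\phi \neq 0$, which is precisely the regime in which Claims~\ref{claim:cos} and~\ref{claim:cos2} apply.

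Next I would put the two reciprocals over a common denominator,
\begin{align*}
  \left|\, \frac{1}{1-\cos\phi} - \frac{2}{\phi^2} \,\right|
  = \frac{\left|\, \phi^2 - 2(1-\cos\phi) \,\right|}{\phi^2\,(1-\cos\phi)},
\end{align*}
and bound the two pieces separately. The numerator is controlled by Claim~\ref{claim:cos2}, which gives $|(1-\cos\phi) - \phi^2/2| \le \phi^4/4!$ and hence $|\phi^2 - 2(1-\cos\phi)| \le \phi^4/12$. The denominator is bounded below by the left inequality of Claim~\ref{claim:cos}, namely $1-\cos\phi \ge 2\phi^2/\pi^2 > 0$, so $\phi^2(1-\cos\phi) \ge 2\phi^4/\pi^2$. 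Dividing, the factor $\phi^4$ cancels and we are left with $|1/(1-\cos\phi) - 2/\phi^2| \le (\phi^4/12)/(2\phi^4/\pi^2) = \pi^2/24$.

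Finally I would bound the remaining factor by the trivial estimate $|1 - \cos(2\pi\alpha_r/r)| \le 2$, which holds for every real argument with no restriction needed on $2\pi\alpha_r/r$, and multiply everything together:
\begin{align*}
  \left|\, T(\alpha_r) - \widetilde{P}(\alpha_r) \,\right|
  \le \frac{r}{2^{2(m+\ell)}} \cdot 2 \cdot \frac{\pi^2}{24}
  = \frac{r}{2^{2(m+\ell)}} \cdot \frac{\pi^2}{12},
\end{align*}
which is the claimed bound. There is no genuine obstacle here; the only points requiring care are (i) checking that $\phi \in [-\pi,\pi]\setminus\{0\}$ so the supporting claims are applicable, and (ii) confirming that the cancellation of $\phi^4$ is legitimate, i.e.\ that $\phi^2(1-\cos\phi)$ is strictly positive, which is guaranteed by $\phi \neq 0$. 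The constant comes out cleanly because $2/4! = 1/12$, and the $\pi^2$ produced by Claim~\ref{claim:cos} recombines with the final factor of $2$ to yield exactly $\pi^2/12$.
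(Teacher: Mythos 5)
Your proposal is correct and follows essentially the same route as the paper's proof: factor out the common prefactor and the factor $1-\cos(2\pi\alpha_r/r)$ (bounded by $2$), then control $\bigl|\,1/(1-\cos\phi)-2/\phi^2\,\bigr|$ by combining Claim~\ref{claim:cos2} for the numerator with the lower bound of Claim~\ref{claim:cos} for the denominator, yielding the constant $\pi^2/12$. The only difference is cosmetic bookkeeping of where the factor of $2$ is absorbed; your explicit check that $\phi\in[-\pi,\pi]\setminus\{0\}$ is a welcome addition.
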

\begin{proof}
To condense the presentation, let us introduce $T'(\alpha_r)$ and $\widetilde{P}'(\alpha_r)$, where
\begin{align*}
  T(\alpha_r) &= \frac{r}{2^{2(m+\ell)}} \underbrace{\frac{1 - \cos(2 \pi \alpha_r / r)}{1 - \cos(2 \pi \alpha_r / 2^{m+\ell})}}_{T'(\alpha_r)},
  &
  \widetilde{P}(\alpha_r) &= \frac{r}{2^{2(m+\ell)}} \underbrace{\frac{2(1 - \cos(2 \pi \alpha_r / r))}{(2 \pi \alpha_r / 2^{m+\ell})^2}}_{\widetilde{P}'(\alpha_r)}.
\end{align*}

The lemma then follows from the fact that
\begin{align}
  \left|\, T'(\alpha_r) - \widetilde{P}'(\alpha_r) \,\right|
  &=
  \left|\,
  \frac{1 - \cos(2 \pi \alpha_r / r)}{1 - \cos(2 \pi \alpha_r / 2^{m+\ell})}
  -
  \frac{2(1 - \cos(2 \pi \alpha_r / r))}{(2 \pi \alpha_r / 2^{m+\ell})^2}
  \,\right| \notag \\
  &=
  \underbrace{\left|\, 1 - \cos(2 \pi \alpha_r / r) \,\right|}_{\le 2}
  \cdot
  \left|\,
  \frac{1}{1 - \cos \theta_r}
  -
  \frac{2}{\theta_r^2}
  \,\right| \notag \\
  &\le
  2
  \left|\,
  \frac{\theta_r^2 - 2(1 - \cos \theta_r)}{(1 - \cos \theta_r) \theta_r^2}
  \,\right|
  =
  4
  \left|\,
  \frac{\theta_r^2/2 - (1 - \cos \theta_r)}{(1 - \cos \theta_r) \theta_r^2}
  \,\right| \notag \\
  &\le
  4
  \left|\,
  \frac{\theta_r^4 / 4!}{(1 - \cos \theta_r) \theta_r^2}
  \,\right|
  =
  \frac{\theta_r^2}{6}
  \left|\,
  \frac{1}{1 - \cos \theta_r}
  \,\right| \label{lemma:step2-approx1} \\
  &\le
  \frac{\theta_r^2}{6}
  \cdot
  \frac{\pi^2}{2 \theta_r^2}
  =
  \frac{\pi^2}{12} \label{lemma:step2-approx2}
\end{align}
where we have used Claim~\ref{claim:cos2} in~\refeq{lemma:step2-approx1}, and Claim~\ref{claim:cos} in~\refeq{lemma:step2-approx2}.
\end{proof}

\subsection{Approximating $P(\alpha_r)$ by $\widetilde{P}(\alpha_r)$}
\begin{lemma}
\label{lemma:P}
The error when approximating $P(\alpha_r)$ by $\widetilde{P}(\alpha_r)$ is bounded by
\begin{align*}
  |\, P(\alpha_r) - \widetilde{P}(\alpha_r) \,| \le \tilde{\epsilon}
  =
  \frac{\pi^2}{2^{m+\ell}} \left( \frac{3}{4} + \frac{r}{2^{m+\ell}} \frac{1}{12} \right) < \frac{\pi^2}{2^{m+\ell}}
  \quad \text{ when } \quad
  \alpha_r \neq 0.
\end{align*}
\end{lemma}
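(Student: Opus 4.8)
The plan is to obtain Lemma~\ref{lemma:P} by simply chaining the two error bounds already established in Lemma~\ref{lemma:step1} and Lemma~\ref{lemma:step2} via the triangle inequality. That is, for $\alpha_r \neq 0$ I would write
\begin{align*}
  |\, P(\alpha_r) - \widetilde{P}(\alpha_r) \,|
  &\le
  |\, P(\alpha_r) - T(\alpha_r) \,| + |\, T(\alpha_r) - \widetilde{P}(\alpha_r) \,| \\
  &<
  \frac{1}{2^{m+\ell}} \cdot \frac{3 \pi^2}{4} + \frac{r}{2^{2(m+\ell)}} \cdot \frac{\pi^2}{12},
\end{align*}
and then factor out $\pi^2 / 2^{m+\ell}$ to recognize the right-hand side as $\tilde\epsilon = \frac{\pi^2}{2^{m+\ell}}\left(\frac{3}{4} + \frac{r}{2^{m+\ell}}\frac{1}{12}\right)$, exactly as in the statement.

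The only remaining piece is the final strict inequality $\tilde\epsilon < \pi^2 / 2^{m+\ell}$. This reduces to checking that $\frac{3}{4} + \frac{r}{2^{m+\ell}} \cdot \frac{1}{12} < 1$, i.e.\ that $\frac{r}{2^{m+\ell}} < 3$, which holds comfortably: by the standing assumptions of Sect.~\ref{sec:quantum-order-finding} we have $r < 2^m \le 2^{m+\ell}$ (since $\ell \ge 1$), so $r / 2^{m+\ell} < 1 < 3$. Hence $\frac{3}{4} + \frac{1}{12}\cdot\frac{r}{2^{m+\ell}} < \frac{3}{4} + \frac{1}{12} = \frac{5}{6} < 1$, giving the claimed bound.

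There is essentially no obstacle here — the lemma is a bookkeeping corollary of the two preceding lemmas, and the one substantive observation ($r < 2^{m+\ell}$, so the second summand in the parenthesis is negligible) is immediate from the setup. I would just be careful to note that the strict inequality in $|\, P - \widetilde P \,| \le \tilde\epsilon$ is actually strict (inherited from Lemma~\ref{lemma:step1}), but state it as $\le$ since that is the form used downstream, and to flag explicitly that the bound is vacuous/irrelevant when $\alpha_r = 0$, consistent with the hypothesis $\alpha_r \neq 0$ carried through from both input lemmas.

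\begin{proof}
For $\alpha_r \neq 0$, combining Lemma~\ref{lemma:step1} and Lemma~\ref{lemma:step2} via the triangle inequality yields
\begin{align*}
  |\, P(\alpha_r) - \widetilde{P}(\alpha_r) \,|
  &\le
  |\, P(\alpha_r) - T(\alpha_r) \,| + |\, T(\alpha_r) - \widetilde{P}(\alpha_r) \,| \\
  &<
  \frac{1}{2^{m+\ell}} \cdot \frac{3 \pi^2}{4} + \frac{r}{2^{2(m+\ell)}} \cdot \frac{\pi^2}{12}
  =
  \frac{\pi^2}{2^{m+\ell}} \left( \frac{3}{4} + \frac{r}{2^{m+\ell}} \frac{1}{12} \right)
  =
  \tilde\epsilon.
\end{align*}
Furthermore, since $r < 2^m \le 2^{m+\ell}$, it holds that $r / 2^{m+\ell} < 1$, and so
\begin{align*}
  \tilde\epsilon = \frac{\pi^2}{2^{m+\ell}} \left( \frac{3}{4} + \frac{r}{2^{m+\ell}} \frac{1}{12} \right)
  < \frac{\pi^2}{2^{m+\ell}} \left( \frac{3}{4} + \frac{1}{12} \right)
  = \frac{\pi^2}{2^{m+\ell}} \cdot \frac{5}{6}
  < \frac{\pi^2}{2^{m+\ell}},
\end{align*}
and so the lemma follows.
\end{proof}
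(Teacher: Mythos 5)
Your proposal is correct and follows exactly the paper's route: the paper's proof is simply the triangle inequality applied to $P - T + T - \widetilde{P}$ followed by an appeal to Lem.~\ref{lemma:step1} and Lem.~\ref{lemma:step2}. Your version merely spells out the final arithmetic ($r/2^{m+\ell} < 1$, hence $\tilde\epsilon < \pi^2/2^{m+\ell}$) that the paper leaves implicit, which is fine.
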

\begin{proof}
By the triangle inequality, we have that
\begin{align*}
  \tilde{\epsilon} = |\, P(\alpha_r) - \widetilde{P}(\alpha_r) \,|
  &=
    |\, P(\alpha_r) - T(\alpha_r) + T(\alpha_r) - \widetilde{P}(\alpha_r) \,| \\
  &\le
  |\, P(\alpha_r) - T(\alpha_r) \,| + |\, T(\alpha_r) - \widetilde{P}(\alpha_r) \,|.
\end{align*}
The lemma then follows from Lem.~\ref{lemma:step1} and Lem.~\ref{lemma:step2}.
\end{proof}

\section{Proving approximate uniformity}
\label{sec:uniform}
In this section, we lower-bound the sum
\begin{align}
  \label{eq:sum-one-r}
  \sum_{t \, = \, -B}^{B} P(\alpha_0 + rt)
\end{align}
where $\alpha_0 \in (-r/2, r/2]$ and $B \in [1, B_{\max}) \inset \mathbb Z$ for $B_{\max} = (2^{m+\ell}/r - 1)/2$.

We first bound~\refeq{eq:sum-one-r} assuming $\alpha_0 \neq 0$ in Sect.~\ref{sec:uniform-a0-non-zero}.
We then extend this to a lower bound that holds for all $\alpha_0 \in (-r/2, r/2]$ in Sect~\ref{sec:uniform-a0-zero} and Sect.~\ref{sec:uniform-lower-bound}.

\subsection{The case $\alpha_0 \neq 0$}
\label{sec:uniform-a0-non-zero}
In what follows, let $\psi'(x)$ denote the trigamma function; the first derivative of the digamma function $\psi(x)$, or equivalently the second derivative of $\ln \Gamma(x)$.

Our analysis makes use of $\psi'(x)$, and in particular of the below supporting lemma and claim.
For their proofs, the reader is referred to App.~\ref{appendix:proofs-proving-uniformity}.

\begin{restatable}{lemma}{sumtrigamma}
  \label{lemma:sum-trigamma}
  For non-zero $\alpha_0 \in (-r/2, r/2]$, and~$B$ a positive integer,
  \begin{align*}
    &\sum_{t = -B}^{B}
    \frac{1}{(\alpha_0 + rt)^2}
    = \\
    &\quad\quad
    \frac{1}{r^2}
    \left(
      \frac{2 \pi^2}{1 - \cos(2 \pi \alpha_0 / r)} - (\psi'(1 + B + \alpha_0/r) + \psi'(1 + B - \alpha_0/r))
    \right).
  \end{align*}
\end{restatable}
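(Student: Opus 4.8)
The plan is to reduce the identity to two textbook facts about the trigamma function: its series representation $\psi'(x) = \sum_{n \geq 0} (x+n)^{-2}$, valid for every $x$ that is not a non-positive integer, and the reflection formula $\psi'(x) + \psi'(1-x) = \pi^2 / \sin^2(\pi x)$. Write $w = \alpha_0 / r$, so that $w \in (-\tfrac12, \tfrac12] \setminus \{0\}$ by hypothesis; in particular $\sin(\pi w) \neq 0$, and none of the arguments $w$, $1-w$, $1+B+w$, $1+B-w$ that will appear is a non-positive integer (indeed the latter two exceed $\tfrac32$ since $B \geq 1$). Pulling the factor $1/r^2$ out of the left-hand side, it is then enough to show
\[
  \sum_{t=-B}^{B} \frac{1}{(w+t)^2}
  =
  \frac{2\pi^2}{1 - \cos(2\pi w)} - \psi'(1+B+w) - \psi'(1+B-w).
\]

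First I would split the symmetric sum, using $(w+t)^2 = (t-w)^2$, as $\sum_{t=0}^{B} (w+t)^{-2} + \sum_{t=1}^{B} (t-w)^{-2}$. The recurrence $\psi'(x) - \psi'(x+1) = 1/x^2$, immediate from the series representation, telescopes, for any non-negative integer $N$, to $\sum_{t=0}^{N} (x+t)^{-2} = \psi'(x) - \psi'(x+N+1)$; applying this with $(x,N) = (w,B)$ to the first sum, and with $(x,N) = (1-w, B-1)$ to the second after re-indexing the latter as $\sum_{t=0}^{B-1} (t+1-w)^{-2}$, gives
\[
  \sum_{t=-B}^{B} \frac{1}{(w+t)^2}
  = \psi'(w) + \psi'(1-w) - \psi'(1+B+w) - \psi'(1+B-w).
\]
Now the reflection formula collapses the pair $\psi'(w) + \psi'(1-w)$ to $\pi^2/\sin^2(\pi w)$, and the identity $\sin^2(\pi w) = (1-\cos(2\pi w))/2$ rewrites this as $2\pi^2/(1-\cos(2\pi w))$. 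Substituting $w = \alpha_0/r$ back and restoring the prefactor $1/r^2$ yields precisely the claimed expression.

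There is no real obstacle here; the points that require a little care are all bookkeeping: making sure the re-indexing of the second sum produces the argument $1+B-w$ and not $1+B+w$, checking that $w \in (-\tfrac12,\tfrac12]\setminus\{0\}$ together with $B \geq 1$ keeps every trigamma argument away from its poles so that the series form and the recurrence are legitimate, and verifying the rewriting of $\sin^2(\pi w)$. If one prefers to sidestep the reflection formula, an equivalent route is to invoke the Mittag--Leffler expansion $\pi^2/\sin^2(\pi w) = \sum_{t \in \mathbb{Z}} (w+t)^{-2}$ and identify the two tails $\sum_{t > B}$ and $\sum_{t < -B}$ with $\psi'(1+B+w)$ and $\psi'(1+B-w)$ respectively by the same index shifts; the computation is essentially identical.
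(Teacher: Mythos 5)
Your proof is correct. Your primary route differs from the paper's in its decomposition: the paper writes the finite sum as the full bilateral sum $\sum_{t \in \mathbb Z} (\alpha_0 + rt)^{-2}$ minus the two infinite tails, identifies each tail with a trigamma value via the series $\psi'(x) = \sum_{t \ge 0}(x+t)^{-2}$, and evaluates the bilateral sum with the Mittag--Leffler expansion of $\csc^2$; you instead stay entirely with finite sums, telescoping each half via the recurrence $\psi'(x) - \psi'(x+1) = 1/x^2$ and then collapsing the resulting $\psi'(w) + \psi'(1-w)$ with the reflection formula. The two arguments are morally equivalent --- the reflection formula is itself the Mittag--Leffler expansion packaged with the series representation --- and indeed you note the paper's route as your ``equivalent alternative'' in the final sentence. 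What your main route buys is that no infinite rearrangement of a conditionally indexed bilateral sum is needed (everything converges absolutely and the only limit taken is inside the standard trigamma series); what it costs is the need to justify $\psi'(w)$ for possibly negative $w \in (-\tfrac12, 0)$, which you handle correctly by observing that the series representation and recurrence remain valid away from the non-positive integers. Your bookkeeping (the re-indexing producing $1+B-w$, the pole checks, and $\sin^2(\pi w) = (1-\cos(2\pi w))/2$) all checks out.
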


\begin{restatable}{claim}{trigamma}
\label{claim:trigamma}
For any real $x > 0$,
\begin{align*}
  \psi'(x)
  <
  \frac{1}{x} + \frac{1}{2x^2} + \frac{1}{6x^3}.
\end{align*}
\end{restatable}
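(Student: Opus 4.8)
The plan is to work from the classical series representation $\psi'(x) = \sum_{n \ge 0} (x+n)^{-2}$, valid for all real $x > 0$, which at once supplies the functional equation $\psi'(x) = \psi'(x+1) + 1/x^2$ together with the limit $\psi'(x) \to 0$ as $x \to \infty$. Writing $h(x) = \frac{1}{x} + \frac{1}{2x^2} + \frac{1}{6x^3}$, I would introduce the defect
\begin{align*}
  g(x) = h(x) - \psi'(x),
\end{align*}
so that the claim is precisely the assertion that $g(x) > 0$ for every real $x > 0$.

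The heart of the argument is to show that $g$ strictly decreases under $x \mapsto x+1$, by an explicitly positive amount. Using the functional equation,
\begin{align*}
  g(x) - g(x+1) = h(x) - h(x+1) - \frac{1}{x^2},
\end{align*}
and I would evaluate the right-hand side by putting it over the common denominator $6 x^3 (x+1)^3$. A priori the numerator is a polynomial of degree $5$, but upon expansion every term of positive degree cancels and only the constant $1$ survives, yielding the clean identity
\begin{align*}
  g(x) - g(x+1) = \frac{1}{6\, x^3 (x+1)^3} > 0
  \qquad \text{for all } x > 0.
\end{align*}

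It then remains only to telescope. For any $x > 0$ and any integer $N \ge 1$,
\begin{align*}
  g(x) = g(x+N) + \sum_{k=0}^{N-1} \frac{1}{6\,(x+k)^3 (x+k+1)^3},
\end{align*}
and since both $h(x+N) \to 0$ and $\psi'(x+N) \to 0$ as $N \to \infty$, the boundary term $g(x+N)$ vanishes in the limit. Hence
\begin{align*}
  g(x) = \sum_{k=0}^{\infty} \frac{1}{6\,(x+k)^3 (x+k+1)^3} > 0,
\end{align*}
which proves the claim; in fact this identifies the approximation error exactly as a convergent positive series.

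The only computation required is the polynomial identity underlying $g(x) - g(x+1) = 1/(6 x^3 (x+1)^3)$, and it is entirely routine — the collapse of the numerator to $1$ even makes it self-checking — so I do not anticipate any genuine obstacle. A conceivable alternative would be to use the Binet-type integral $\psi'(x) = \int_0^\infty t\, \e^{-xt} / (1 - \e^{-t})\, \dd t$ and reduce the claim to the positivity of $1 + t/2 + t^2/12 - t/(1-\e^{-t})$ on $(0, \infty)$, but verifying that inequality is more delicate than the elementary telescoping above, so I would prefer the recurrence-based route.
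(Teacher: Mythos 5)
Your proof is correct, and it takes a genuinely different route from the paper. You work entirely from the series $\psi'(x) = \sum_{n \ge 0} (x+n)^{-2}$, deriving the recurrence $\psi'(x) = \psi'(x+1) + 1/x^2$, verifying the polynomial identity $h(x) - h(x+1) - 1/x^2 = 1/(6x^3(x+1)^3)$ (which does check out: the degree-$\le 5$ terms of the numerator all cancel, leaving the constant $1$), and telescoping with the boundary term $g(x+N) \to 0$. The paper instead cites Nemes' error bounds for the asymptotic expansion of the Hurwitz zeta function: writing $\psi'(x) = \zeta(2,x)$ and truncating the Euler--Maclaurin-type expansion at $N = 2$, the remainder is $-\tfrac{1}{30x^4}\theta_2(2,x)$ with $\theta_2(2,x) \in (0,1)$, hence negative, which gives the claim in one line. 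The trade-off is clear: the paper's argument is shorter but leans on a nontrivial external theorem, and the same machinery instantly yields matching lower bounds and higher-order refinements; your argument is elementary and self-contained, and as a bonus identifies the approximation defect exactly as the convergent positive series $\sum_{k \ge 0} 1/\bigl(6(x+k)^3(x+k+1)^3\bigr)$, from which one could also read off an upper bound on the error if desired. Either proof is a valid substitute for the other in the context of the paper.
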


\noindent
We are now ready to proceed to lower-bound the sum in the non-zero case:

\begin{theorem}
  \label{th:P-a0-non-zero}
  For non-zero $\alpha_0 \in (-r/2, r/2]$, and $B \in [1, B_{\max}) \inset \mathbb Z$,
  \begin{align*}
    \sum_{t \,= \, -B}^{B} P(\alpha_0 + rt) = \frac{1}{r}(1-\epsilon_R) + \epsilon_A
    \text{ where }
    0 \le
    \epsilon_R
    \le
    \epsilon_{R,0}
    =
    \frac{1}{\pi^2}
    \left(
      \frac{2}{B} + \frac{1}{B^2} + \frac{1}{3B^3}
    \right)
  \end{align*}
  and $\left|\, \epsilon_A \,\right| \le \epsilon_{A,0} = (2B + 1) \tilde{\epsilon}$ for $\tilde{\epsilon}$ as in Lem.~\ref{lemma:P}.
\end{theorem}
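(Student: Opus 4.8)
The plan is to replace $P(\alpha_0+rt)$ by its approximation $\widetilde P(\alpha_0+rt)$ from Lem.~\ref{lemma:P}, sum the approximation in closed form using the trigamma identity of Lem.~\ref{lemma:sum-trigamma}, and then account for the accumulated approximation error separately as the additive term $\epsilon_A$. Concretely, write
\begin{align*}
  \sum_{t=-B}^{B} P(\alpha_0+rt)
  =
  \sum_{t=-B}^{B} \widetilde P(\alpha_0+rt)
  +
  \sum_{t=-B}^{B} \bigl( P(\alpha_0+rt) - \widetilde P(\alpha_0+rt) \bigr).
\end{align*}
For the error term, note that by \refeq{eq:requirement-met-by-Bmax} each argument $\alpha_0+rt$ lies in $[-2^{m+\ell-1},2^{m+\ell-1})$ and is non-zero (since $\alpha_0\neq 0$ and $|\alpha_0|\le r/2$ forces $\alpha_0+rt\neq 0$ for integer $t$), so Lem.~\ref{lemma:P} applies termwise, giving $|\epsilon_A|\le (2B+1)\tilde\epsilon$ immediately — this is the easy half.

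The substantive half is the main term. Using the definition $\widetilde P(\alpha_r) = \frac{r}{2^{2(m+\ell)}}\cdot\frac{2(1-\cos(2\pi\alpha_r/r))}{(2\pi\alpha_r/2^{m+\ell})^2}$ and observing that $\cos(2\pi(\alpha_0+rt)/r) = \cos(2\pi\alpha_0/r)$ is independent of $t$, I would pull the $t$-independent factors out of the sum:
\begin{align*}
  \sum_{t=-B}^{B} \widetilde P(\alpha_0+rt)
  =
  \frac{r}{2^{2(m+\ell)}}\cdot\frac{2(1-\cos(2\pi\alpha_0/r))\,2^{2(m+\ell)}}{4\pi^2}
  \sum_{t=-B}^{B}\frac{1}{(\alpha_0+rt)^2}.
\end{align*}
Now apply Lem.~\ref{lemma:sum-trigamma} to the remaining sum. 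The factor $\frac{2\pi^2}{1-\cos(2\pi\alpha_0/r)}$ produced by that lemma cancels precisely against the $1-\cos(2\pi\alpha_0/r)$ prefactor, leaving a clean $\frac1r$ as the leading term, while the two trigamma terms become a correction of the form $-\frac{1}{r}\cdot\frac{1-\cos(2\pi\alpha_0/r)}{2\pi^2}\bigl(\psi'(1+B+\alpha_0/r)+\psi'(1+B-\alpha_0/r)\bigr)$. This correction is manifestly non-negative (both $1-\cos\ge0$ and $\psi'>0$), which gives $\epsilon_R\ge 0$.

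For the upper bound $\epsilon_R\le\epsilon_{R,0}$, I would bound $1-\cos(2\pi\alpha_0/r)\le 2$ using that $|2\pi\alpha_0/r|\le\pi$ (from $|\alpha_0|\le r/2$) together with Claim~\ref{claim:cos}, and bound each trigamma term via Claim~\ref{claim:trigamma} as $\psi'(1+B\pm\alpha_0/r) < \frac{1}{1+B\pm\alpha_0/r} + \frac{1}{2(1+B\pm\alpha_0/r)^2} + \frac{1}{6(1+B\pm\alpha_0/r)^3}$. Since $\alpha_0/r\in(-1/2,1/2]$, we have $1+B\pm\alpha_0/r > B$, so each term is below $\frac1B+\frac1{2B^2}+\frac1{6B^3}$; summing the two gives $\frac2B+\frac1{B^2}+\frac1{3B^3}$, and multiplying by $\frac{2}{2\pi^2}=\frac1{\pi^2}$ yields exactly $\epsilon_{R,0}$. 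The main obstacle is purely bookkeeping: making sure the cancellation of the $\frac{2\pi^2}{1-\cos(\cdot)}$ factor is carried out correctly against all the $2^{2(m+\ell)}$, $r$, $2$, and $4\pi^2$ factors so that the leading coefficient is exactly $1/r$, and confirming the sign and the direction of every inequality (in particular that replacing $1+B\pm\alpha_0/r$ by the smaller quantity $B$ only increases the trigamma bound, hence only increases $\epsilon_R$, which is what we want for an upper bound). No single step is deep; the care is in the constants.
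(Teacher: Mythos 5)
Your proposal is correct and follows essentially the same route as the paper's proof: the same decomposition into $\sum \widetilde{P} + \epsilon_A$, the same termwise application of Lem.~\ref{lemma:P} for $\epsilon_A$, and the same use of Lem.~\ref{lemma:sum-trigamma} with the cancellation of the $2\pi^2/(1-\cos(2\pi\alpha_0/r))$ factor followed by Claim~\ref{claim:trigamma} and $1+B\pm\alpha_0/r \ge B$ for $\epsilon_R$. Your explicit check that each $\alpha_0+rt$ is non-zero and lies in the admissible range is a small point the paper leaves implicit, but otherwise the arguments coincide.
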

\begin{proof}
  For the absolute error $\epsilon_A$, we have that
  \begin{align*}
    \sum_{t \, = \, -B}^{B} P(\alpha_0 + rt)
    =
    \sum_{t \, = \, -B}^{B} \widetilde{P}(\alpha_0 + rt) +
    \underbrace{\sum_{t \, = \, -B}^{B} P(\alpha_0 + rt) -
    \sum_{t \, = \, -B}^{B} \widetilde{P}(\alpha_0 + rt)}_{\epsilon_A}
  \end{align*}
  where, by the triangle inequality and Lem.~\ref{lemma:P}, it holds that
  \begin{align*}
    \left|\, \epsilon_A \,\right|
    &=
    \left|\, \sum_{t \, = \, -B}^{B} (P(\alpha_0 + rt) - \widetilde{P}(\alpha_0 + rt)) \,\right| \\
    &\le
    \sum_{t \, = \, -B}^{B} \left|\, P(\alpha_0 + rt) - \widetilde{P}(\alpha_0 + rt) \,\right|
    \le
    (2B + 1) \, \tilde{\epsilon}.
  \end{align*}

  For the relative error $\epsilon_R$, we have that
  \begin{align*}
    \sum_{t \, = \, -B}^{B} \widetilde{P}(\alpha_0 + rt)
    &=
    \frac{r}{2^{2(m+\ell)}}
    \sum_{t \, = \, -B}^{B}
    \frac{2(1 - \cos(2 \pi (\alpha_0 + rt) / r))}{(2 \pi (\alpha_0 + rt) / 2^{m+\ell})^2} \\
    &=
    \frac{(1 - \cos(2 \pi \alpha_0/r)) r}{2\pi^2}
    \sum_{t \, = \, -B}^{B}
    \frac{1}{(\alpha_0 + rt)^2}
    =
    \frac{1}{r} \left( 1 - \epsilon_R \right)
  \end{align*}
  where we have used that
  \begin{align*}
    &\sum_{t \, = \, -B}^{B}
    \frac{1}{(\alpha_0 + rt)^2}
    = \\
    &\quad\quad
    \frac{1}{r^2}
    \left(
    \frac{2 \pi^2}{1 - \cos(2\pi \alpha_0 / r)}
    -
    (
      \psi'(1 + B - \alpha_0 / r)
      +
      \psi'(1 + B + \alpha_0 / r)
    )
    \right)
  \end{align*}
  by Lem.~\ref{lemma:sum-trigamma}, which implies that
  \begin{align*}
  \epsilon_R
  &=
  \frac{1 - \cos(2\pi \alpha_0 / r)}{2 \pi^2}
  (
    \psi'(1 + B - \alpha_0 / r)
    +
    \psi'(1 + B + \alpha_0 / r)
  ) \\
  &\le
  \epsilon_{R,0}
  =
  \frac{1}{\pi^2}
  \left(
    \frac{2}{B} + \frac{1}{B^2} + \frac{1}{3B^3}
  \right)
  \end{align*}
  where we have used Claim~\ref{claim:trigamma}, and that $1 + B \pm \alpha_0/r \ge B$ as $|\, \alpha_0 \,| \le r/2$.
\end{proof}

\subsection{The special case $\alpha_0 = 0$}
\label{sec:uniform-a0-zero}
For completeness, we also treat the case where~$\alpha_0$ is zero in this section.

Note however that $\alpha_0 = 0 \Rightarrow j = 2^{m+\ell-\kappa_r} u$ for some integer $u \in [0, 2^{\kappa_r})$ and~$2^{\kappa_r}$ the greatest power of two to divide~$r$.
This in turn implies that $j / 2^{m+\ell} = u/2^{\kappa_r}$ which typically yields insufficient information to solve for~$r$ unless~$\kappa_r$ is very large.

\begin{lemma}
\label{lemma:uniform-a0-zero}
It holds that
\begin{align*}
  P(0) = \frac{1}{r} + \epsilon
  \quad \text{ where } \quad
  0 \le \epsilon = \frac{\beta (r-\beta)}{2^{2(m+\ell)} r} < \frac{1}{2^{m+2\ell}}.
\end{align*}
\end{lemma}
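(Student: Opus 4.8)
The plan is to start from the closed-form expression \refeq{eq:P-zero} for $P(0)$, namely
\[
  P(0) = \frac{L^2 r + (2L+1)\beta}{2^{2(m+\ell)}},
\]
and massage it into the form $1/r + \epsilon$ by writing $2^{m+\ell} = Lr + \beta$. First I would square this identity to get $2^{2(m+\ell)} = (Lr+\beta)^2 = L^2 r^2 + 2Lr\beta + \beta^2$, and hence $2^{2(m+\ell)}/r = L^2 r + 2L\beta + \beta^2/r$. Dividing the numerator of $P(0)$ by $2^{2(m+\ell)}$ and comparing, the quantity $P(0) - 1/r$ becomes
\[
  \frac{L^2 r + (2L+1)\beta}{2^{2(m+\ell)}} - \frac{L^2 r + 2L\beta + \beta^2/r}{2^{2(m+\ell)}}
  = \frac{\beta - \beta^2/r}{2^{2(m+\ell)}}
  = \frac{\beta(r-\beta)}{2^{2(m+\ell)} r},
\]
which is exactly the claimed $\epsilon$. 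This is the core computation and it is entirely routine.

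Next I would verify the two inequalities bounding $\epsilon$. Non-negativity is immediate: $\beta = 2^{m+\ell} \bmod r \in [0, r)$, so both $\beta \ge 0$ and $r - \beta > 0$ (strictly, since $\beta \le r-1 < r$), giving $\epsilon \ge 0$. For the upper bound, I would use the elementary fact that for $\beta \in [0,r]$ the product $\beta(r-\beta)$ is maximized at $\beta = r/2$, where it equals $r^2/4$; hence $\beta(r-\beta) \le r^2/4$, and therefore
\[
  \epsilon = \frac{\beta(r-\beta)}{2^{2(m+\ell)} r} \le \frac{r^2/4}{2^{2(m+\ell)} r} = \frac{r}{4 \cdot 2^{2(m+\ell)}}.
\]
To reach the stated bound $1/2^{m+2\ell}$, I would invoke the standing assumption $r < 2^m$ (so $r/4 < 2^m/4 < 2^m$), which yields $\epsilon < 2^m / 2^{2(m+\ell)} = 1/2^{m+2\ell}$.

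There is essentially no obstacle here; the only thing to watch is that the algebraic identity $P(0) - 1/r = \beta(r-\beta)/(2^{2(m+\ell)}r)$ is handled cleanly by substituting $2^{2(m+\ell)} = (Lr+\beta)^2$ rather than by trying to manipulate floors directly. One could alternatively present it by noting $P(0) - 1/r$ has numerator $r\bigl(L^2 r + (2L+1)\beta\bigr) - 2^{2(m+\ell)}$ over $2^{2(m+\ell)} r$, and then expanding $2^{2(m+\ell)} = (Lr+\beta)^2$ inside that numerator to see everything cancel except $r\beta - \beta^2 = \beta(r-\beta)$. Either route is a few lines. The strict inequality in the final bound is genuine because $r < 2^m$ is a strict inequality in the paper's setup, and I would simply remark on that to justify writing $<$ rather than $\le$.
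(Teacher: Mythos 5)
Your proposal is correct and follows essentially the same route as the paper: both derive the identity $P(0)-1/r=\beta(r-\beta)/(2^{2(m+\ell)}r)$ by exploiting $2^{m+\ell}=Lr+\beta$ (the paper substitutes $L=(2^{m+\ell}-\beta)/r$ into the numerator, you square the identity instead — the same cancellation either way), and both then bound $\beta(r-\beta)/r$ by $2^m$ using $\beta<r<2^m$. No gaps.
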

\begin{proof}
We have that $0 \le \beta = 2^{m+\ell} \text{ mod } r < r < 2^m$, and by~\refeq{eq:P-zero} that
\begin{align*}
  P(0)
  &=
  \frac{L^2 r + (2L + 1)\beta}{2^{2(m+\ell)}}
  =
  \frac{1}{2^{2(m+\ell)}}
  \left(
    \floor{\frac{2^{m+\ell}}{r}}^2 r
    +
    \left( 2 \floor{\frac{2^{m+\ell}}{r}} + 1 \right) \beta
  \right) \\
  &=
  \frac{1}{2^{2(m+\ell)}}
  \left(
    \left( \frac{2^{m+\ell}}{r} - \frac{\beta}{r} \right)^2 r
    +
    2 \left( \frac{2^{m+\ell}}{r} - \frac{\beta}{r} \right)\beta
    +
    \beta
  \right)
  =
  \frac{1}{r}
  +
  \frac{\beta (r-\beta)}{2^{2(m+\ell)} r},
\end{align*}
where $0 \le \beta(r-\beta) / (2^m r) < 1$, and so the lemma follows.
\end{proof}

\subsection{Lower-bounding the sum}
\label{sec:uniform-lower-bound}
\begin{theorem}
  \label{th:P-a0-universal}
  For $\alpha_0 \in (-r/2, r/2]$, and $B \in [1, B_{\max}) \inset \mathbb Z$,
  \begin{align*}
    \sum_{t \,= \, -B}^{B} P(\alpha_0 + rt)
    &\ge
    \frac{1}{r} \left(
      1
      -
      \frac{1}{\pi^2}
      \left(
        \frac{2}{B} + \frac{1}{B^2} + \frac{1}{3B^3}
      \right)
    \right)
    -
    \frac{\pi^2 (2B + 1)}{2^{m+\ell}}.
  \end{align*}
\end{theorem}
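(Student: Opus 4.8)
The plan is to reduce the theorem to the two cases already handled in Sections~\ref{sec:uniform-a0-non-zero} and~\ref{sec:uniform-a0-zero}, namely $\alpha_0 \neq 0$ and $\alpha_0 = 0$, and in each case simply to discard the error terms whose sign is favourable. Since the assertion is an inequality rather than an identity, this costs nothing.

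First I would take $\alpha_0 \neq 0$. Then Theorem~\ref{th:P-a0-non-zero} applies directly: it writes the sum as $\tfrac1r(1 - \epsilon_R) + \epsilon_A$ with $0 \le \epsilon_R \le \epsilon_{R,0} = \tfrac{1}{\pi^2}\bigl(\tfrac2B + \tfrac1{B^2} + \tfrac1{3B^3}\bigr)$ and $|\epsilon_A| \le (2B+1)\tilde\epsilon$. Using $\epsilon_R \ge 0$, $\epsilon_A \ge -|\epsilon_A|$, and the bound $\tilde\epsilon < \pi^2/2^{m+\ell}$ from Lemma~\ref{lemma:P}, I obtain
\[
  \sum_{t = -B}^{B} P(\alpha_0 + rt)
  \;\ge\;
  \frac1r(1 - \epsilon_{R,0}) - (2B+1)\tilde\epsilon
  \;>\;
  \frac1r(1 - \epsilon_{R,0}) - \frac{\pi^2 (2B+1)}{2^{m+\ell}},
\]
which is exactly the claimed inequality (with a little room to spare). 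This case is pure bookkeeping.

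Next I would take $\alpha_0 = 0$. Here the one point requiring attention is that the shifted arguments must still lie in the fundamental domain $[-2^{m+\ell-1}, 2^{m+\ell-1})$ so that $P(\cdot)$ is meaningful and non-negative: for $t \in [-B, B] \inset \mathbb Z$ we have $\alpha_0 + rt = rt$ with $|rt| \le rB < rB_{\max} = (2^{m+\ell} - r)/2 < 2^{m+\ell-1}$, exactly as in~\refeq{eq:requirement-met-by-Bmax}, so every $P(rt)$ is a legitimate summand of~\refeq{eq:pr} and hence $\ge 0$. Discarding all terms with $t \neq 0$ and invoking Lemma~\ref{lemma:uniform-a0-zero} for the $t = 0$ term gives $\sum_{t=-B}^{B} P(rt) \ge P(0) = \tfrac1r + \epsilon \ge \tfrac1r$. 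Since $\epsilon_{R,0} \ge 0$ and $\pi^2(2B+1)/2^{m+\ell} \ge 0$, this already dominates $\tfrac1r(1 - \epsilon_{R,0}) - \pi^2(2B+1)/2^{m+\ell}$, so the bound holds here as well. Combining the two cases proves the theorem. There is no real obstacle; the only thing one must not overlook is the domain check in the $\alpha_0 = 0$ case, which is precisely what the hypothesis $B < B_{\max}$ supplies.
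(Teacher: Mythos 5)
Your proposal is correct and follows essentially the same route as the paper: the case $\alpha_0 \neq 0$ is handled by discarding the favourable error terms in Thm.~\ref{th:P-a0-non-zero} together with the bound $\tilde\epsilon < \pi^2/2^{m+\ell}$ from Lem.~\ref{lemma:P}, and the case $\alpha_0 = 0$ by keeping only the $t = 0$ term via Lem.~\ref{lemma:uniform-a0-zero} and the non-negativity of $P$. The explicit domain check via $B < B_{\max}$ is a welcome extra detail the paper leaves implicit (it is established in~\refeq{eq:requirement-met-by-Bmax}), but it does not change the argument.
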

\begin{proof}
  For non-zero $\alpha_0 \in (-r/2, r/2]$, we have by Thm.~\ref{th:P-a0-non-zero} that
  \begin{align*}
    \sum_{t \,= \, -B}^{B} P(\alpha_0 + rt)
    &=
    \frac{1}{r}(1-\epsilon_R) + \epsilon_A
    \ge
    \frac{1}{r}(1-\epsilon_{R,0}) - \epsilon_{A,0} \\
    &\ge
    \frac{1}{r} \left(
      1
      -
      \frac{1}{\pi^2}
      \left(
        \frac{2}{B} + \frac{1}{B^2} + \frac{1}{3B^3}
      \right)
    \right)
    -
    \frac{\pi^2 (2B + 1)}{2^{m+\ell}}
  \end{align*}
  where we have used that $\left|\, \epsilon_A \,\right| \le \epsilon_{A,0} = (2B + 1) \tilde{\epsilon}$, where $\tilde \epsilon < \pi^2 / 2^{m+\ell}$ by Lem.~\ref{lemma:P}.

  For $\alpha_0 = 0$, we have that $P(\alpha_0) = P(0) \ge 1/r$ by Lem.~\ref{lemma:uniform-a0-zero}, and $P(\alpha_0 + rt) = P(rt) \ge 0$ for $|\, t \,| \in [1, B] \inset \mathbb Z$ as~$P$ is non-negative, and so the theorem follows.
\end{proof}

\section{Finding the order $r$ given $j$}
\label{sec:find-order}
As stated earlier in the introduction in Sect.~\ref{sec:shors-original-work}, Shor~\cite{shor94, shor97} originally proposed to recover~$r$ from~$j$ by expanding $j/2^{m+\ell}$ in a continued fraction.

By Claim~\ref{claim:convergent-existence} in App.~\ref{appendix:continued-fractions-based-post-processing}, it suffices to require that
\begin{align}
  \label{eq:hw-requirement}
  \left|\, \frac{j}{2^{m+\ell}} - \frac{z}{r} \,\right|
  <
  \frac{1}{2r^2}
\end{align}
for the convergent $z/r$ to appear in the continued fraction expansion of $j / 2^{m+\ell}$.

By Thm.~\ref{th:P-a0-universal}, it suffices with high probability to search a small $B$-neighborhood around the frequency observed to find the optimal frequency $j_0(z)$ that yields $\alpha_0(z) \in (-r/2, r/2]$.
For $j = j_0(z)$, we then have that
\begin{align}
  \label{eq:hw-requirement-j0}
  \left|\, \frac{j_0(z)}{2^{m+\ell}} - \frac{z}{r} \,\right|
  =
  \left|\, \frac{rj_0(z) - 2^{m+\ell} z}{2^{m+\ell} r} \,\right|
  =
  \frac{\left|\, \{r j_0(z)\}_{2^{m+\ell}} \,\right|}{2^{m+\ell} r}
  =
  \frac{\left|\, \alpha_0(z) \,\right|}{2^{m+\ell} r}
  \le
  \frac{1}{2 \cdot 2^{m+\ell}},
\end{align}
so by Claim~\ref{claim:convergent-existence} it then suffices that $2^{m+\ell} > r^2$ to recover the convergent $z/r$.
Note that as $2^m > r$, the requirement that $2^{m+\ell} > r^2$ is met for any~$\ell \ge m$.

Given $z/r$, we may immediately recover $\tilde r = r/d$ where $d = \gcd(r, z)$.
By also using Claim~\ref{claim:convergent-uniqueness} to identify $z/r$ in the expansion, we obtain the below lemma:

\begin{restatable}{lemma}{continuedfractionslemma}
  \label{lemma:continued-fractions-recover-tilde-r}
  The last convergent $p/q$ with denominator $q < 2^{(m+\ell)/2}$ in the continued fraction expansion of~$j / 2^{m+\ell}$ is equal to $z/r$, for $j = j_0(z)$ for any $z \in [0, r) \inset \mathbb Z$, and for $m, \ell \in \mathbb Z_{> 0}$ such that $2^m > r$ and $2^{m+\ell} > r^2$.
\end{restatable}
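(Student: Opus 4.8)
The plan is to establish the lemma in two parts: first that $z/r$ actually appears as a convergent of the continued fraction expansion of $j_0(z)/2^{m+\ell}$, and second that it is precisely the \emph{last} convergent whose denominator stays below $2^{(m+\ell)/2}$. The first part is essentially already done in the excerpt: by \refeq{eq:hw-requirement-j0} we have $|j_0(z)/2^{m+\ell} - z/r| \le 1/(2 \cdot 2^{m+\ell})$, and since $2^{m+\ell} > r^2$ this is strictly less than $1/(2r^2)$, so \refeq{eq:hw-requirement} is satisfied and Claim~\ref{claim:convergent-existence} guarantees that $z/r$ (in lowest terms, say $p'/q'$ with $q' = r/\gcd(r,z) \le r$) is a convergent of $j_0(z)/2^{m+\ell}$. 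I would state this as the opening move, being careful to note that the convergent Claim~\ref{claim:convergent-existence} produces is the fraction in lowest terms, so strictly speaking the convergent is $\tilde p / \tilde r$ with $\tilde r = r/d$, $d = \gcd(r,z)$; the phrasing of the lemma ("is equal to $z/r$") should be read as equality of rationals.

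For the second part, I would invoke Claim~\ref{claim:convergent-uniqueness} (referenced in the text just before the lemma), which I expect says something like: if a rational $p/q$ in lowest terms satisfies $|x - p/q| < 1/(2q^2)$, then $p/q$ is the unique convergent of $x$ with denominator in a certain range, and in particular it is the last convergent with denominator below $\sqrt{\text{something}}$. The key quantitative inputs are: (i) the denominator $\tilde r = r/d \le r < 2^{(m+\ell)/2}$, so $\tilde p/\tilde r$ is indeed \emph{a} convergent with denominator below the threshold $2^{(m+\ell)/2}$; and (ii) the \emph{next} convergent after $\tilde p/\tilde r$ must have denominator exceeding $2^{(m+\ell)/2}$. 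For (ii) I would use the standard fact that consecutive convergents $h_k/k_k$, $h_{k+1}/k_{k+1}$ of $x$ satisfy $|x - h_k/k_k| \ge 1/(k_k(k_k + k_{k+1}))$ (or the cleaner $|x - h_k/k_k| > 1/(2 k_k k_{k+1})$ type bound). Combining with \refeq{eq:hw-requirement-j0}, which gives the \emph{upper} bound $|x - \tilde p/\tilde r| \le 1/(2 \cdot 2^{m+\ell})$, one gets $k_{k+1} \gtrsim 2^{m+\ell}/(\tilde r \cdot \text{const})$; since $\tilde r \le r < 2^{m+\ell/2}$... wait, one needs to be a little careful here because $\tilde r$ could be close to $r$, so $k_{k+1} \gtrsim 2^{m+\ell}/r > r > \tilde r$, hence $k_{k+1} > 2^{(m+\ell)/2}$ as soon as $2^{m+\ell}/r > 2^{(m+\ell)/2}$, i.e. $2^{(m+\ell)/2} > r$, which is exactly the hypothesis $2^{m+\ell} > r^2$. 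So the threshold $2^{(m+\ell)/2}$ is chosen precisely so that it separates $\tilde r$ from the next denominator.

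Concretely I would structure the proof as: (1) cite \refeq{eq:hw-requirement-j0} and Claim~\ref{claim:convergent-existence} to get that $\tilde p/\tilde r$ is a convergent; (2) note $\tilde r \le r < 2^{(m+\ell)/2}$ using $2^{m+\ell} > r^2$, so it is among the convergents with denominator below the stated bound; (3) let $p''/q''$ be the convergent immediately following $\tilde p/\tilde r$ (if none exists, $j_0(z)/2^{m+\ell}$ equals $\tilde p/\tilde r$ and we are trivially done), apply the consecutive-convergent lower bound on the approximation error together with the upper bound from \refeq{eq:hw-requirement-j0} to deduce $q'' > 2^{m+\ell}/(r \cdot c) \ge 2^{(m+\ell)/2}$ for the relevant constant $c$; (4) since convergent denominators are strictly increasing, every convergent after $\tilde p/\tilde r$ has denominator $\ge q'' > 2^{(m+\ell)/2}$, so $\tilde p/\tilde r$ is indeed the last one below the threshold. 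The main obstacle is purely bookkeeping with the constant in the consecutive-convergent bound: I need $|x - h_k/k_k| < 1/(2 k_k k_{k+1})$ (the "sandwich" bound), which combined with $|x - \tilde p/\tilde r| \le 1/(2 \cdot 2^{m+\ell})$ gives $2^{m+\ell} < k_{k+1} \tilde r \le k_{k+1} r$, hence $k_{k+1} > 2^{m+\ell}/r > 2^{(m+\ell)/2}$ using the hypothesis — this is clean, so I would make sure to cite the exact form of the continued-fraction estimate I use (presumably packaged in App.~\ref{appendix:continued-fractions-based-post-processing} as Claim~\ref{claim:convergent-uniqueness}) rather than re-deriving it.
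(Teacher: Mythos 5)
Your proposal is correct, and its first half (apply \refeq{eq:hw-requirement-j0} together with Claim~\ref{claim:convergent-existence} to conclude that $z/r$, in lowest terms $\tilde p/\tilde r$ with $\tilde r = r/\gcd(r,z) \le r < 2^{(m+\ell)/2}$, occurs as a convergent) is exactly the paper's opening move, including your remark that the equality is one of rationals. The second half, however, takes a genuinely different route. The paper does \emph{not} use the consecutive-convergent lower bound $|x - p_k/q_k| > 1/(q_k(q_k+q_{k+1}))$; its Claim~\ref{claim:convergent-uniqueness} is instead a bare uniqueness statement proved by the triangle inequality: two distinct fractions with denominators in $(0,L)$ differ by at least $1/(L-1)^2 > 1/L^2$, so at most one convergent with denominator below $L = 2^{(m+\ell)/2}$ can lie within $1/(2L^2) = 1/(2\cdot 2^{m+\ell})$ of $x$. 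Combined with the fact that successive convergents approximate $x$ increasingly well and have strictly increasing denominators, any later convergent with denominator still below $L$ would violate uniqueness, so $z/r$ must be the last such convergent (the case $z=0$ is split off because the claim requires $p \in (0,q)$). Your route instead bounds the \emph{next} denominator from below, $q_{k+1} > 2^{m+\ell}/\tilde r \ge 2^{m+\ell}/r > 2^{(m+\ell)/2}$, which is equally valid and in fact yields slightly more information; the trade-off is that it relies on the sandwich estimate for convergents, a standard but deeper continued-fraction fact that is \emph{not} what Claim~\ref{claim:convergent-uniqueness} provides, so you would need to import it from a reference such as Hardy and Wright rather than from the paper's appendix as you anticipated. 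The paper's version needs only elementary separation of rationals. Both arguments are complete once that citation issue is fixed.
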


We may use lattice-based post-processing to achieve an analogous result:
More specifically, we may recover $\tilde r/2$ and hence~$\tilde r$ by using Lagrange's algorithm~\cite{lagrange, nguyen} to find the shortest non-zero vector, up to sign, of a two-dimensional lattice:
\begin{restatable}{lemma}{latticelemmashortest}
  \label{lemma:lattice-recover-tilde-r-shortest}
  The shortest non-zero vector, up to sign, in the lattice~$\mathcal L$ spanned by $(j, \frac{1}{2})$ and $(2^{m+\ell}, 0)$ has $\tilde r / 2 = r / (2 \gcd(r, z))$ in its second component, for $j = j_0(z)$ for any $z \in [0, r) \inset \mathbb Z$, and for $m, \ell \in \mathbb Z_{> 0}$ such that $2^m > r$ and $2^{m + \ell} > r^2$.
\end{restatable}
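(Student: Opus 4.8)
The plan is to relate the lattice~$\mathcal L$ to the convergent $z/r$ already handled by continued fractions, and then invoke the known correspondence between shortest vectors of such ``rational approximation lattices'' and best rational approximations. First I would write down the lattice vectors explicitly: a generic lattice point is $(aj + b\,2^{m+\ell},\, a/2)$ for $a, b \in \mathbb Z$, so its first component is $a j \bmod 2^{m+\ell}$ shifted by integer multiples of $2^{m+\ell}$, and its second component is $a/2$. Taking $a = \tilde r = r/d$ with $d = \gcd(r,z)$, and writing $z = d\tilde z$, $r = d\tilde r$ with $\gcd(\tilde r, \tilde z) = 1$, the argument computation in Sect.~\ref{sec:understanding-the-probability-distribution} gives $\{\tilde r\, j_0(z)\}_{2^{m+\ell}} = \tilde\alpha$ where $|\tilde\alpha| = |\alpha_0(z)|/d \le r/(2d) = \tilde r/2$; one then picks $b$ so that the first component equals this residue, producing the lattice vector $\vec v = (\tilde\alpha, \tilde r/2)$ of squared norm $\tilde\alpha^2 + \tilde r^2/4 \le \tilde r^2/2$, so $\|\vec v\| \le \tilde r/\sqrt 2$.

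Next I would argue that $\vec v$ is in fact the shortest non-zero vector up to sign. Suppose $\vec w = (aj + b\,2^{m+\ell}, a/2)$ is non-zero with $\|\vec w\| \le \|\vec v\| \le \tilde r/\sqrt2$. From the second component, $|a| \le \tilde r \cdot \sqrt 2/\sqrt 2 \cdot \ldots$ — more carefully, $|a|/2 \le \tilde r/\sqrt 2$ gives $|a| \le \sqrt 2\,\tilde r < 2\tilde r$, but we need to rule out $|a| \notin \{\tilde r\}$; actually the cleaner route is: if $a \ne 0$, the first component in absolute value is at least $|\{aj\}_{2^{m+\ell}}|$ which, unless $\tilde r \mid a$, is the distance from $a z / r$ (up to scaling) to the nearest integer. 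Concretely, $aj + b\,2^{m+\ell}$ has absolute value $\ge \bigl| a\,\alpha_0(z) / \text{(something)} \bigr|$ — here I would instead use the standard dual argument: a short vector in $\mathcal L$ with second component $a/2$ forces $a j / 2^{m+\ell}$ to be close to an integer, i.e. $|a j - k\,2^{m+\ell}| \le \|\vec w\|$ for some integer $k$, hence $|j/2^{m+\ell} - k/a| \le \|\vec w\|/(|a|\,2^{m+\ell})$; combined with $|j/2^{m+\ell} - z/r| \le 1/(2\cdot 2^{m+\ell})$ from~\refeq{eq:hw-requirement-j0} and the hypothesis $2^{m+\ell} > r^2 \ge d^2\tilde r^2$, this squeezes $k/a$ to be a fraction with denominator a multiple of $\tilde r$ (the reduced denominator of $z/r$), forcing $\tilde r \mid a$, and then minimality of $\|\vec v\|$ forces $|a| = \tilde r$. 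Finally, with $|a| = \tilde r$ the second component is $\pm\tilde r/2$, so $\vec w = \pm\vec v$ after checking the first component is pinned down as $\pm\tilde\alpha$ by the norm bound.

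The main obstacle I expect is making the ``squeeze'' step fully rigorous: showing that a vector shorter than $\vec v$ cannot have $|a| < \tilde r$ and cannot have $|a|$ a non-multiple of $\tilde r$ in the range up to $\sqrt 2\,\tilde r$. This is exactly a quantitative uniqueness-of-best-approximation statement, and it should follow from the same Diophantine inequality used in Claim~\ref{claim:convergent-uniqueness}/Claim~\ref{claim:convergent-existence} in App.~\ref{appendix:continued-fractions-based-post-processing}: two distinct rationals with denominators bounded by $2^{(m+\ell)/2}$ differ by more than $2^{-(m+\ell)}$, whereas $z/r$ (in lowest terms, denominator $\tilde r < 2^{(m+\ell)/2}$) and any competing $k/a$ would both lie within $\sqrt 2/(2^{(m+\ell)/2}\cdot 2^{m+\ell})$-ish of $j/2^{m+\ell}$, a contradiction unless $k/a = z/r$ in lowest terms. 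So I would lean on those continued-fraction claims rather than reprove the approximation theory from scratch, keeping the lattice-specific part to the explicit construction of $\vec v$ and the bookkeeping with the factor of $1/2$ in the second coordinate (which is what turns $r/d$ into $r/(2\gcd(r,z))$ in the statement).
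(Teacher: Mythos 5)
Your proposal is correct in outline but takes a genuinely different route from the paper. The paper's proof is purely lattice-geometric: it notes $|\,\vec u\,| \le r/\sqrt 2$ (see \refeq{eq:lattice-bound-u}), uses $\lambda_1 \lambda_2^{\perp} = \det \mathcal L = 2^{m+\ell-1}$ (Claim~\ref{claim:lattice-det}) together with $\lambda_1 \le |\,\vec u\,|$ to conclude $\lambda_2^{\perp} > |\,\vec u\,|$ whenever $2^{m+\ell-1} > |\,\vec u\,|^2$ (Claim~\ref{claim:size-lambda-two-perp-u}), so that every lattice vector of norm at most $|\,\vec u\,|$ is an integer multiple of $\vec s_1$; primitivity of $\vec u = \tilde r\, \vec b_1 - \tilde z\, \vec b_2$ with $\gcd(\tilde r, \tilde z) = 1$ then forces $\vec u = \pm \vec s_1$. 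Your Diophantine argument reproves this from scratch, which is fine and arguably more illuminating about why the hypothesis $2^{m+\ell} > r^2$ is exactly the continued-fractions threshold; the cost is that it does not reuse the $\lambda_2^{\perp}$ machinery that the paper then needs anyway for the enumeration bound in Lem.~\ref{lemma:lattice-recover-tilde-r}.

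Two points in your squeeze step need repair. First, bounding $|a| \le \sqrt 2\, \tilde r$ and $|aj - k 2^{m+\ell}| \le \tilde r/\sqrt 2$ \emph{separately} and then applying the triangle inequality yields only $2^{m+\ell} \le \sqrt 2\, \tilde r^2$, which does not contradict $2^{m+\ell} > r^2$ when $d = 1$. You must bound the two coordinates \emph{jointly}: writing $X = |aj - k2^{m+\ell}|$ and $A = |a|/2$, the hypothesis $X^2 + A^2 = |\,\vec w\,|^2 \le |\,\vec u\,|^2 \le \tilde r^2/2$ gives $X + A \le \sqrt{2(X^2+A^2)} \le \tilde r$, and then the chain $1/(|a|\tilde r) \le |k/a - \tilde z/\tilde r| \le X/(|a|2^{m+\ell}) + |\tilde\alpha|/(\tilde r\, 2^{m+\ell})$ multiplies out to $2^{m+\ell} \le \tilde r X + |a|\,|\tilde\alpha| \le \tilde r(X + A) \le \tilde r^2 \le r^2$, the desired contradiction. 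Second, you cannot simply ``lean on'' Claim~\ref{claim:convergent-uniqueness}: that claim requires both competing fractions to lie within $1/(2L^2) = 1/(2 \cdot 2^{m+\ell})$ of the target, whereas your competitor $k/a$ is only known to satisfy $|j/2^{m+\ell} - k/a| \le X/(|a| 2^{m+\ell})$, and $X/|a|$ can greatly exceed $1/2$ (e.g.\ when $|a|$ is small). So the uniqueness step must be carried out with the bespoke inequality above rather than by citation. With those two fixes, and the trivial case $a = 0$ handled (such a vector has norm at least $2^{m+\ell} > \tilde r$), your argument closes: $k/a = \tilde z/\tilde r$ forces $\tilde r \mid a$, hence $\vec w = c\,\vec u$ with $|c| = 1$.
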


When using lattice-based post-processing, we may in fact select $\ell = m - \Delta$ for some $\Delta \in [0, m) \inset \mathbb Z$ and still recover $\tilde r = r / \gcd(r, z)$ by enumerating at most $\deltaexpr$ vectors in the lattice.
For small~$\Delta$, the enumeration is efficient:

\begin{restatable}{lemma}{latticelemma}
  \label{lemma:lattice-recover-tilde-r}
  At most $\deltaexpr$ vectors in the lattice~$\mathcal L$ spanned by $(j, \frac{1}{2})$ and $(2^{m+\ell}, 0)$ must be enumerated to recover $\tilde r = r / \gcd(r, z)$, for $j = j_0(z)$ for any $z \in [0, r) \inset \mathbb Z$, for $m \in \mathbb Z_{> 0}$ such that $2^m > r$, and for $\ell = m - \Delta$ for some $\Delta \in [0, m) \inset \mathbb Z$.

  More specifically, a set of at most $\deltaexpr$ candidates for~$\tilde r$, that is guaranteed to contain~$\tilde r$, may be constructed by enumerating at most~$\deltaexpr$ vectors $\vec w = (w_1, w_2) \in \mathcal L$ of norm $|\, \vec w \,| \le 2^{m-1/2}$ and including $2 w_2$ in the set.
\end{restatable}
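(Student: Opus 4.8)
The plan is to work in the lattice $\mathcal L$ spanned by $(j, \tfrac{1}{2})$ and $(2^{m+\ell}, 0)$ with $j = j_0(z)$, and to pin down a short lattice vector whose second component encodes $\tilde r / 2$, then argue that an enumeration of all vectors up to a fixed norm bound is short enough. First I would exhibit an explicit short vector: from \refeq{eq:hw-requirement-j0} we have $|r j_0(z) - 2^{m+\ell} z| = |\alpha_0(z)| \le r/2$, so the vector $\vec v = \tilde r \cdot (j, \tfrac{1}{2}) - (rz/d) \cdot (2^{m+\ell}, 0) = \bigl( \alpha_0(z)/d, \; \tilde r / 2 \bigr)$ lies in $\mathcal L$ (here $d = \gcd(r,z)$, so $\tilde r = r/d$ and $rz/d \in \mathbb Z$). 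Its norm satisfies $|\vec v|^2 = (\alpha_0(z)/d)^2 + (\tilde r/2)^2 \le (r/2)^2 + (r/2)^2 = r^2/2 < 2^{2m}/2$, i.e.\ $|\vec v| \le 2^{m - 1/2}$, so $\vec v$ is among the enumerated vectors. This establishes that the set of candidates $\{2 w_2 : \vec w \in \mathcal L,\ |\vec w| \le 2^{m-1/2}\}$ contains $\tilde r$ (taking $\vec w = \vec v$, noting $2 w_2 = \tilde r$).

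Next I would bound the number of lattice points inside the disc of radius $R = 2^{m-1/2}$ by a standard volume argument: $\operatorname{vol}(\mathcal L) = \det \begin{pmatrix} j & 1/2 \\ 2^{m+\ell} & 0 \end{pmatrix} = 2^{m+\ell - 1}$, and with $\ell = m - \Delta$ this is $2^{2m - \Delta - 1}$. The number of lattice vectors of norm at most $R$ is at most roughly $\pi R^2 / \operatorname{vol}(\mathcal L)$ plus a boundary correction. Here $\pi R^2 / \operatorname{vol}(\mathcal L) = \pi \cdot 2^{2m-1} / 2^{2m-\Delta-1} = \pi \cdot 2^{\Delta}$. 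To turn this heuristic into a rigorous bound matching $6\sqrt{3} \cdot 2^{\Delta}$, I would use a packing argument: translate a copy of a fundamental cell (or better, of a Minkowski-reduced fundamental domain, whose diameter is controlled by the successive minima) centred at each lattice point in the disc; these copies are disjoint and all contained in a slightly enlarged disc of radius $R + \rho$ where $\rho$ is the packing radius; comparing areas gives the count. The constant $6\sqrt{3}$ should emerge from bounding $\lambda_1(\mathcal L)^2 \le (2/\sqrt 3)\operatorname{vol}(\mathcal L)$ (the Hermite bound in dimension two) together with an inclusion like $(R + \lambda_2/2)^2 \le 3 R^2$ once $R$ dominates the successive minima — which holds because $\lambda_2(\mathcal L) \le 2\operatorname{vol}(\mathcal L)/\lambda_1(\mathcal L)$ and one can check $\lambda_2 \le 2 R$ in the regime $\Delta < m$.

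The main obstacle is the second half: getting the clean closed-form bound $6\sqrt{3}\cdot 2^{\Delta}$ rather than just $O(2^{\Delta})$. This requires being careful about the boundary term in the lattice-point count and about which reduced basis / fundamental domain to use so that its diameter is bounded purely in terms of $\det \mathcal L$ (via the Hermite constant $\gamma_2 = 2/\sqrt 3$) and not in terms of the possibly very skew given basis $(j, \tfrac 1 2), (2^{m+\ell}, 0)$. I would therefore first replace the given basis by a Lagrange-reduced one (as already invoked for Lemma~\ref{lemma:lattice-recover-tilde-r-shortest}), whose two vectors have norms $\lambda_1$ and $\lambda_2$ with $\lambda_1 \lambda_2 \le \tfrac{2}{\sqrt 3}\det\mathcal L$ and $\lambda_1^2 \le \tfrac{2}{\sqrt 3}\det\mathcal L$; then the fundamental parallelogram has diameter $\le \lambda_1 + \lambda_2$, and the disjoint-translates argument inside the disc of radius $R + (\lambda_1+\lambda_2)/2$ yields, after substituting $\det\mathcal L = 2^{m+\ell-1}$, $\ell = m-\Delta$ and $R = 2^{m-1/2}$, a bound of the stated form. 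Finally, for the efficiency remark, I would note that $6\sqrt 3 \cdot 2^\Delta$ vectors can be enumerated in time polynomial in $2^\Delta$ and $m$ by the usual two-dimensional enumeration over $\mathcal L$ once a reduced basis is in hand, so the whole procedure is efficient for small $\Delta$.
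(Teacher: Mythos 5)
Your first half is exactly the paper's argument: the vector $\vec u = (\alpha_0(z)/d, \tilde r/2) \in \mathcal L$ with $|\,\vec u\,| \le r/\sqrt 2 < 2^{m-1/2}$ and $2w_2 = \tilde r$ is precisely the witness used in App.~\ref{appendix:lattice-based-post-processing}, so that part is fine. The gap is in the counting step. Your plan is to bound the \emph{total} number of lattice points in the disc of radius $R = 2^{m-1/2}$ by a volume/packing argument, and you assert that $\lambda_2 \le 2R$ "in the regime $\Delta < m$". That assertion is false, and with it the whole count: take $z = 0$, so $j = j_0(0) = 0$ and $\mathcal L$ is spanned by $(0,\tfrac12)$ and $(2^{m+\ell},0)$; then $\lambda_1 = \tfrac12$, $\lambda_2 = 2^{2m-\Delta} \gg 2R$, and the disc contains about $2^{m+3/2}$ lattice points (all small multiples of $(0,\tfrac12)$), vastly more than $\deltaexpr$ for small $\Delta$. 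The same happens whenever $\gcd(r,z)$ is large, since then $\vec u$ itself is very short. So the lemma cannot be proved by bounding the number of lattice points in the disc; the paper instead splits into two cases. When $\lambda_2^{\perp} \ge 2^{m-1/2}$ one does \emph{not} enumerate the disc at all: $\vec u$ must then be $\pm \vec s_1$ (its components, scaled by two, are coprime), so a single candidate $2|\,s_{1,2}\,|$ suffices. Only when $\lambda_2^{\perp} < 2^{m-1/2}$ — which forces $\lambda_1 > 2^{m-\Delta-1/2}$, ruling out the pathology above — is the disc enumerated, and there the count is done by bounding the coefficient ranges $|\,m_2\,| < R/\lambda_2^{\perp}$ and (per $m_2$) at most $1 + 2R/\lambda_1$ values of $m_1$, then multiplying and using $\lambda_1\lambda_2^{\perp} = \det\mathcal L = 2^{m+\ell-1}$ and $\lambda_2^{\perp} \ge \sqrt3\,\lambda_2/2$.

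Even restricted to that second case, your packing route would not recover the stated constant without extra work: with circumradius of the fundamental cell bounded by $\lambda_2 < 2R/\sqrt3$, the disjoint-translates argument gives at most $\pi(1 + 2/\sqrt3)^2\, R^2/\det\mathcal L \approx 14.6 \cdot 2^{\Delta}$, which is weaker than $\deltaexpr \approx 10.39 \cdot 2^{\Delta}$. So to repair the proposal you need (i) the explicit case split on $\lambda_2^{\perp}$ versus $2^{m-1/2}$, and (ii) the coefficient-counting bound (or a sharper packing argument) in the enumeration case.
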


For further details, and the proofs of Lem.~\ref{lemma:continued-fractions-recover-tilde-r}--\ref{lemma:lattice-recover-tilde-r}, see App.~\ref{appendix:continued-fractions-based-post-processing} and App.~\ref{appendix:lattice-based-post-processing}.
For notes on slightly improving the constant $6 \sqrt{3}$ in Lem.~\ref{lemma:lattice-recover-tilde-r}, see~App.~\ref{appendix:lattice-based-post-processing-improved-bound}.

\subsection{Recovering the order $r$ from $\tilde r$}
\label{sec:find-order-recover-r-from-tilde-r}
Given $\tilde r = r / d$ where $d = \gcd(r, z)$, and~$g$, we may recover~$r$ when~$d$ is $cm$-smooth.

In what follows, we first formalize the notion of $cm$-smoothness in Sect.~\ref{sec:cm-smooth} and lower-bound the probability of~$d$ being $cm$-smooth in Sect.\ref{sec:cm-smooth-lower-bound}.
We then give algorithms for recovering~$r$, or a multiple~$r'$ of~$r$, from $\tilde r = r / d$ when~$d$ is $cm$-smooth, in Sect.~\ref{sec:recovering-multiple-of-r-given-r-tilde} thru Sect.~\ref{sec:recovering-r-given-r-tilde}.
To avoid invoking these algorithms for all candidates for~$\tilde r$ that are generated when solving not only~$j$ but also $j \pm 1, \, \ldots, \, j \pm B$ for~$\tilde r$, we also give an algorithm for filtering out good candidates for~$\tilde r$ in Sect.~\ref{sec:filtering-tilde-r}.

In Sect.~\ref{sec:solving-candidate-set-for-r}, we put all of these components together to efficiently recover~$r$ or~$r'$ from a set of integers known to contain $\tilde r = r/d$ where~$d$ is $cm$-smooth.

\subsubsection{The notion of $cm$-smoothness}
\label{sec:cm-smooth}
Throughout this paper, an integer is said to be $cm$-smooth if and only if it is positive and not divisible by any prime power greater than~$cm$.

\subsubsection{Bounding the probability of $d$ being $cm$-smooth}
\label{sec:cm-smooth-lower-bound}
When using Thm.~\ref{th:P-a0-universal} to lower-bound the probability of observing~$j$ that belongs to the $B$-neighborhood of $j_0(z)$ for $z \in [0, r) \inset \mathbb Z$, the peak index~$z$ is uniformly distributed on $[0, r) \inset \mathbb Z$ as the bound is independent of $\alpha_0(z)$ and hence of~$z$.

The probability of~$d$ being $cm$-smooth may then be lower-bounded:

\begin{lemma}
\label{lemma:probability-cm-smooth-d-uniform-z}
For~$z$ selected uniformly at random from $[0, r) \inset \mathbb Z$,
the probability that no prime power greater than $cm$ divides $d = \gcd(r, z)$ is lower-bounded by
\begin{align*}
1 - \frac{1}{c \log cm},
\end{align*}
for~$m$ such that $r < 2^m$ and $c \ge 1$ a parameter that may be freely selected.
\end{lemma}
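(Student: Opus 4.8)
The plan is to bound the probability that some prime power exceeding $cm$ divides $d = \gcd(r,z)$, and then subtract from one. Observe that if a prime power $p^k > cm$ divides $d$, then in particular $p^k \mid z$, so $z$ lies in the set of multiples of $p^k$ in $[0,r)$, which has size $\lfloor r/p^k \rfloor \le r/p^k$. Hence, for $z$ uniform on $[0,r) \cap \mathbb Z$, the probability that a fixed prime power $p^k$ divides $z$ is at most $1/p^k$. A union bound over all prime powers $p^k > cm$ that could possibly occur then gives
\begin{align*}
  \Pr[\exists\, p^k > cm : p^k \mid d]
  \le
  \sum_{\substack{p \text{ prime},\ k \ge 1 \\ p^k > cm}} \frac{1}{p^k}.
\end{align*}

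The next step is to estimate this tail sum. For each prime $p$, the smallest power exceeding $cm$ is $p^{k_p}$ where $k_p = \lfloor \log_p(cm) \rfloor + 1$, and the geometric series over $k \ge k_p$ contributes $p^{-k_p}/(1 - 1/p) \le 2 p^{-k_p} \le 2/(cm)$ per prime — but this is too crude, since there are more than $cm$ primes to sum over. Instead I would split by whether $p \le cm$ or $p > cm$. For primes $p > cm$, only $k=1$ can give $p^k$ in range at the low end but all $k \ge 1$ qualify; summing $\sum_{p > cm} \sum_{k \ge 1} p^{-k} = \sum_{p > cm} \frac{1}{p-1}$ still diverges, so the union bound over \emph{all} large prime powers is not directly summable. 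The resolution is that we only need prime powers that actually divide $r$: since $r < 2^m$, the number of distinct primes dividing $r$ is at most $m$ (indeed at most $m/\log(\text{something})$, but $m$ suffices), and for each such prime $p$ there is at most one relevant "extra" contribution. More carefully, for each prime $p \mid r$, the event "$p^k \mid d$ for some $p^k > cm$" has probability at most $\sum_{k : p^k > cm} 1/p^k \le \frac{1}{p^{k_p}} \cdot \frac{1}{1-1/p} \le \frac{2}{p^{k_p}}$, and since $p^{k_p} > cm$ this is at most $2/(cm)$; summing over the at most $m$ primes dividing $r$ gives a bound of $2/c$, which is weaker than claimed. To get the sharper $\frac{1}{c \log cm}$ one must be more careful: group primes by the value of $k_p$, or note that $p^{k_p} \ge \max(cm, p)$, so the per-prime bound is $\le \frac{2}{\max(cm,p)}$, and $\sum_{p \mid r} \frac{2}{\max(cm,p)} \le \frac{2}{cm}\cdot \#\{p \mid r : p \le cm\} + 2\sum_{p \mid r,\, p > cm} \frac 1p$; the first term is $\le \frac{2}{cm}\cdot \frac{cm}{\log(cm)}$ by the prime-counting bound $\pi(x) \lesssim x/\log x$, giving $\approx 2/\log(cm)$, and the second term is small since $r < 2^m$ forces few large prime factors.

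The main obstacle, then, is the careful bookkeeping of the tail sum $\sum_{p^k > cm,\ p^k \mid r} p^{-k}$ to land exactly on $\frac{1}{c \log cm}$ rather than some looser constant multiple. I would handle this by (i) noting each prime $p \mid r$ contributes a geometric tail dominated by its first term $p^{-k_p}$ with $p^{k_p} > cm$; (ii) bounding the number of primes $p \mid r$ with $p \le cm$ using $r < 2^m$ together with a Chebyshev-type estimate on the number of primes up to $cm$; and (iii) observing that primes $p \mid r$ with $p > cm$ are so few (at most $m/\log(cm)$ of them, since their product divides $r < 2^m$) that their combined contribution $\sum p^{-1}$ is negligible. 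Combining these and subtracting from $1$ yields the stated lower bound $1 - \frac{1}{c \log cm}$ on the probability that $d$ is $cm$-smooth. Throughout, the hypotheses $r < 2^m$ and $c \ge 1$ are exactly what make the estimates go through, and the fact that $\log$ here denotes the base-two logarithm (as fixed in Sect.~\ref{sec:quantum-order-finding}) must be tracked to match the constant.
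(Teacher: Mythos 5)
Your proposal has the right skeleton (union bound over prime powers exceeding $cm$ that could divide $d$, restricted to those dividing $r$), but it never closes the gap: by your own account you land on $2/c$, or $\approx 2/\log(cm)$, both of which miss the stated bound $\frac{1}{c\log cm}$ by a factor on the order of $2c$. The missing idea is the correct way to \emph{count} the relevant primes. For each prime $q$, the event ``some power $q^k > cm$ divides $d$'' is equivalent to the single event ``$q^{e}\mid d$'' for $e$ the least exponent with $q^{e} > cm$ --- so there is no geometric tail to sum and no factor of $2$; and this event is only possible if $q^{e}\mid r$. The minimal such powers $q^{e}$, one per prime, are pairwise coprime and each exceeds $cm$, so their product divides $r < 2^m$; hence there are at most $\log r/\log(cm) < m/\log(cm)$ of them. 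Since $q^e \mid r$, exactly $r/q^e$ of the $r$ values of $z$ are divisible by $q^e$, so each event has probability $1/q^{e} < 1/(cm)$, and the union bound gives
\begin{align*}
  \frac{\log r}{\log cm}\cdot\frac{1}{cm} < \frac{m}{\log cm}\cdot\frac{1}{cm} = \frac{1}{c\log cm},
\end{align*}
which is the paper's proof.

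Your substitute counts --- ``at most $m$ distinct primes divide $r$'' and the Chebyshev bound $\pi(cm)\lesssim cm/\log cm$ for the small primes --- are both too weak here, because they do not use the fact that each relevant prime must contribute a factor greater than $cm$ to $r$. (You do invoke essentially the right divisibility argument for the primes $p > cm$ in your item (iii), but not for the primes $p \le cm$ whose higher powers exceed $cm$, and that is exactly where your bound degrades to $2/\log(cm)$.) Apply the product-divides-$r$ argument uniformly to all relevant primes, drop the geometric-tail factor of $2$, and the claimed constant comes out exactly.
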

\begin{proof}
There are at most~$\log r / \log cm$ prime powers~$q^e > cm$ that divide~$r$ (where~$q$ is a distinct prime for each power).
For each such prime power~$q^e$, the probability of~$q^e$ dividing~$z$ is~$1/q^e < 1/(cm)$.

By using that $\log r < m$, and taking a union bound, we obtain
\begin{align*}
  \frac{\log r}{\log cm} \cdot \frac{1}{cm} < \frac{1}{c \log cm},
\end{align*}
and so the lemma follows.
\end{proof}

In what follows, we primarily think of~$c$ as a constant, although it is possible to let~$c$ depend on~$m$.
To ensure that the classical post-processing is efficient when~$c$ depends on~$m$, we require $c = \ordo(\text{poly}(m))$ and $m = \ordo(\poly(\log r))$.

\subsubsection{Recovering a multiple $r'$ of $r$ from $\tilde r = r/d$ when $d$ is $cm$-smooth}
\label{sec:recovering-multiple-of-r-given-r-tilde}
Alg.~\ref{alg:recover-multiple-of-r} recovers a positive integer multiple~$r'$ of $r = d \cdot \tilde r$ given~$\tilde r$ when~$d$ is $cm$-smooth:

\begin{breakablealgorithm}
\caption{Recovers a positive integer multiple~$r'$ of $r = d \cdot \tilde r$ given~$\tilde r$ when~$d$ is $cm$-smooth.
$\:$
{\bfseries \emph{Inputs:}} $\tilde r$, $g$, $c$, $m$.
$\:$
{\bfseries \emph{Returns:}} $r'$ or~$\neg$ to signal failure to recover~$r'$.}
\label{alg:recover-multiple-of-r}
\begin{pseudocode}
  \item If $\tilde r \not\in [1, 2^m) \inset \mathbb Z$:
  \begin{pseudocode}
    \item Return $\neg$ to signal failure to recover~$r'$.
  \end{pseudocode}

  \item Let $r' \leftarrow \tilde r$ and $x \leftarrow g^{\tilde r}$. \label{alg:recover-multiple-step:exp-init}
  \item For $q \in \mathcal{P}(cm)$:

  \emph{Note: Iterate over $\mathcal P(cm)$ in increasing order so that $q \leftarrow 2, \, 3, \, 5, \, \ldots$}

  \begin{pseudocode}
    \item If $x = 1$:
    \begin{pseudocode}
      \item Return~$r'$.
    \end{pseudocode}
    \item Let $e \leftarrow \lfloor \log_q(cm) \rfloor$.
    \item Let $x \leftarrow x^{q^e}$ and $r' \leftarrow r' \cdot q^e$. \label{alg:recover-multiple-step:exp1}
  \end{pseudocode}
  \item \label{alg:recover-multiple-step:return}
  If $x \neq 1$:
  \begin{pseudocode}
    \item Return $\neg$ to signal failure to recover~$r'$.
  \end{pseudocode}
  \item Return~$r'$.
\end{pseudocode}
\end{breakablealgorithm}
Above in Alg.~\ref{alg:recover-multiple-of-r}, and in what follows, $\mathcal P(B)$ denotes the set of primes $\le B$.

The runtime of Alg.~\ref{alg:recover-multiple-of-r} is dominated by a sequence of exponentiations, which amount to exponentiating~$g$.
By Lem.~\ref{lemma:exponent-length-alg-recover-multiple-of-r}, the total exponent length is $\ordo(cm)$~bits:

\begin{lemma}
\label{lemma:exponent-length-alg-recover-multiple-of-r}
Alg.~\ref{alg:recover-multiple-of-r} exponentiates~$g$ to an exponent of length $\ordo(cm)$~bits.
\end{lemma}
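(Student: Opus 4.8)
The plan is to track the total bit length of all exponents to which $g$ is raised across the execution of Alg.~\ref{alg:recover-multiple-of-r}. The relevant exponentiations are the initialization $x \leftarrow g^{\tilde r}$ in Step~\ref{alg:recover-multiple-step:exp-init}, and then, for each prime $q \in \mathcal P(cm)$, the update $x \leftarrow x^{q^e}$ in Step~\ref{alg:recover-multiple-step:exp1} with $e = \lfloor \log_q(cm) \rfloor$. Since $x$ is itself a power of $g$ at every stage, each such update amounts to raising $g$ to an additional exponent of bit length $\log q^e$. So the total exponent length is $\log \tilde r + \sum_{q \in \mathcal P(cm)} \log q^{\lfloor \log_q (cm) \rfloor}$ bits, up to rounding.

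First I would bound the initial term: by Step~1, $\tilde r \in [1, 2^m)$, so $\log \tilde r < m = \ordo(cm)$ since $c \ge 1$. Next I would bound each factor $q^{\lfloor \log_q(cm) \rfloor}$: by definition of the floor, $q^{\lfloor \log_q(cm)\rfloor} \le q^{\log_q(cm)} = cm$, hence $\log q^{\lfloor \log_q(cm)\rfloor} \le \log(cm)$. Summing over the primes $q \le cm$, of which there are $\pi(cm) = \ordo(cm / \log(cm))$ by Chebyshev's bound (or just the trivial bound $\pi(cm) \le cm$, which already suffices for an $\ordo$ statement), gives $\sum_{q \in \mathcal P(cm)} \log q^{\lfloor \log_q(cm)\rfloor} \le \pi(cm) \cdot \log(cm) = \ordo(cm)$. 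Adding the $\ordo(cm)$ contribution from $\log \tilde r$ yields a total of $\ordo(cm)$ bits.

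Alternatively, and perhaps more cleanly, I would note that $\prod_{q \in \mathcal P(cm)} q^{\lfloor \log_q(cm)\rfloor}$ divides $\lcm(1, 2, \ldots, \lfloor cm \rfloor)$, whose logarithm is $\ordo(cm)$ by the prime number theorem (equivalently $\psi(x) = \ordo(x)$ for Chebyshev's $\psi$ function, which is elementary); this bounds the sum of the $\log q^{e}$ terms directly. Either route gives the claim. The only mild subtlety is accounting for the per-exponentiation rounding — each of the $\ordo(cm / \log cm)$ iterations contributes at most one extra bit from taking $\lceil \log q^e \rceil$ rather than $\log q^e$ — but this adds only $\ordo(cm/\log cm) = \ordo(cm)$ in total and does not change the bound. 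I do not expect any real obstacle here; the statement is essentially a bookkeeping consequence of $q^{\lfloor \log_q cm \rfloor} \le cm$ together with a standard bound on $\pi(cm)$ or on $\lcm(1, \ldots, \lfloor cm \rfloor)$.
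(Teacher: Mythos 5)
Your proof is correct and follows essentially the same route as the paper's: bound the initial exponent by $m$, note that $q^{\lfloor\log_q cm\rfloor}\le cm$ so each of the $\ordo(cm/\ln cm)$ primes contributes at most $\lceil\log cm\rceil$ bits in step~\ref{alg:recover-multiple-step:exp1}, and sum. One small caveat: your parenthetical claim that the trivial bound $\pi(cm)\le cm$ ``already suffices'' is incorrect, since it would only yield $\ordo(cm\log cm)$; this does not affect your argument, however, as the main line correctly invokes Chebyshev's bound (and your alternative via $\lcm(1,\ldots,\lfloor cm\rfloor)$ is also sound).
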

\begin{proof}
The number of primes in $\mathcal P(cm)$ is $\ordo(cm / \ln cm)$.
For each such prime, the exponent in step~\ref{alg:recover-multiple-step:exp1} is at most of length $\ceil{\log cm}$ bits since $q^e \le cm$.

The total exponent length in bits is hence at most
\begin{align*}
  m + \ceil{\log cm} \cdot \ordo(cm / \ln cm) = \ordo(cm)
\end{align*}
where we have also accounted for step~\ref{alg:recover-multiple-step:exp-init} exponentiating~$g$ to $\tilde r \in [1, 2^m) \inset \mathbb Z$.
\end{proof}

The exponent length is usually expected to be much shorter when~$d$ is $cm$-smooth, as the algorithm aborts as soon as $x = 1$ instead of running to completion.

\subsubsection{Recovering $r$ from $\tilde r = r/d$ when $d$ is $cm$-smooth}
\label{sec:recovering-r-given-r-tilde}
Alg.~\ref{alg:recover-r} recovers $r = d \cdot \tilde r$ given~$\tilde r$ when~$d$ is $cm$-smooth:

\begin{breakablealgorithm}
  \caption{Recovers $r = d \cdot \tilde r$ given~$\tilde r$ when~$d$ is $cm$-smooth. \\
  {\bfseries \emph{Inputs:}} $\tilde r$, $g$, $c$, $m$.
  $\:$
  {\bfseries \emph{Returns:}} $r$ or~$\neg$ to signal failure to recover~$r$.}
  \label{alg:recover-r}
  \begin{pseudocode}
    \item If $\tilde r \not\in [1, 2^m) \inset \mathbb Z$:
    \begin{pseudocode}
      \item Return $\neg$ to signal failure to recover~$r$.
    \end{pseudocode}

    \item Let $x \leftarrow g^{\tilde r}$. \label{alg:recover-order-step:exp-init}
    \item If $x = 1$:
    \begin{pseudocode}
      \item Return~$\tilde r$.
    \end{pseudocode}
    \item Let~$S$ be an empty stack.

    \item For $q \in \mathcal{P}(cm)$:

    \emph{Note: Iterate over $\mathcal P(cm)$ in increasing order so that $q \leftarrow 2, \, 3, \, 5, \, \ldots$}

    \begin{pseudocode}
      \item Let $e \leftarrow \lfloor \log_q(cm) \rfloor$.
      \item Push $(x, q, e)$ onto~$S$.
      \item Let $x \leftarrow x^{q^e}$. \label{alg:recover-order-step:exp1}
      \item If $x = 1$:
      \begin{pseudocode}
        \item Stop iterating and go to step~\ref{alg:recover-order-step:backtrack}.
      \end{pseudocode}
    \end{pseudocode}

    \item \label{alg:recover-order-step:backtrack}
    If $x \neq 1$:
    \begin{pseudocode}
      \item Return $\neg$ to signal failure to recover~$r$.
    \end{pseudocode}

    \item
    Let $d \leftarrow 1$.

    \item \label{alg:recover-order-step:pop}
    While~$S$ is not the empty stack, pop $(x, q, e)$ from~$S$:

    \begin{pseudocode}
      \item Let $x \leftarrow x^d$. \label{alg:recover-order-step:exp2}

      \item For $i \leftarrow 1, \, \ldots, \, e$:
      \begin{pseudocode}
        \item If $x = 1$:
        \begin{pseudocode}
          \item Stop iterating and go to step~\ref{alg:recover-order-step:pop}.
        \end{pseudocode}
        \item Let $x \leftarrow x^q$ and $d \leftarrow d \cdot q$. \label{alg:recover-order-step:exp3}
      \end{pseudocode}
    \end{pseudocode}

    \item Return $d \cdot \tilde r$.
  \end{pseudocode}
\end{breakablealgorithm}

The runtime of Alg.~\ref{alg:recover-r} is dominated by multiple sequences of exponentiations, which all amount to exponentiating~$g$.
By Lem.~\ref{lemma:exponent-length-alg-recover-r}, the total exponent length is $\ordo(cm^2 / \log cm)$~bits:

\begin{lemma}
\label{lemma:exponent-length-alg-recover-r}
Alg.~\ref{alg:recover-r} performs multiple exponentiations of~$g$.
The total exponent length in all of these exponentiations is $\ordo(cm^2 / \log cm)$~bits.
\end{lemma}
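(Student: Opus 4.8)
The plan is to track the total exponent length across the three distinct phases of Algorithm~\ref{alg:recover-r} and add them up. First I would handle the initial exponentiation in step~\ref{alg:recover-order-step:exp-init}, which raises $g$ to $\tilde r \in [1, 2^m) \inset \mathbb Z$, contributing $\ordo(m)$~bits. This is the easy part, directly analogous to Lem.~\ref{lemma:exponent-length-alg-recover-multiple-of-r}.

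Next I would bound the forward pass, the loop over $q \in \mathcal P(cm)$ that pushes triples onto the stack and sets $x \leftarrow x^{q^e}$ in step~\ref{alg:recover-order-step:exp1}. As in the proof of Lem.~\ref{lemma:exponent-length-alg-recover-multiple-of-r}, each such exponentiation has an exponent $q^e \le cm$, hence of length $\ceil{\log cm}$~bits, and there are $\ordo(cm / \ln cm)$ primes in $\mathcal P(cm)$, so this phase contributes $\ordo(cm / \ln cm) \cdot \ordo(\log cm) = \ordo(cm)$~bits. The new ingredient is the backtracking phase in step~\ref{alg:recover-order-step:pop}: for each of the up to $\ordo(cm / \ln cm)$ triples popped from $S$, step~\ref{alg:recover-order-step:exp2} raises $x$ to the running value of $d$, and steps~\ref{alg:recover-order-step:exp3} perform up to $e \le \ceil{\log cm}$ further exponentiations by $q$. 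The key observation is that $d$ only ever accumulates prime power factors $q^{e}$ with $q^e \le cm$, one per prime $q \in \mathcal P(cm)$, so throughout the algorithm $d < \prod_{q \in \mathcal P(cm)} cm$, giving $\log d = \ordo((cm/\ln cm) \cdot \log cm) = \ordo(cm)$. Hence each invocation of step~\ref{alg:recover-order-step:exp2} uses an exponent of length $\ordo(cm)$~bits; summing over the $\ordo(cm/\ln cm)$ pops gives $\ordo(cm^2/\ln cm)$~bits, which dominates. The contribution of the step~\ref{alg:recover-order-step:exp3} exponentiations (exponent $q \le cm$, hence $\ordo(\log cm)$~bits, with at most $\ordo(cm/\ln cm) \cdot \ceil{\log cm} = \ordo(cm)$ such steps in total) is $\ordo(cm)$~bits, absorbed into the above.

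Adding the three contributions $\ordo(m) + \ordo(cm) + \ordo(cm^2/\ln cm) = \ordo(cm^2/\log cm)$ yields the claimed bound, since $\ln$ and $\log$ differ only by a constant. The main obstacle is getting the backtracking bound right: one must be careful that the exponent in step~\ref{alg:recover-order-step:exp2} is the \emph{current} value of $d$, not its final value, but since $d$ is nondecreasing it suffices to bound its final value, and the crude product bound $d \le (cm)^{|\mathcal P(cm)|}$ is enough. A secondary subtlety is confirming that the number of stack entries is at most $|\mathcal P(cm)| = \ordo(cm/\ln cm)$, which is immediate since each iteration of the forward loop pushes exactly one triple and the loop runs over $\mathcal P(cm)$.
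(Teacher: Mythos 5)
Your decomposition into the three phases matches the paper's proof, and your bounds for the initial exponentiation and for steps~\ref{alg:recover-order-step:exp1} and~\ref{alg:recover-order-step:exp3} are fine. The gap is in how you bound the exponent $d$ in step~\ref{alg:recover-order-step:exp2}. You bound $d$ by $\prod_{q \in \mathcal P(cm)} q^{\lfloor \log_q cm\rfloor} \le (cm)^{|\mathcal P(cm)|}$, so $\log d = \ordo(cm)$, and then claim that summing over the $\ordo(cm/\ln cm)$ pops gives $\ordo(cm^2/\ln cm)$ bits. That multiplication is wrong: $\ordo(cm) \cdot \ordo(cm/\ln cm) = \ordo(c^2 m^2/\ln cm)$, so your argument only establishes $\ordo(c^2 m^2/\log cm)$, which matches the stated bound only when $c = \ordo(1)$. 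The paper explicitly allows $c$ to grow with $m$ (it requires only $c = \ordo(\poly(m))$ for the post-processing to be efficient), so the lemma as stated is not recovered by your estimate.

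The missing observation is that the crude product bound on $d$ is far from tight: throughout the backward pass, $d$ divides the order of $g^{\tilde r}$ (each factor $q$ is only multiplied in after verifying $x \neq 1$, and the backward pass is reached only when that order is $cm$-smooth), and this order is $r/\gcd(r,\tilde r) \le r < 2^m$. Hence each invocation of step~\ref{alg:recover-order-step:exp2} uses an exponent of at most $m$ bits --- not $\ordo(cm)$ bits --- and summing over the $\ordo(cm/\ln cm)$ pops gives $\ordo(cm^2/\log cm)$, as claimed. With that substitution your argument goes through and coincides with the paper's.
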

\begin{proof}
The number of primes in $\mathcal P(cm)$ is $\ordo(cm / \ln cm)$.
For each such prime:
In step~\ref{alg:recover-order-step:exp1}, the exponent is at most of length $\ceil{\log cm}$ bits since $q^e \le cm$.
In step~\ref{alg:recover-order-step:exp2}, it is at most of length~$m$ bits since $d \le r < 2^m$.
In step~\ref{alg:recover-order-step:exp3}, it is at most of length $\lfloor \log_q(cm) \rfloor \cdot \ceil{\log q}$ bits.
The total exponent length in bits is hence at most
\begin{align*}
  m + \left( \ceil{\log cm} + m + \lfloor \log_q(cm) \rfloor \cdot \ceil{\log q} \right) \cdot \ordo(cm / \ln cm) = \ordo(c m^2 / \log cm)
\end{align*}
where we have also accounted for step~\ref{alg:recover-order-step:exp-init} exponentiating~$g$ to $\tilde r \in [1, 2^m) \inset \mathbb Z$.
\end{proof}

The exponent length is usually expected to be much shorter when~$d$ is $cm$-smooth, as the algorithm aborts as soon as $x = 1$ instead of running to completion, and as~$d$ is typically much smaller than~$r$ in step~\ref{alg:recover-order-step:exp2} of the algorithm.

A better asymptotic worst-case runtime may be achieved by instead performing a binary tree search, as in Alg.~\ref{alg:recover-r-tree} below.
Note however that it will often be the case in practice that Alg.~\ref{alg:recover-r} outperforms Alg.~\ref{alg:recover-r-tree}, since~$d$ is likely to only have a few small prime factors.
When this is the case, $d$ is found more quickly by Alg.~\ref{alg:recover-r}.

\begin{breakablealgorithm}
  \caption{Recovers $r = d \cdot \tilde r$ given~$\tilde r$ when~$d$ is $cm$-smooth. \\
  {\bfseries \emph{Inputs:}} $\tilde r$, $g$, $c$, $m$.
  $\:$
  {\bfseries \emph{Returns:}} $r$ or~$\neg$ to signal failure to recover~$r$.}
  \label{alg:recover-r-tree}
  \begin{pseudocode}
    \item If $\tilde r \not\in [1, 2^m) \inset \mathbb Z$:
    \begin{pseudocode}
      \item Return $\neg$ to signal failure to recover~$r$.
    \end{pseudocode}
    \item Let {\sc recursive}($x$, $F = \{q_1, \, \ldots, \, q_l\}$) be the following function:
    \begin{pseudocode}
      \item If $l = 1$:
      \begin{pseudocode}
        \item Return $\{ (q_1, x)  \}$.
      \end{pseudocode}
      \item Let $F_L \leftarrow \{ q_1, \, \ldots, \, q_{\floor{l / 2}} \}$ and $F_R \leftarrow \{ q_{\floor{l / 2} + 1}, \, \ldots, \, q_l \}$.
      \item Let $x_L \leftarrow x^{d_L}$ and $x_R \leftarrow x^{d_R}$, where \label{alg:recover-r-tree-step:exp-1}
      \begin{align*}
        d_L = \prod_{q \in F_R} q^{\lfloor \log_q(cm) \rfloor}
        \quad\quad \text{ and } \quad\quad
        d_R = \prod_{q \in F_L} q^{\lfloor \log_q(cm) \rfloor}.
      \end{align*}
      \item Return $\text{\sc recursive}(x_L, \, F_L) \, \cup \, \text{\sc recursive}(x_R, \, F_R)$.
    \end{pseudocode}

    \item Let $x \leftarrow g^{\tilde r}$ and $d \leftarrow 1$. \label{alg:recover-r-tree-step:exp-init}

    \item Let $T \leftarrow \text{\sc recursive}(x, \, \mathcal P(cm)) = \{ (q_1, x_1), \, \ldots, \, (q_l, x_l) \}$.

    \item For $(q_i, x_i) \in T$:
    \begin{pseudocode}
      \item Let $e_i \leftarrow 0$ and $e_{i, \max} \leftarrow \lfloor \log_{q_i}(cm) \rfloor$.
      \item While $x_i \neq 1$:
      \begin{pseudocode}
        \item If $e_i = e_{i, \max}$:
        \begin{pseudocode}
          \item Return $\neg$ to signal failure to recover~$r$.
        \end{pseudocode}

        \item Let $x_i \leftarrow x_i^{q_i}$, $d \leftarrow d \cdot q_i$ and $e_i \leftarrow e_i + 1$. \label{alg:recover-r-tree-step:exp-2}
      \end{pseudocode}
    \end{pseudocode}

    \item Return $r = d \cdot \tilde r$.
  \end{pseudocode}
\end{breakablealgorithm}

The runtime of Alg.~\ref{alg:recover-r-tree} is dominated by multiple sequences of exponentiations, which all amount to exponentiating~$g$.
By Lem.~\ref{lemma:exponent-length-alg-recover-r-tree}, the total exponent length in all of these exponentiations is $\ordo(cm \log cm)$~bits:
\begin{lemma}
  \label{lemma:exponent-length-alg-recover-r-tree}
  Alg.~\ref{alg:recover-r-tree} performs multiple exponentiations of~$g$.
  The total exponent length in all of these exponentiations is $\ordo(cm \log cm)$~bits.
\end{lemma}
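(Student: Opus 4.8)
The plan is to account for the exponent length contributed by each of the three places where $g$ is exponentiated in Alg.~\ref{alg:recover-r-tree}: the initialization in step~\ref{alg:recover-r-tree-step:exp-init}, the recursive splitting in step~\ref{alg:recover-r-tree-step:exp-1}, and the final per-prime loop in step~\ref{alg:recover-r-tree-step:exp-2}. For the initialization, $g$ is raised to $\tilde r \in [1, 2^m) \inset \mathbb Z$, contributing at most $m$ bits, which is $\ordo(m) = \ordo(cm\log cm)$ and hence negligible. For step~\ref{alg:recover-r-tree-step:exp-2}, each prime $q_i \in \mathcal P(cm)$ is handled by at most $e_{i,\max} = \floor{\log_{q_i}(cm)}$ squarings-by-$q_i$, so the exponent length there is at most $\floor{\log_{q_i}(cm)} \cdot \ceil{\log q_i} \le \ceil{\log cm}$ bits per prime; summing over the $\ordo(cm/\ln cm)$ primes in $\mathcal P(cm)$ gives $\ordo(cm)$ bits total, again negligible compared to the target bound.

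The dominant term comes from the recursive calls in step~\ref{alg:recover-r-tree-step:exp-1}. The key observation is that {\sc recursive}$(x, \mathcal P(cm))$ builds a binary tree of depth $\ordo(\log l)$ where $l = |\mathcal P(cm)| = \ordo(cm/\ln cm)$, so the depth is $\ordo(\log cm)$. At each internal node, two exponentiations are performed, with exponents $d_L = \prod_{q \in F_R} q^{\floor{\log_q(cm)}}$ and $d_R = \prod_{q \in F_L} q^{\floor{\log_q(cm)}}$. I would bound the bit length of such a product by $\sum_{q \in F} \floor{\log_q(cm)} \cdot \ceil{\log q} \le \sum_{q \in F} \ceil{\log cm} = |F| \cdot \ceil{\log cm}$. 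The crucial point is that across all nodes at a fixed level of the tree, the index sets $F$ partition (a subset of) $\mathcal P(cm)$, so the total of $|F|$ over one level is at most $l$, and hence the total exponent length at one level is $\ordo(l \log cm)$. Multiplying by the $\ordo(\log cm)$ levels gives $\ordo(l \log^2 cm) = \ordo((cm/\ln cm)\log^2 cm) = \ordo(cm \log cm)$ bits, which is the claimed bound and dominates the other contributions.

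The main obstacle is getting the level-sum argument stated cleanly: one must be careful that at each level the $F$-sets for the $d_L$, $d_R$ exponents are taken over the \emph{complementary} halves $F_R$, $F_L$, yet these complementary halves, ranging over all nodes at that level, still only double-count within a factor of two of a partition of $\mathcal P(cm)$ — so the per-level total $|F|$-sum is $\ordo(l)$ rather than something larger. Once that combinatorial bookkeeping is pinned down, the rest is the same routine prime-counting estimate ($|\mathcal P(cm)| = \ordo(cm/\ln cm)$ by the prime number theorem, as already used in Lem.~\ref{lemma:exponent-length-alg-recover-r}) together with $q^{\floor{\log_q(cm)}} \le cm$ used to bound each individual prime-power factor's bit length by $\ceil{\log cm}$.
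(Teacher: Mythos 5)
Your proposal is correct and follows essentially the same route as the paper's proof: bound the contribution of the recursive tree by noting that each of the $\ordo(\log cm)$ levels involves each prime of $\mathcal P(cm)$ at most once (so at most $\ceil{\log cm}\cdot\ordo(cm/\ln cm)$ exponent bits per level), and observe that the initialization and the final per-prime loop contribute only $\ordo(cm)$ bits. Your extra care about the complementary halves $F_L$, $F_R$ is sound --- in fact the two exponents at a node together cover exactly $F_L \cup F_R = F$, so there is no double counting at all --- and the rest matches the paper's accounting.
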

\begin{proof}
The number of primes in $\mathcal P(cm)$ is $\ordo(cm / \ln cm)$.

In the first part of the algorithm, a tree is traversed by calling {\sc recursive} to construct~$T$.
The exponent length in bits at each level of the tree is at most $\ceil{\log cm} \cdot \ordo(cm / \ln cm)$ as $q^e \le cm$.
There are at most $\ceil{\log cm}$ levels.

Hence, the total exponent length in bits in all invocations of step~\ref{alg:recover-r-tree-step:exp-1} is
\begin{align*}
  \ceil{\log cm}^2 \cdot \ordo(cm / \ln cm) = \ordo(cm \log cm).
\end{align*}

In the second part, for each of the $\ordo(cm / \ln cm)$ entries in~$T$, the exponent length in bits in step~\ref{alg:recover-r-tree-step:exp-2} is at most $\lfloor \log_q(cm) \rfloor \cdot \ceil{\log q} = \ordo(\log cm)$.
Hence, the total exponent length in bits in the second part of the algorithm is
\begin{align*}
  \lfloor \log_q(cm) \rfloor \cdot \ceil{\log q} \cdot \ordo(cm / \ln cm) = \ordo(cm),
\end{align*}
for a total exponent length in bits of
\begin{align*}
  m + \ordo(cm) + \ordo(cm \log cm) = \ordo(cm \log cm),
\end{align*}
where we have also accounted for step~\ref{alg:recover-r-tree-step:exp-init} exponentiating~$g$ to $\tilde r \in [1, 2^m) \inset \mathbb Z$.
\end{proof}

\subsubsection{Filtering candidates for $\tilde r = r / d$ when $d$ is $cm$-smooth}
\label{sec:filtering-tilde-r}
Let $\mathcal S = \{ \tilde r_1, \, \ldots, \, \tilde r_l \}$ be a set of~$l$ candidates for $\tilde r = r/d$ where~$d$ is $cm$-smooth.

Alg.~\ref{alg:filter-tilde-r} then returns the subset~$\mathcal S'$ consisting of all~$\tilde r_i$ in~$\mathcal S$ that are such that~$d_i \cdot \tilde r_i$ is a positive integer multiple of~$r$, for~$d_i$ a $cm$-smooth integer:

\begin{breakablealgorithm}
  \caption{Returns the subset $\mathcal S'$ consisting of all~$\tilde r_i$ in $\mathcal S = \{ \tilde r_1, \, \ldots, \, \tilde r_l \}$ that are such that $d_i \cdot \tilde r_i$ is a positive integer multiple of~$r$, for~$d_i$ a $cm$-smooth integer. \\
  {\bfseries \emph{Inputs:}} $\mathcal S$, $g$, $c$, $m$.
  $\:$
  {\bfseries \emph{Returns:}} $\mathcal S' \subseteq \mathcal S$.}
  \label{alg:filter-tilde-r}
  \begin{pseudocode}
    \item Let $x \leftarrow g^e$ where
    \begin{align*}
      e = \prod_{q \in \mathcal P(cm)} q^{\lfloor \log_q cm \rfloor}.
    \end{align*} \label{alg:filter-tilde-r-step:pre-compute}
    \item Let~$\mathcal S'$ be an empty set.
    \item For $\tilde r_i \in \mathcal S$:
    \begin{pseudocode}
      \item If $\tilde r_i \in [1, 2^m) \inset \mathbb Z$ and $x^{\tilde r_i} = 1$: \label{alg:filter-tilde-r-step:test}
      \begin{pseudocode}
        \item Add~$\tilde r_i$ to~$\mathcal S'$.
      \end{pseudocode}
    \end{pseudocode}
    \item Return~$\mathcal S'$.
  \end{pseudocode}
\end{breakablealgorithm}

Note that step~\ref{alg:filter-tilde-r-step:pre-compute} that pre-computes~$x$ depends only on~$g$, $c$ and~$m$.
The actual test of each candidate for~$\tilde r$ in~$\mathcal S$ is performed in step~\ref{alg:filter-tilde-r-step:test}.
The set~$\mathcal S$ typically becomes available incrementally.
It may be filtered incrementally by pre-computing~$x$ and then executing step~\ref{alg:filter-tilde-r-step:test} for each candidate as it becomes available.

As for the other algorithms in this section, the runtime of Alg.~\ref{alg:filter-tilde-r} is dominated by multiple sequences of exponentiations, which all amount to exponentiating~$g$.
By Lem.~\ref{lemma:exponent-length-alg-recover-tilde-r}, the total exponent length is $\ordo((c+l)m)$~bits:

\begin{lemma}
\label{lemma:exponent-length-alg-recover-tilde-r}
Alg.~\ref{alg:filter-tilde-r} performs multiple exponentiations of~$g$.
The total exponent length in all of these exponentiations is $\ordo((c + l)m)$~bits.
\end{lemma}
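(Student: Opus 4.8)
The plan is to account, piece by piece, for every exponentiation of~$g$ that Alg.~\ref{alg:filter-tilde-r} performs, exactly as in the proofs of Lem.~\ref{lemma:exponent-length-alg-recover-multiple-of-r}--\ref{lemma:exponent-length-alg-recover-r-tree}. There are only two sources of exponentiations: the single pre-computation of~$x$ in step~\ref{alg:filter-tilde-r-step:pre-compute}, and the~$l$ tests $x^{\tilde r_i} = 1$ performed in step~\ref{alg:filter-tilde-r-step:test}, one for each candidate in~$\mathcal S$. I would bound the exponent length of each and then sum.

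First I would bound the exponent~$e = \prod_{q \in \mathcal P(cm)} q^{\lfloor \log_q cm \rfloor}$ in step~\ref{alg:filter-tilde-r-step:pre-compute}. Since $\mathcal P(cm)$ contains $\ordo(cm / \ln cm)$ primes by the prime number theorem, and each factor satisfies $q^{\lfloor \log_q cm \rfloor} \le cm$ and hence contributes at most $\ceil{\log cm}$ bits, the bit length of~$e$ is at most $\ceil{\log cm} \cdot \ordo(cm / \ln cm) = \ordo(cm)$. Concretely, computing~$g^e$ by repeated exponentiation $x \leftarrow x^{q^{\lfloor \log_q cm\rfloor}}$ over the primes $q \in \mathcal P(cm)$ exponentiates~$g$ to a total exponent length of $\ordo(cm)$ bits. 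Next, each test in step~\ref{alg:filter-tilde-r-step:test} computes $x^{\tilde r_i}$ where the guard $\tilde r_i \in [1, 2^m) \inset \mathbb Z$ guarantees $\tilde r_i < 2^m$, so this is an exponentiation to an exponent of length at most~$m$ bits; over the~$l$ candidates this totals $lm$ bits.

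Summing the two contributions gives a total exponent length of $\ordo(cm) + lm = \ordo((c+l)m)$ bits, and the lemma follows. The only mild subtlety — and the step I would be most careful with — is making sure the prime-counting estimate $|\mathcal P(cm)| = \ordo(cm / \ln cm)$ is invoked consistently with its use in the earlier lemmas, and that the pre-computation cost is not double-counted against the per-candidate cost; both are routine once the structure of the algorithm is laid out as above. Note also that, exactly as remarked for the other algorithms, the bound is a worst case: in practice the test for each~$\tilde r_i$ is a single exponentiation of the already-computed~$x$, so the dominant term is simply~$lm$ when $l$ is large compared to~$c$.
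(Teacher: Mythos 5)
Your proposal is correct and follows essentially the same route as the paper's proof: bound the bit length of the pre-computed exponent $e$ by $\ceil{\log cm} \cdot \ordo(cm/\ln cm) = \ordo(cm)$, bound each of the $l$ tests by $m$ bits via the guard $\tilde r_i \in [1, 2^m)$, and sum to get $\ordo((c+l)m)$. No gaps.
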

\begin{proof}
In the first part, the number of primes in $\mathcal P(cm)$ is $\ordo(cm / \ln cm)$.
For each such prime, the exponent is at most of length $\ceil{\log cm}$ bits. Hence
\begin{align*}
  \ordo(\log e) \le \ceil{\log cm} \cdot \ordo(cm / \ln cm) = \ordo(cm).
\end{align*}

In the second part, it holds that $\tilde r_i \in [0, 2^m) \inset \mathbb Z$ for all~$\tilde r_i$ in~$\mathcal S$ for which we perform exponentiations, and there are~$l$ entries in~$\mathcal S$.
The total exponent length is hence $\ordo((c + l)m)$ bits, and so the lemma follows.
\end{proof}

\subsubsection{Notes on optimizing the filtering of candidates for $\tilde r$}
\label{sec:notes-optimizing-the-filtering-step}
The basic filtering procedure in Alg.~\ref{alg:filter-tilde-r} may be optimized in various ways.

Suppose e.g.\ that a non-negative multiple~$\mu$ of~$r$ is known.
Then Alg.~\ref{alg:filter-tilde-r} may be optimized by not exponentiating~$x$ to~$\tilde r_i$ in step~\ref{alg:filter-tilde-r-step:test}, but rather to $\gcd(\tilde r_i, \mu)$.

Note that this does not affect the correctness of the algorithm:
In particular, the correct $\tilde r = r / d$, where $d = \gcd(r, z)$ is $cm$-smooth, will pass the test in step~\ref{alg:filter-tilde-r-step:test} when the optimization is applied, since $\tilde r = \gcd(\tilde r, \mu)$, and be included in the subset~$\mathcal S'$.
The same holds for any candidate~$\tilde r_i$ for~$\tilde r$ in~$\mathcal S$ that meets the requirement that $d_i \cdot \tilde r_i$ is a positive integer multiple of~$r$, for~$d_i$ a $cm$-smooth integer.

Note furthermore that as soon as a candidate~$\tilde r_i$ for~$\tilde r$ passes the test in step~\ref{alg:filter-tilde-r-step:test}, we know that $\tilde r_i \cdot e$ is a positive multiple of~$r$, and so we may update~$\mu$ by letting $\mu \leftarrow \gcd(\tilde r_i \cdot e, \mu)$, further tightening the filter.
Initially, we may let $\mu \leftarrow 0$.
The advantage of this optimization is hence that once a candidate for~$\tilde r_i$ that passes the test has been found, it becomes easier to test additional candidates, as it then often suffices to exponentiate $x$ to a smaller exponent, or not at all.

Other more obvious optimizations in practical implementations involve keeping track of candidates for~$\tilde r_i$ that have already passed the test, and of reduced candidates~$\gcd(\tilde r_i, \mu)$ that have already been dismissed, so as to avoid repeatedly testing candidates that have already been tested, or for which sufficient information has been accumulated to immediate dismiss or accept the candidate.

\subsubsection{Solving a set of candidates for $\tilde r = r/d$ for $r$ when $d$ is $cm$-smooth}
\label{sec:solving-candidate-set-for-r}
As previously explained, so as to achieve a high probability of solving the optimal frequency $j_0(z)$ closest to~$j$ for $\tilde r = r / d$ where $d = \gcd(r, z)$, we solve not only~$j$, but also $j \pm 1, \, \ldots, \, j \pm B$, for~$\tilde r$.
This yields a set~$\mathcal S = \{ \tilde r_1, \, \ldots, \, \tilde r_l \}$ of candidates for~$\tilde r$ guaranteed to contain~$\tilde r$ if $j_0(z)$ was amongst the frequencies solved for~$\tilde r$.

To recover~$r$ from~$\mathcal S$ when $\tilde r = r/d \in \mathcal S$ and~$d$ is $cm$-smooth, we first call Alg.~\ref{alg:filter-tilde-r} to filter the candidates in~$\mathcal S$.
This yields a subset $\mathcal S' \subseteq \mathcal S$ containing all $\tilde r_i \in \mathcal S$ that are such that $d_i \cdot \tilde r_i$ is a positive integer multiple of~$r$‚ for $d_i$ a $cm$-smooth integer.

For all candidates for~$\tilde r$ in~$\mathcal S'$, we then call either Alg.~\ref{alg:recover-r} or Alg.~\ref{alg:recover-r-tree}, and return the minimum of the candidates for~$r$ thus produced.
This yields~$r$.

To see why this is, note that $\tilde r \in \mathcal S'$ (since $\tilde r = r/d \in \mathcal S$ where $d$ is $cm$-smooth), and that a positive integer multiple of~$r$ is returned by both Alg.~\ref{alg:recover-r} and Alg.~\ref{alg:recover-r-tree} for all candidates for~$\tilde r$ that are in~$\mathcal S'$.
For~$\tilde r$, both algorithms return~$r$.
Hence, taking the minimum yields~$r$.

\subsubsection{Notes on optimizing the basic procedure for solving $\tilde r$ for $r$}
The basic procedure outlined in Sect.~\ref{sec:solving-candidate-set-for-r} may be optimized in various ways:

In particular, when using lattice-based post-processing and solving a range of offsets in~$j$ for~$\tilde r$, information computed when reducing the first lattice basis may be used to speed up the reduction of subsequent bases, see App.~\ref{appendix:lattice-based-post-solving-range-of-offsets-in-j-efficiently}.

When enumerating the lattice for large~$\Delta$, it is furthermore advantageous to filter the candidates for~$\tilde r$ as an integrated part of the enumeration.
The structure of the lattice may then be leveraged to speed up the filtering step, see App.~\ref{appendix:lattice-based-post-processing-filtering-candidates-efficiently}.

As previously explained in Sect.~\ref{sec:notes-optimizing-the-filtering-step}, the filtering step may also be optimized by keeping track of the multiples of~$r$ that become known as candidates for~$\tilde r_i$ pass the filtering step.
This information may be used to reduce subsequent candidates.

If only a positive integer multiple of~$r$ is sought, it suffices to post-process the first~$\tilde r_i$ that is inserted into~$\mathcal S'$ by Alg.~\ref{alg:filter-tilde-r} by calling either Alg.~\ref{alg:recover-multiple-of-r}, Alg.~\ref{alg:recover-r} or Alg.~\ref{alg:recover-r-tree}.
As explained in Sect.~\ref{sec:filtering-tilde-r}, the candidates for~$\tilde r$ may be filtered incrementally by Alg.~\ref{alg:filter-tilde-r} as they become available, enabling the search for~$\tilde r$ to be aborted early.

\subsubsection{Notes on heuristic optimizations of the basic procedure}
Let us furthermore briefly describe two heuristic optimizations:

As soon as a candidate for~$\tilde r$ passes the filter, we may use it to compute a positive multiple~$r'$ of~$r$ by using one of Alg.\ref{alg:recover-multiple-of-r}--\ref{alg:recover-r-tree}.
If $r' / r$ is smooth, as is heuristically likely to be the case in practice, $r$ may then be found by e.g.\ using trial division.\footnote{For all primes~$q$ up to some bound: For as long as~$q$ divides~$r'$ and $g^{r'/q} = 1$: Let $r' \leftarrow r' / q$.}

When solving a range of offsets in~$j$ for~$\tilde r$, it is heuristically likely to be the case that a contiguous subrange yields candidates for~$\tilde r$ that pass the filter.
By identifying this subrange whilst solving we may typically reduce the search space.

Note that the two above optimizations are heuristic:
They void the lower bound on the success probability that we will derive next in Sect.~\ref{sec:lower-bound-success-probability}.
We nevertheless mention them briefly here as they tend to produce good results in practice.

\subsubsection{Notes on the efficiency of the post-processing algorithms}
Note that it follows from Lem.~\ref{lemma:exponent-length-alg-recover-multiple-of-r}--\ref{lemma:exponent-length-alg-recover-tilde-r} that Alg.~\ref{alg:recover-multiple-of-r}--\ref{alg:filter-tilde-r}, respectively, execute in poly\-nomial time, assuming that $c, l = O(\poly(m))$ and $m = O(\poly(\log r))$.

\subsection{Lower-bounding the success probability}
\label{sec:lower-bound-success-probability}
We are now ready to wrap up our analysis in the below main theorem:

\begin{theorem}
\label{th:main-theorem}
The quantum algorithm in combination with the classical continued fractions-based or lattice-based post-processing successfully recovers~$r$ in a single run with probability at least
\begin{align}
  \left(
    1
    -
    \frac{1}{\pi^2}
    \left(
      \frac{2}{B} + \frac{1}{B^2} + \frac{1}{3B^3}
    \right)
  -
  \frac{\pi^2 r (2B + 1)}{2^{m+\ell}}
  \right)
  \left(
    1 - \frac{1}{c \log cm}
  \right)
  \label{eq:main-theorem-probability}
\end{align}
for $m, \ell \in \mathbb Z_{>0}$ such that $2^m > r$ and $2^{m+\ell} > r^2$, $c \ge 1$, and $B \in [1, B_{\max}) \inset \mathbb Z$.
\end{theorem}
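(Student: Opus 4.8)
The plan is to combine the two independent pieces of the analysis that have already been established. First I would invoke Thm.~\ref{th:P-a0-universal}: for any peak index $z \in [0, r) \inset \mathbb Z$, the total probability mass in the $B$-neighborhood of the optimal frequency $j_0(z)$, namely $S(z) = \sum_{t=-B}^{B} P(\alpha_0(z) + rt)$, is at least
\[
  \frac{1}{r}\left(1 - \frac{1}{\pi^2}\left(\frac{2}{B} + \frac{1}{B^2} + \frac{1}{3B^3}\right)\right) - \frac{\pi^2(2B+1)}{2^{m+\ell}},
\]
uniformly in $z$. Since the $r$ neighborhoods are disjoint (this uses $B < B_{\max}$, so that the modular reductions in~\refeq{eq:requirement-met-by-Bmax} are eliminated and the neighborhoods do not overlap), summing $S(z)$ over all $z$ and multiplying the per-$z$ bound by $r$ shows that with probability at least
\[
  1 - \frac{1}{\pi^2}\left(\frac{2}{B} + \frac{1}{B^2} + \frac{1}{3B^3}\right) - \frac{\pi^2 r (2B+1)}{2^{m+\ell}},
\]
the observed frequency $j$ lies in the $B$-neighborhood of $j_0(z)$ for some $z$, and conditioned on this event, $z$ is uniformly distributed on $[0, r) \inset \mathbb Z$ because the lower bound on $S(z)$ does not depend on $z$. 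This is the first factor in~\refeq{eq:main-theorem-probability}.

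Next I would argue that, conditioned on that event, we actually recover $r$ with probability at least $1 - 1/(c\log cm)$. When the good event occurs, $j_0(z)$ is among the frequencies $j, j\pm 1, \ldots, j\pm B$ that we solve for $\tilde r$. By Lem.~\ref{lemma:continued-fractions-recover-tilde-r} (continued fractions case, using $2^{m+\ell} > r^2$) or Lem.~\ref{lemma:lattice-recover-tilde-r-shortest}/\ref{lemma:lattice-recover-tilde-r} (lattice case), solving $j_0(z)$ yields $\tilde r = r/d$ with $d = \gcd(r, z)$, so $\tilde r$ is guaranteed to be in the candidate set $\mathcal S$. By Lem.~\ref{lemma:probability-cm-smooth-d-uniform-z}, since $z$ is uniform on $[0, r) \inset \mathbb Z$, the divisor $d$ is $cm$-smooth with probability at least $1 - 1/(c\log cm)$. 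When $d$ is $cm$-smooth, the procedure in Sect.~\ref{sec:solving-candidate-set-for-r} — filtering $\mathcal S$ via Alg.~\ref{alg:filter-tilde-r}, then running Alg.~\ref{alg:recover-r} or Alg.~\ref{alg:recover-r-tree} on each surviving candidate and taking the minimum — provably returns $r$. Hence the conditional success probability is at least $1 - 1/(c\log cm)$, giving the second factor.

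Finally I would multiply the two probabilities. The subtlety to address carefully — and the step I expect to be the main obstacle — is the independence (or at least the correct conditioning) between the two events: the first factor bounds the probability that $j$ falls in a $B$-neighborhood \emph{and} that, conditioned on this, $z$ is uniform; the second factor is a bound on the $cm$-smoothness of $d = \gcd(r,z)$ that holds for $z$ uniform on $[0,r) \inset \mathbb Z$. So the product is legitimate precisely because the uniformity of $z$ delivered by Thm.~\ref{th:P-a0-universal} is exactly the hypothesis needed by Lem.~\ref{lemma:probability-cm-smooth-d-uniform-z}; one must state this chaining explicitly rather than appealing to naive independence. I would also note in passing that the heuristic optimizations of Sect.~3.2.8 are not used here, so the bound is rigorous, and that the hypotheses $2^m > r$, $2^{m+\ell} > r^2$, $c \ge 1$, $B \in [1, B_{\max}) \inset \mathbb Z$ are exactly those required by the lemmas invoked.
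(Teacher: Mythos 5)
Your proposal is correct and follows essentially the same route as the paper's proof: Thm.~\ref{th:P-a0-universal} gives the first factor, Lem.~\ref{lemma:continued-fractions-recover-tilde-r} (or Lem.~\ref{lemma:lattice-recover-tilde-r-shortest}) guarantees $\tilde r$ enters the candidate set once $j_0(z)$ is among the solved frequencies, Lem.~\ref{lemma:probability-cm-smooth-d-uniform-z} gives the second factor, and taking the minimum over the outputs of Alg.~\ref{alg:recover-r} or Alg.~\ref{alg:recover-r-tree} yields~$r$ since incorrect candidates only produce failures or multiples of~$r$. Your explicit remark on how the $z$-uniformity delivered by the uniform lower bound on $S(z)$ is exactly the hypothesis consumed by Lem.~\ref{lemma:probability-cm-smooth-d-uniform-z} is a welcome clarification of a step the paper treats more tersely, but it is the same argument.
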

\begin{notes}
To remove the dependency on~$r$, use e.g.\ that $r / 2^{m+\ell} < 1 / \sqrt{2^{m+\ell}}$.
\end{notes}
\begin{proof}
By Thm.~\ref{th:P-a0-universal}, the probability
\begin{align*}
  \sum_{t \,= \, -B}^{B} P(\alpha_0(z) + rt)
  &\ge
  \frac{1}{r} \left(
    1
    -
    \frac{1}{\pi^2}
    \left(
      \frac{2}{B} + \frac{1}{B^2} + \frac{1}{3B^3}
    \right)
  \right)
  -
  \frac{\pi^2 (2B + 1)}{2^{m+\ell}}
\end{align*}
independent of the peak index $z \in [0, r) \inset \mathbb Z$.

Hence, we observe some~$j$ such that $|\, j - j_0(z) \,| \le B$, with~$z$ selected uniformly at random from $[0, r) \inset \mathbb Z$, with probability at least
\begin{align}
  1
  -
  \frac{1}{\pi^2}
  \left(
    \frac{2}{B} + \frac{1}{B^2} + \frac{1}{3B^3}
  \right)
  -
  \frac{\pi^2 r (2B + 1)}{2^{m+\ell}}.
  \label{eq:main-theorem-probability-j}
\end{align}

Assuming that we pass not only~$j$, but also $j \pm 1, \, \ldots, \, j \pm B$, to the post-processing solver, we will hence pass it $j_0(z)$ with at least the probability in~\refeq{eq:main-theorem-probability-j}.

By~Lem.~\ref{lemma:continued-fractions-recover-tilde-r}, passing $j_0(z)$ to the continued fractions-based solver, which expands $j_0(z) / 2^{m+\ell}$ in a continued fraction and returns the last convergent $p/q$ with denominator $q < 2^{(m+\ell)/2}$, yields $z/r$ and hence $\tilde r = r/d$ where $d = \gcd(r, z)$.

By Lem.~\ref{lemma:lattice-recover-tilde-r-shortest}, passing $j_0(z)$ to the lattice-based solver, so that~$j_0(z)$ is used to setup the basis for the lattice~$\mathcal L$ which is then Lagrange-reduced, yields~$\tilde r/2$, and hence~$\tilde r$, up to sign, as the second component of the shortest non-zero vector in~$\mathcal L$.

The probability of $d = \gcd(r, z)$ being $cm$-smooth is at least $1 - 1 / (c \log cm)$ by Lem.~\ref{lemma:probability-cm-smooth-d-uniform-z}.
If~$d$ is $cm$-smooth, Alg.~\ref{alg:recover-r} and Alg.~\ref{alg:recover-r-tree} are both guaranteed to return~$r$ given $\tilde r = r / d$.
If instead passed an incorrect candidate for~$\tilde r$, these algorithms will either signal a failure to find~$r$, or return some positive integer multiple of~$r$.

Hence, with probability at least as stated in the theorem, we may recover~$r$ by taking the minimum of the candidates for~$r$ produced by Alg.~\ref{alg:recover-r} or Alg.~\ref{alg:recover-r-tree} when this algorithm is passed the candidates for~$\tilde r$ produced when solving not only~$j$, but also $j \pm 1, \, \ldots, \, j \pm B$, for~$\tilde r$, using either the continued factions-based or the lattice-based post-processing, and so the theorem follows.
\end{proof}

\begin{notes}
As explained in Sect.~\ref{sec:solving-candidate-set-for-r}, the procedure in the proof may be optimized by filtering the candidates for~$\tilde r$ using Alg.~\ref{alg:filter-tilde-r}, before passing them to Alg.~\ref{alg:recover-r} or Alg.~\ref{alg:recover-r-tree}.
\end{notes}

\vspace{2mm}

In order for the quantum algorithm to run in polynomial time, we must require $m = O(\poly(\log r))$.
In practice, we would select the least~$m$ that we can guarantee meets the requirements that $2^m > r$, and then the least~$\ell$ such that $2^{m+\ell} > r^2$.

For the classical post-processing to run in polynomial time, we must similarly require $B, c = O(\poly(\log r))$.
As may be seen in Thm.~\ref{th:main-theorem}, it does however suffice to let~$B$ and~$c$ be constants to achieve a high success probability.
Already for moderate~$m$, a high success probability exceeding e.g.~$1 - 10^{-4}$ can be guaranteed, see Tab.~\ref{tab:probability} where the bound is tabulated in~$c$ and~$B$ for $m = \ell = 128$.

\begin{table}
  \begin{center}
    \begin{tabular}{cr|r|r|r|r|r|r}
      \multicolumn{2}{c}{} & \multicolumn{6}{|c}{$B$} \\
          &      &       1 &      10 &     100 &    1000 &  $10^4$ &  $10^5$ \\
      \hline
          &    1 & 0.56765 & 0.83887 & 0.85539 & 0.85696 & 0.85712 & 0.85714 \\
          &   10 & 0.65584 & 0.96920 & 0.98829 & 0.99011 & 0.99029 & 0.99030 \\
          &   25 & 0.65998 & 0.97532 & 0.99453 & 0.99636 & 0.99654 & 0.99656 \\
      $c$ &  100 & 0.66177 & 0.97797 & 0.99723 & 0.99906 & 0.99924 & 0.99926 \\
          &  250 & 0.66208 & 0.97842 & 0.99769 & 0.99953 & 0.99971 & 0.99973 \\
          &  500 & 0.66217 & 0.97856 & 0.99783 & 0.99967 & 0.99985 & 0.99987 \\
          & 1000 & 0.66222 & 0.97863 & 0.99790 & 0.99973 & 0.99992 & 0.99993 \\
      \hline
    \end{tabular}
  \end{center}
  \caption{The lower bound on the success probability in Thm.~\ref{th:main-theorem} rounded down and tabulated in $B$ and $c$ for $m = \ell = 128$.
  Further increasing~$m, \ell$ only increases the probability.
  When tabulating the bound, we used that $r / 2^{m+\ell} < 1/\sqrt{2^{m+\ell}}$ to remove the dependency on~$r$, see the note in Thm.~\ref{th:main-theorem}.}
  \label{tab:probability}
\end{table}

When using lattice-based post-processing, it is possible to reduce~$\ell$ slightly, at the expense of enumerating a bounded number of vectors in the lattice:

\begin{theoremcorollary}
  \label{cor:main-theorem-lattice}
  The quantum algorithm in combination with the classical lattice-based post-processing successfully recovers~$r$ in a single run with probability at least
  \begin{align}
    \left(
      1
      -
      \frac{1}{\pi^2}
      \left(
        \frac{2}{B} + \frac{1}{B^2} + \frac{1}{3B^3}
      \right)
    -
    \frac{\pi^2 r (2B + 1)}{2^{m+\ell}}
    \right)
    \left(
      1 - \frac{1}{c \log cm}
    \right)
    \label{eq:cor-main-theorem-probability}
  \end{align}
  for $m \in \mathbb Z_{>0}$ such that $2^m > r$, $\ell = m - \Delta$ for some $\Delta \in [0, m) \inset \mathbb Z$, $c \ge 1$, and $B \in [1, B_{\max}) \inset \mathbb Z$, provided that at most $\deltaexpr$ lattice vectors are enumerated for each of the at most $2B + 1$ candidate frequencies.
\end{theoremcorollary}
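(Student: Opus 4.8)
The plan is to reprise the proof of Thm.~\ref{th:main-theorem} essentially verbatim, changing only the single step that reconstructs~$\tilde r$ from the optimal frequency~$j_0(z)$, since that is the only place where the hypothesis $2^{m+\ell} > r^2$ enters. First I would observe that Thm.~\ref{th:P-a0-universal} and Lem.~\ref{lemma:probability-cm-smooth-d-uniform-z} impose no lower bound on~$\ell$ beyond $B \in [1, B_{\max}) \inset \mathbb Z$ (with $B_{\max} = (2^{m+\ell}/r - 1)/2$ now evaluated at $\ell = m - \Delta$) and $r < 2^m$, respectively, so both transfer unchanged. Consequently, exactly as in Thm.~\ref{th:main-theorem}, with probability at least $1 - \tfrac{1}{\pi^2}(\tfrac{2}{B} + \tfrac{1}{B^2} + \tfrac{1}{3B^3}) - \tfrac{\pi^2 r (2B+1)}{2^{m+\ell}}$ we observe some~$j$ with $|\, j - j_0(z) \,| \le B$ and~$z$ uniformly distributed on $[0, r) \inset \mathbb Z$; hence, solving not only~$j$ but also $j \pm 1, \, \ldots, \, j \pm B$ for~$\tilde r$ means $j_0(z)$ is among the frequencies solved with at least this probability.

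The one substitution is to replace Lem.~\ref{lemma:lattice-recover-tilde-r-shortest} by Lem.~\ref{lemma:lattice-recover-tilde-r}: with $\ell = m - \Delta$, for each of the at most $2B+1$ candidate frequencies~$j'$ we form the lattice~$\mathcal L$ spanned by $(j', \tfrac{1}{2})$ and $(2^{m+\ell}, 0)$, enumerate at most $\deltaexpr$ vectors $\vec w = (w_1, w_2) \in \mathcal L$ of norm $|\, \vec w \,| \le 2^{m - 1/2}$, and include each $2 w_2$ in a candidate set~$\mathcal S$. By Lem.~\ref{lemma:lattice-recover-tilde-r}, $\mathcal S$ then has size at most $(2B+1)\deltaexpr$ and is guaranteed to contain $\tilde r = r / \gcd(r, z)$ whenever~$j_0(z)$ was among the solved frequencies, which matches exactly the guarantee that Lem.~\ref{lemma:lattice-recover-tilde-r-shortest} provided in Thm.~\ref{th:main-theorem}.

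From here the argument is identical to that of Thm.~\ref{th:main-theorem}: by Lem.~\ref{lemma:probability-cm-smooth-d-uniform-z}, $d = \gcd(r, z)$ is $cm$-smooth with probability at least $1 - \tfrac{1}{c \log cm}$; we filter~$\mathcal S$ with Alg.~\ref{alg:filter-tilde-r}, then run Alg.~\ref{alg:recover-r} (or Alg.~\ref{alg:recover-r-tree}) on each surviving candidate and return the minimum of the outputs. Since~$\tilde r$ survives the filter and both algorithms return~$r$ on input~$\tilde r$, while on any other surviving input they return either~$\neg$ or a positive integer multiple of~$r$, the minimum is~$r$. Combining the probability of the observation event with the probability of the smoothness event in the same (slightly conservative) way as in Thm.~\ref{th:main-theorem} yields~\refeq{eq:cor-main-theorem-probability}.

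The only point needing care — and the closest thing to an obstacle — is bookkeeping rather than mathematics: one must check that enlarging the candidate set from one entry per frequency (as in Thm.~\ref{th:main-theorem}) to up to~$\deltaexpr$ entries per frequency does not affect correctness. It does not, because Alg.~\ref{alg:filter-tilde-r}, Alg.~\ref{alg:recover-r} and Alg.~\ref{alg:recover-r-tree} are sound on every input: a spurious candidate is either rejected by the filter or yields a positive multiple of~$r$, and only the presence of the true~$\tilde r$ in~$\mathcal S$ is needed. It affects only the runtime, which remains polynomial provided~$\Delta$, $B$ and~$c$ are polylogarithmic in~$r$, by Lem.~\ref{lemma:exponent-length-alg-recover-multiple-of-r}--\ref{lemma:exponent-length-alg-recover-tilde-r} together with the enumeration bound~$\deltaexpr$; this is a remark on efficiency and not part of the probability bound being proved.
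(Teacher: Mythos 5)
Your proposal is correct and follows essentially the same route as the paper's own proof: it reuses the proof of Thm.~\ref{th:main-theorem} and substitutes Lem.~\ref{lemma:lattice-recover-tilde-r} (enumeration of at most $\deltaexpr$ vectors per candidate frequency) for Lem.~\ref{lemma:lattice-recover-tilde-r-shortest}, then argues as before via filtering, Alg.~\ref{alg:recover-r} or Alg.~\ref{alg:recover-r-tree}, and taking the minimum. Your explicit remark that the enlarged candidate set is harmless because the post-processing algorithms are sound on every input is exactly the point the paper also relies on.
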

\begin{notes}
To remove the dependency on~$r$, use e.g.\ that $r / 2^{m+\ell} < 1 / 2^{\ell}$.
\end{notes}
\begin{proof}
  The proof follows from Thm.~\ref{th:main-theorem} and its proof, by using lattice-based post-processing and enumerating vectors in the lattice~$\mathcal L$:

  By Lem.~\ref{lemma:lattice-recover-tilde-r} and its proof, when $2^m > r$ and $\ell = m - \Delta$, at most $\deltaexpr$ vectors in~$\mathcal L$ must be enumerated to recover~$\tilde r$, assuming $j = j_0(z)$ is passed to the solver.
  More specifically, the enumeration then generates a set of at most $\deltaexpr$ candidates for~$\tilde r$ that is guaranteed to contain~$\tilde r$.
  If $j \neq j_0(z)$ is passed to the solver, the enumeration may be aborted after $\deltaexpr$ vectors have been enumerated.\footnote{If the procedure in the proof of Lem.~\ref{lemma:lattice-recover-tilde-r} is followed and~$\mathcal L$ enumerated only if $\lambda_2^{\perp} < 2^{m-\frac{1}{2}}$, then the enumeration generates a set of at most $\deltaexpr$ candidates for~$\tilde r$ even if $j \neq j_0(z)$, so there is then no need to abort the enumeration after $\deltaexpr$ vectors have been enumerated.}

  As in the proof of Thm.~\ref{th:main-theorem}:
  Not only~$j$ but also $j \pm 1$, $\ldots$, $j \pm B$ are passed to the solver, so as to guarantee a given minimum probability~\refeq{eq:main-theorem-probability-j} of passing $j_0(z)$ to the solver.
  At most $2B+1$ enumerations of at most $\deltaexpr$ lattice vectors are hence performed, where each vector yields at most one candidate for~$\tilde r$.

  The probability of $d = \gcd(r, z)$ being $cm$-smooth is then at least $1 - 1 / (c \log cm)$ by Lem.~\ref{lemma:probability-cm-smooth-d-uniform-z}.
  If~$d$ is $cm$-smooth, Alg.~\ref{alg:recover-r} and Alg.~\ref{alg:recover-r-tree} are both guaranteed to return~$r$ given $\tilde r = r / d$.
  If instead passed an incorrect candidate for~$\tilde r$, these algorithms will either  signal a failure to find~$r$, or return some positive integer multiple of~$r$.

  Hence, with probability at least as stated in the corollary, we may recover~$r$ by taking the minimum of the candidates for~$r$ produced by Alg.~\ref{alg:recover-r} or Alg.~\ref{alg:recover-r-tree} when this algorithm is passed the candidates for~$\tilde r$ produced when solving not only~$j$, but also $j \pm 1, \, \ldots, \, j \pm B$, for~$\tilde r$, by using lattice-based processing and enumerating at most $\deltaexpr$ vectors in~$\mathcal L$, and so the corollary follows.
\end{proof}

\begin{notes}
  As explained in App.~\ref{appendix:lattice-based-post-processing-filtering-candidates-efficiently}, the procedure in the proof may be optimized by filtering the candidates for~$\tilde r$ as an integrated a part of enumerating~$\mathcal L$.
  Furthermore, as explained in App.~\ref{appendix:lattice-based-post-processing-improved-bound}, the constant $6 \sqrt{3}$ may be improved.
\end{notes}

\vspace{2mm}

It is furthermore interesting to note that asymptotically in the limit as~$r$ tends to infinity, the probability of successfully recovering~$r$ tends to one:

\begin{theoremcorollary}
  \label{cor:main-theorem-asymptotic}
  In the limit as~$r$ tends to infinity, the probability of the quantum algo\-rithm in combination with the classical post-processing succeeding in recovering~$r$ in a single run tends to one.
  All algorithms involved may be parameterized so as to achieve this limit whilst executing in polynomial time.
  This assuming that and upper bound $m = \ordo(\poly(\log r))$ such that $2^m > r$ is known, and that the group arithmetic is efficient.
\end{theoremcorollary}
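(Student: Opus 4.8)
The plan is to deduce this as a limiting consequence of Thm.~\ref{th:main-theorem} (equivalently, of Cor.~\ref{cor:main-theorem-lattice} with~$\Delta$ fixed) via a single convenient parameterization. Given the assumed upper bound $m = \ordo(\poly(\log r))$ with $2^m > r$, I would take $\ell = m$, $B = m$ and $c = 1$. Then $2^{m+\ell} = 2^{2m} > r^2$, so the hypotheses $2^m > r$ and $2^{m+\ell} > r^2$ of Thm.~\ref{th:main-theorem} hold; moreover $B_{\max} = (2^{m+\ell}/r - 1)/2 > (r-1)/2$ whereas $B = m = \ordo(\poly(\log r))$, so $B \in [1, B_{\max}) \inset \mathbb Z$ for all sufficiently large~$r$, and the lower bound~\refeq{eq:main-theorem-probability} applies with these choices.

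Next I would show both factors of~\refeq{eq:main-theorem-probability} tend to one as $r \to \infty$. Since $2^m > r$ forces $m > \log r$, we have $m \to \infty$ with~$r$. Hence the relative-error term $\tfrac{1}{\pi^2}(\tfrac{2}{B} + \tfrac{1}{B^2} + \tfrac{1}{3B^3}) = \tfrac{1}{\pi^2}(\tfrac{2}{m} + \tfrac{1}{m^2} + \tfrac{1}{3m^3}) \to 0$, and the second factor $1 - \tfrac{1}{c\log cm} = 1 - \tfrac{1}{\log m} \to 1$. For the absolute-error term I would use $r^2 < 2^{m+\ell}$, i.e.\ $r < \sqrt{2^{m+\ell}}$, to get $\tfrac{\pi^2 r(2B+1)}{2^{m+\ell}} < \tfrac{\pi^2(2B+1)}{\sqrt{2^{m+\ell}}} < \tfrac{\pi^2(2m+1)}{r}$, which tends to zero because $m = \ordo(\poly(\log r))$ is dominated by~$r$. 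Therefore the product in~\refeq{eq:main-theorem-probability} tends to one.

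It then remains to verify polynomial time under these choices. The quantum part exponentiates~$g$ to an exponent of length $m + \ell = 2m = \ordo(\poly(\log r))$ bits, hence is efficient given efficient group arithmetic. On the classical side, the continued fraction expansion of $j/2^{m+\ell}$ (or, alternatively, the Lagrange reduction of~$\mathcal L$, cf.\ Lem.~\ref{lemma:continued-fractions-recover-tilde-r} and Lem.~\ref{lemma:lattice-recover-tilde-r-shortest}) runs in time polynomial in $m+\ell$, and is performed for each of the at most $2B + 1 = 2m + 1$ candidate frequencies $j, j \pm 1, \ldots, j \pm B$; and Alg.~\ref{alg:recover-multiple-of-r}, Alg.~\ref{alg:recover-r}, Alg.~\ref{alg:recover-r-tree} and Alg.~\ref{alg:filter-tilde-r} run in polynomial time by Lem.~\ref{lemma:exponent-length-alg-recover-multiple-of-r}--\ref{lemma:exponent-length-alg-recover-tilde-r}, since here $c = 1 = \ordo(\poly(m))$, the number of candidates $l \le 2m + 1 = \ordo(\poly(m))$, and $m = \ordo(\poly(\log r))$. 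Assembling these, the whole procedure runs in polynomial time, which would complete the proof.

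I do not anticipate a genuine obstacle here; this is essentially a bookkeeping limit resting on Thm.~\ref{th:main-theorem}. The one point that requires a little care is the joint choice of~$B$: it must be taken to grow (so the relative-error term $\tfrac{1}{\pi^2}(\tfrac{2}{B} + \tfrac{1}{B^2} + \tfrac{1}{3B^3})$ vanishes) yet remain polynomially bounded in $\log r$ (so the post-processing stays efficient), while one simultaneously exploits that $2^{m+\ell} > r^2$ grows far faster than any such~$B$ so that the absolute-error term $\pi^2 r(2B+1)/2^{m+\ell}$ still vanishes; the choice $B = m$ (with $\ell = m$, $c = 1$) accomplishes all three at once. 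It is also worth recording explicitly that $m \to \infty$ (forced by $2^m > r$), which is what drives both $1/\log m \to 0$ and the relative-error term to zero.
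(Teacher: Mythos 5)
Your proposal is correct and follows essentially the same route as the paper, which likewise proves the corollary by taking the limit of~\refeq{eq:main-theorem-probability} with $c = 1$ and $B = m$; you have simply filled in the limit computations and the polynomial-time bookkeeping that the paper leaves implicit.
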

\begin{proof}
  The proof follows from Thm.~\ref{th:main-theorem}, by taking the limit of~\refeq{eq:main-theorem-probability} as~$r$ tends to infinity, with e.g.\ $c = 1$ and $B = m$.
  Other choices of~$c$ and~$B$ are possible.
\end{proof}

\subsubsection{Notes on factoring integers via order finding}
\label{sec:notes-on-factoring-via-order-finding}
One of our motivations for pursuing this work is the classical probabilistic algorithm recently introduced in~\cite{completely}:
It completely factors any integer~$N$ efficiently, and with a very high probability of success~\cite[Thm.~1]{completely}, given the order~$r$ of a single element~$g$ selected uniformly at random from~$\mathbb Z_N^*$.
The order~$r$ may for instance be computed in a single successful run of Shor's order-finding algorithm as in this work.

To connect~\cite[Thm.~1]{completely} to the lower bound in Thm.~\ref{th:main-theorem}, we first need to introduce a supporting claim.
For its proof, the reader is referred to App.~\ref{appendix:support-claims-carmichael}.
\begin{restatable}{claim}{carmichaelclaim}
  \label{claim:carmichael}
  For~$N$ an odd positive integer that is the product of $n \ge 2$ distinct prime factors, it holds for the Carmichael function~$\lambda$ that $\lambda(N) < 2^{1-n} N$.
\end{restatable}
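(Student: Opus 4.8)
The plan is to write $N = \prod_{i=1}^n p_i^{e_i}$ with $p_1 < \cdots < p_n$ distinct odd primes and each $e_i \ge 1$, and to exploit the multiplicative structure of the Carmichael function together with the fact that $p_i - 1$ is even for every odd prime $p_i$. First I would recall the standard formula $\lambda(N) = \lcm\!\left(\lambda(p_1^{e_1}), \ldots, \lambda(p_n^{e_n})\right)$, together with the fact that for an odd prime power $\lambda(p_i^{e_i}) = \phi(p_i^{e_i}) = p_i^{e_i - 1}(p_i - 1)$, which is an \emph{even} positive integer. Thus $\lambda(N)$ is the least common multiple of $n$ even integers whose product is exactly $\phi(N)$.

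The key step is then a purely elementary lemma about integers: if $a_1, \ldots, a_n$ are positive integers that are all even, then $\lcm(a_1, \ldots, a_n) \le 2^{1-n} \prod_{i=1}^n a_i$. To prove it I would compare $q$-adic valuations $v_q(\cdot)$ prime by prime. Fixing $q$ and an index $i^\ast$ attaining $\max_i v_q(a_i)$, one has $v_q\!\left(\prod_i a_i\right) - v_q\!\left(\lcm_i a_i\right) = \sum_i v_q(a_i) - \max_i v_q(a_i) = \sum_{i \ne i^\ast} v_q(a_i) \ge 0$, and for $q = 2$ every summand satisfies $v_2(a_i) \ge 1$, so this difference is at least $n-1$. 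Hence $\prod_i a_i / \lcm_i a_i$ is a positive integer divisible by $2^{n-1}$, which yields the stated bound.

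Finally I would apply the lemma with $a_i = \phi(p_i^{e_i})$, so that $\prod_i a_i = \prod_i \phi(p_i^{e_i}) = \phi(N)$ and $\lcm_i a_i = \lambda(N)$, giving $\lambda(N) \le 2^{1-n}\phi(N)$. The strict inequality $\lambda(N) < 2^{1-n} N$ then follows from $\phi(N) \le N - 1 < N$, which holds since $N > 1$ (as $n \ge 2$).

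The only genuine subtlety — and the step most likely to trip one up — is recognizing that one \emph{cannot} bound an lcm by $2^{1-n}$ times the product of arbitrary integers (for instance $\lcm(2,3) = 6$ equals the product), so the evenness of every factor $p_i - 1$ must be used in an essential way to extract the $2^{n-1}$; once that observation is in place, the valuation bookkeeping and the reduction to $\phi(N) < N$ are routine.
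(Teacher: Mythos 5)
Your proposal is correct and follows essentially the same route as the paper: write $\lambda(N)=\lcm\bigl((p_1-1)p_1^{e_1-1},\ldots,(p_n-1)p_n^{e_n-1}\bigr)$, use that each factor is even to bound the lcm by $2^{1-n}$ times the product, and conclude via $\phi(N)<N$. Your explicit $2$-adic valuation argument just spells out the step the paper states directly as $\lcm(\cdots)\le 2\prod_i \frac{p_i-1}{2}p_i^{e_i-1}$.
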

Note that the idea of using~$N/2$ to bound~$r$ from above when factoring via order finding was seemingly first introduced by Gerjuoy~\cite{gerjuoy}, see Sect.~\ref{sec:earlier-works}.

We may now lower-bound the overall success probability of the factoring algo\-rithm in~\cite{completely} succeeding in completely factoring~$N$ in a single order-finding run:
\begin{theoremcorollary}
\label{cor:main-theorem-factor}
Let~$N$ be an odd $l$-bit integer with $n \ge 2$ distinct prime factors.
Assume that we select~$g$ uniformly at random from~$\mathbb Z_N^*$, attempt to compute the order~$r$ of~$g$ in a single run of the quantum algorithm as described in this work with continued fractions-based or lattice-based post-processing, and then attempt to completely factor~$N$ given~$r$ using the classical algorithm in~\cite{completely}.
Then the probability of recovering the complete factorization of~$N$ is at least
\begin{align}
  &\left(
    1
    -
    \frac{1}{\pi^2}
    \left(
      \frac{2}{B} + \frac{1}{B^2} + \frac{1}{3B^3}
    \right)
  -
  \frac{\pi^2 (2B + 1)}{\sqrt{2^{m+\ell}}}
  \right) \cdot \notag \\
  &\quad\quad
  \left(
    1 - \frac{1}{c \log cm}
  \right)
  \left(
    1 - 2^{-k} \, {n \choose 2} - \frac{1}{2 \varsigma^{2} \log^2 \varsigma l}
  \right)
  \label{eq:main-theorem-factor-probability}
\end{align}
for $m = l - 1$, $\ell$ the least integer such that $2^{m+\ell} \ge N^2/4$, $B \in [1, B_{\max}) \inset \mathbb Z$, $c \ge 1$, $\varsigma \ge 1$ a constant that may be freely selected (denoted~$c$ in~\cite{completely}), and $k \ge 1$ the number of iterations in the classical post-processing in~\cite{completely}.
\end{theoremcorollary}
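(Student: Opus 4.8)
The plan is to chain three ingredients: Claim~\ref{claim:carmichael} to justify the choice $m = l - 1$; Thm.~\ref{th:main-theorem} for the probability that a single order-finding run followed by the classical post-processing recovers the true order~$r$; and \cite[Thm.~1]{completely} for the probability that the classical algorithm of~\cite{completely} then recovers the complete factorization of~$N$ from~$r$. The three bracketed factors of~\refeq{eq:main-theorem-factor-probability} will come out, in order, as the first two factors of the Thm.~\ref{th:main-theorem} bound (with the dependency on~$r$ removed) and the \cite[Thm.~1]{completely} bound.

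First I would check the hypotheses of Thm.~\ref{th:main-theorem}. Since~$N$ is an odd $l$-bit integer with $n \ge 2$ distinct prime factors, Claim~\ref{claim:carmichael} gives $r \le \lambda(N) < 2^{1-n} N \le N/2 < 2^{l-1} = 2^m$, so $2^m > r$. From $2^{m+\ell} \ge N^2/4$ and $r < N/2$ it then follows that $2^{m+\ell} > r^2$, and in particular that $r < \sqrt{2^{m+\ell}}$, which is precisely the inequality invoked in the note to Thm.~\ref{th:main-theorem} to pass from~\refeq{eq:main-theorem-probability} to the first bracketed factor of~\refeq{eq:main-theorem-factor-probability}. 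Consequently, for \emph{every} $g \in \mathbb{Z}_N^*$, a single order-finding run with continued fractions-based or lattice-based post-processing recovers the true order~$r$ with probability at least the product of the first two bracketed factors of~\refeq{eq:main-theorem-factor-probability}, a lower bound that does not depend on~$g$.

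Next I would invoke \cite[Thm.~1]{completely}: for~$g$ drawn uniformly at random from~$\mathbb{Z}_N^*$, the classical algorithm of~\cite{completely} run with $k \ge 1$ iterations and parameter $\varsigma \ge 1$, when supplied with the order~$r$ of~$g$, recovers the complete factorization of~$N$ with probability at least $1 - 2^{-k} {n \choose 2} - 1/(2 \varsigma^2 \log^2 \varsigma l)$, which is the third bracketed factor of~\refeq{eq:main-theorem-factor-probability}.

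Finally I would combine the two by conditioning on the choice of~$g$. Given~$g$, and hence~$r$, the success of the order-finding run --- which depends only on the quantum measurement outcome and the induced uniform peak index~$z$ governing the $cm$-smoothness of $\gcd(r, z)$ --- is independent of the internal randomness used by the classical factoring algorithm; so, conditionally on~$g$, the probability that \emph{both} the run recovers~$r$ \emph{and} the classical algorithm then completely factors~$N$ equals the product of the two conditional probabilities. Bounding the first conditional probability below by the $g$-independent quantity from the second paragraph, factoring it out of the average over~$g$, and bounding the remaining average of the second conditional probability below by \cite[Thm.~1]{completely}, yields the stated product. The step that needs the most care is this probabilistic combination: it is the conditional independence given~$g$ that lets one \emph{multiply} the three probabilities, rather than combine them through a weaker inclusion--exclusion estimate, and one must also keep track of the facts that the order-finding lower bound holds uniformly in~$g$ and that the choice $m = l - 1$ is admissible precisely by Claim~\ref{claim:carmichael}.
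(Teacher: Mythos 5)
Your proposal is correct and follows essentially the same route as the paper's own proof: Claim~\ref{claim:carmichael} to justify $m = l-1$ and $2^{m+\ell} \ge N^2/4 > r^2$, Thm.~\ref{th:main-theorem} with $r/2^{m+\ell} < 1/\sqrt{2^{m+\ell}}$ for the first two factors, and \cite[Thm.~1]{completely} for the third. Your explicit treatment of why the probabilities multiply (conditioning on~$g$, using the $g$-uniformity of the order-finding bound, then averaging) is a careful elaboration of a step the paper leaves implicit, but it is the same argument.
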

\begin{notes}
To pick~$k$, use e.g.\ that $k \ge 2 \log n - 1 + \tau$ for some positive $\tau$ implies $2^{-k} \, {n \choose 2} \le 2^{-\tau}$.
To remove the dependency on~$n$, use e.g.\ that $n < l$ as $N < 2^l$.
\end{notes}
\begin{proof}
By Claim~\ref{claim:carmichael}, we have that $r \le \lambda(N) < N/2 < 2^{l-1}$.
We select $m = l - 1$ to ensure that $2^m > r$, and the least~$\ell$ such that $2^{m+\ell} \ge N^2/4 > r^2$.
The first factor in the lower bound~\refeq{eq:main-theorem-factor-probability} then follows from Thm.~\ref{th:main-theorem}, as $r / 2^{m+\ell} < 1 / \sqrt{2^{m+\ell}}$.
The second factor follows from~\cite[Thm.~1]{completely}, and so the corollary follows.
\end{proof}

Note that as is explained in~\cite[Sect.~1]{completely}, the requirement that $n \ge 2$ does not imply a loss of generality, since we may reduce perfect powers and test for primality in classical polynomial time without resorting to order finding.
Similarly, the requirement that~$N$ must be odd may be handled by performing trial division for small factors before attempting to factor~$N$ using more elaborate methods.
Hence, a single order-finding run suffices for any integer~$N$ with probability at least~\refeq{eq:main-theorem-factor-probability} as the cases excluded in Cor.~\ref{cor:main-theorem-factor} can be handled efficiently classically.

Again, we may instead pick $\ell = m - \Delta$ for some small $\Delta$ when using lattice-based post-processing and enumerating a bounded number of vectors in the lattice:

\begin{theoremcorollary}
  \label{cor:main-theorem-factor-lattice}
  Let~$N$ be an odd $l$-bit integer with $n \ge 2$ distinct prime factors.
  Assume that we select~$g$ uniformly at random from~$\mathbb Z_N^*$, attempt to compute the order~$r$ of~$g$ in a single run of the quantum algorithm as described in this work with lattice-based post-processing, and then attempt to completely factor~$N$ given~$r$ using the classical algorithm in~\cite{completely}.
  Then the probability of recovering the complete factorization of~$N$ is at least
  \begin{align}
    &\left(
      1
      -
      \frac{1}{\pi^2}
      \left(
        \frac{2}{B} + \frac{1}{B^2} + \frac{1}{3B^3}
      \right)
    -
    \frac{\pi^2 (2B + 1)}{2^{\ell}}
    \right) \cdot \notag \\
    &\quad\quad
    \left(
      1 - \frac{1}{c \log cm}
    \right)
    \left(
      1 - 2^{-k} \, {n \choose 2} - \frac{1}{2 \varsigma^{2} \log^2 \varsigma l}
    \right)
    \label{eq:main-theorem-factor-lattice-probability}
  \end{align}
  for $m = l - 1$, $\ell = m - \Delta$ for some $\Delta \in [0, m) \inset \mathbb Z$, $B \in [1, B_{\max}) \inset \mathbb Z$, $c \ge 1$, $\varsigma \ge 1$ a constant that may be freely selected (denoted~$c$ in~\cite{completely}), and $k \ge 1$ the number of iterations in the classical post-processing in~\cite{completely}, provided that at most $\deltaexpr$ lattice vectors are enumerated for each of the at most $2B + 1$ candidate frequencies.
\end{theoremcorollary}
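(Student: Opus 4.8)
The plan is to mirror the proof of Cor.~\ref{cor:main-theorem-factor}, substituting Cor.~\ref{cor:main-theorem-lattice} for Thm.~\ref{th:main-theorem} as the bound on the order-finding step: the only difference between the two order-finding corollaries is the choice of~$\ell$ and the attendant requirement on the number of enumerated lattice vectors, both of which match the hypotheses imposed here.

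First I would fix the parameters. By Claim~\ref{claim:carmichael}, since~$N$ is odd with $n \ge 2$ distinct prime factors, $r \le \lambda(N) < 2^{1-n} N \le N/2 < 2^{l-1}$, so taking $m = l - 1$ guarantees $2^m > r$, as required by Cor.~\ref{cor:main-theorem-lattice}. With $\ell = m - \Delta$ for the given $\Delta \in [0, m) \inset \mathbb Z$, the remaining hypothesis of Cor.~\ref{cor:main-theorem-lattice} --- that at most $\deltaexpr$ lattice vectors are enumerated for each of the at most $2B+1$ candidate frequencies --- is precisely the hypothesis stated in the present corollary. Applying Cor.~\ref{cor:main-theorem-lattice}, and invoking the note appended to it that $r / 2^{m+\ell} < 1/2^{\ell}$ to eliminate the dependency on~$r$, gives that a single run of the quantum algorithm with lattice-based post-processing recovers~$r$ with probability at least the product of the first two factors in~\refeq{eq:main-theorem-factor-lattice-probability}. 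Then I would invoke~\cite[Thm.~1]{completely}: conditioned on being handed the true order~$r$ of~$g$, the classical algorithm of~\cite{completely} run with $k \ge 1$ iterations and constant $\varsigma \ge 1$ recovers the complete factorization of~$N$ with probability at least $1 - 2^{-k}\,{n \choose 2} - 1/(2\varsigma^2 \log^2 \varsigma l)$, the third factor in~\refeq{eq:main-theorem-factor-lattice-probability}. Multiplying the two bounds yields the claimed product.

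The steps are routine once Cor.~\ref{cor:main-theorem-lattice} and~\cite[Thm.~1]{completely} are in hand; the only point requiring care --- and the same one already handled in the proof of Cor.~\ref{cor:main-theorem-factor} --- is the justification for multiplying the bounds. This works because the lower bound furnished by Cor.~\ref{cor:main-theorem-lattice} holds uniformly over every admissible order~$r$, hence over every choice of~$g \in \mathbb Z_N^*$; so it may be combined with the~$g$-averaged bound from~\cite{completely} without double-counting the randomness in the draw of~$g$, giving the stated product as a valid lower bound on the overall success probability.
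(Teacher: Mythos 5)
Your proposal is correct and follows essentially the same route as the paper's own proof: apply Claim~\ref{claim:carmichael} to justify $m = l-1$, invoke Cor.~\ref{cor:main-theorem-lattice} with $\ell = m - \Delta$ and the bound $r/2^{m+\ell} < 1/2^{\ell}$ for the first two factors, and \cite[Thm.~1]{completely} for the third. Your closing remark on why the two bounds may be multiplied is a point the paper leaves implicit, but it does not change the argument.
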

\begin{notes}
To pick~$k$, use e.g.\ that $k \ge 2 \log n - 1 + \tau$ for some positive $\tau$ implies $2^{-k} \, {n \choose 2} \le 2^{-\tau}$.
To remove the dependency on~$n$, use e.g.\ that $n < l$ as $N < 2^l$.
\end{notes}
\begin{proof}
  By Claim~\ref{claim:carmichael}, we have that $r \le \lambda(N) < N/2 < 2^{l-1}$.
  We select $m = l - 1$ to ensure that $2^m > r$, and $\ell = m - \Delta$.
  The first factor in the lower bound~\refeq{eq:main-theorem-factor-lattice-probability} then follows from Cor.~\ref{cor:main-theorem-lattice}, as $r / 2^{m+\ell} < 2^{\ell}$.
  The second factor follows from~\cite[Thm.~1]{completely}, and so the corollary follows.
\end{proof}

As for the asymptotic behavior of the combined factoring algorithm, the overall probability of successfully recovering the complete factorization of~$N$ in a single order-finding run may be seen to tend to one in the limit as~$N$ tends to infinity:

\begin{theoremcorollary}
  \label{cor:main-theorem-asymptotic-factor}
  Let~$N$ be an odd $l$-bit integer with $n \ge 2$ distinct prime factors.

  Assume that we select~$g$ uniformly at random from~$\mathbb Z_N^*$, attempt to compute the order~$r$ of~$g$ in a single run of the quantum algorithm as described in this work, and then attempt to completely factor~$N$ given~$r$ using the classical algorithm in~\cite{completely}.

  Then, in the limit as~$N$ tends to infinity, the probability of successfully recovering the complete factorization of~$N$ tends to one.
  All algorithms involved may be parameterized so as to achieve this limit whilst executing in polynomial time in~$l$.
\end{theoremcorollary}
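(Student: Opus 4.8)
The plan is to instantiate the lower bound of Cor.~\ref{cor:main-theorem-factor} (taking continued fractions-based post-processing; the lattice-based variant follows identically from Cor.~\ref{cor:main-theorem-factor-lattice}) with a family of parameters depending on~$l$, chosen so that each of the three factors in~\refeq{eq:main-theorem-factor-probability} tends to one, while keeping every component of the combined algorithm polynomial in~$l$. This is the natural extension of Cor.~\ref{cor:main-theorem-asymptotic}: the first two factors are handled exactly as there (now with $m = l - 1$ in place of a generic bound on the bit length of~$r$), and the third, new factor accounts for the classical factoring algorithm of~\cite{completely}. Concretely, I would keep $m = l - 1$ and~$\ell$ the least integer with $2^{m+\ell} \ge N^2/4$ as in Cor.~\ref{cor:main-theorem-factor}, so that $\ell = m + \ordo(1)$ and hence $2^{m+\ell} = 2^{\Theta(l)}$, and then set $c = 1$, $\varsigma = 1$, $B = l$, and $k = \ceil{2 \log l + \log \log l}$. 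One first notes that $B = l < B_{\max} = (2^{m+\ell}/r - 1)/2$ for all sufficiently large~$l$, since $2^{m+\ell}/r \ge 2^{m+\ell}/2^m = 2^{\ell}$ grows exponentially in~$l$; hence Cor.~\ref{cor:main-theorem-factor} applies.

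The next step is to show that each of the three factors tends to one. The middle factor equals $1 - 1/\log(l-1)$, which tends to one. For the last factor, using $n < l$ one has $\binom{n}{2} < l^2/2$, so $2^{-k}\binom{n}{2} = \ordo(1/\log l)$ by the choice of~$k$ (cf.\ the note following Cor.~\ref{cor:main-theorem-factor}, taking $\tau = \log \log l$), while $1/(2 \varsigma^2 \log^2 \varsigma l) = 1/(2 \log^2 l)$; hence the last factor tends to one. For the first factor, the term $\frac{1}{\pi^2}(\frac{2}{B} + \frac{1}{B^2} + \frac{1}{3B^3}) = \ordo(1/l)$ vanishes, and so does $\frac{\pi^2(2B+1)}{\sqrt{2^{m+\ell}}} = \frac{\pi^2(2l+1)}{2^{\Theta(l)}}$, its numerator being polynomial and its denominator exponential in~$l$. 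Multiplying the three factors, the bound~\refeq{eq:main-theorem-factor-probability} tends to one as $N \to \infty$.

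It then remains to verify that, under these choices, every algorithm involved runs in polynomial time in~$l$. The quantum part induces the state~\refeq{eq:quantum-state} with an exponent of $m + \ell = \ordo(l)$ bits, i.e.\ performs $\ordo(l)$ group operations, each efficient by assumption. The classical post-processing solves at most $2B + 1 = \ordo(l)$ frequencies, each by a continued-fraction expansion of an $\ordo(l)$-bit rational (respectively, in the lattice-based case, a Lagrange reduction of an $\ordo(l)$-bit basis), which is polynomial in~$l$. By Lem.~\ref{lemma:exponent-length-alg-recover-multiple-of-r}--\ref{lemma:exponent-length-alg-recover-tilde-r}, with $c = 1$ and $\ordo(l)$ candidates for~$\tilde r$, Alg.~\ref{alg:recover-multiple-of-r}--\ref{alg:filter-tilde-r} exponentiate~$g$ to exponents of length $\ordo(\poly(m)) = \ordo(\poly(l))$ bits, hence run in polynomial time in~$l$ since $m = l - 1$. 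Finally, the classical algorithm of~\cite{completely}, run on the $l$-bit integer~$N$ with $\varsigma = 1$ and $k = \ordo(\log l)$ iterations, is polynomial in~$l$ by~\cite[Thm.~1]{completely}. Composing the three stages yields a procedure that is polynomial in~$l$, and the corollary follows.

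The one genuinely delicate point is the first factor: $B$ must grow with~$l$ to suppress the $1/B$-type rounding error, yet must stay far below $\sqrt{2^{m+\ell}}$ to suppress the $\pi^2(2B+1)/\sqrt{2^{m+\ell}}$ term inherited from the approximation of~$P$ by~$\widetilde P$. Since the latter quantity is exponential in~$l$, any polynomially-bounded choice of~$B$ (such as $B = l$, or $B = m$ as in Cor.~\ref{cor:main-theorem-asymptotic}) threads this needle. The remaining bookkeeping --- checking $B < B_{\max}$ and that each parameter keeps the running time polynomial in~$l$ --- is routine, and no ideas beyond Cor.~\ref{cor:main-theorem-factor} and~\cite[Thm.~1]{completely} are required.
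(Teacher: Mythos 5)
Your proposal is correct and follows essentially the same route as the paper's own proof: instantiate Cor.~\ref{cor:main-theorem-factor} with $m$, $\ell$ as there and polynomially bounded $c$, $\varsigma$, $B$, $k$ so that each of the three factors in~\refeq{eq:main-theorem-factor-probability} tends to one while all algorithms stay polynomial in~$l$. The only differences are immaterial parameter choices ($B = l$ and $k = \ceil{2\log l + \log\log l}$ versus the paper's $B = m$ and $k = \ceil{3\log l}$), both of which satisfy the same requirements.
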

\begin{proof}
  Assume that we e.g.\ select~$m$ and~$\ell$ as in Cor.~\ref{cor:main-theorem-factor}, and $c = \varsigma = 1$, $B = m$ and $k = \ceil{3 \log l}$.
  Then, in the limit as~$N$ tends to infinity, the lower bound~\refeq{eq:main-theorem-factor-probability} in Cor.~\ref{cor:main-theorem-factor} tends to one.
  In particular, note that $n < l$ as $N < 2^l$, and that
  \begin{align*}
    2^{-k} \cdot {n \choose 2}
    =
    2^{-\ceil{3 \log l}} \cdot \frac{n(n-1)}{2}
    \le
    \frac{l(l-1)}{2l^3}
    \le
    \frac{1}{l}.
  \end{align*}

  For these choices of parameters, all algorithms involved execute in polynomial time in~$l$, and so the corollary follows.
\end{proof}

Again, for the reasons explained above, the bound in Cor.~\ref{cor:main-theorem-asymptotic-factor} extends to any integer~$N$, as the cases excluded in Cor.~\ref{cor:main-theorem-asymptotic-factor} can be handled efficiently classically.

\subsubsection{Notes on the bounds underestimating the success probability}
When deriving the bounds in this work, we have assumed that the least~$m$ for which it can be guaranteed that $2^m > r$ is selected, and that a lower bound on the success probability is sought given only the guarantee that $r < 2^m$.
Similarly, we have assumed that the least~$\ell$ that fulfills the requirements on~$\ell$ --- e.g.~that $2^{m+\ell} > r$ or that $\ell = m - \Delta$ --- is selected.
If~$m$ or~$\ell$, or both, are selected larger than necessary, then the lower bound will underestimate the success probability.

To exemplify, the lower bound in Thm.~\ref{th:main-theorem} is derived under the assumption that one would search over offsets in the frequency~$j$ observed with the aim of finding the closest optimal frequency $j_0(z)$ for some $z \in [0, r) \inset \mathbb Z$.
If $2^{m+\ell}$ is much larger than~$r^2$, it is however not necessary to find $j_0(z)$ to solve for~$r$ in a single run:

It then suffices to find a frequency close to $j_0(z)$.
The lower bound does not account for this fact, leading it to under\-estimate the success probability if $2^{m+\ell}$ is much larger than $r^2$ --- in particular for small~$B$.
When using continued fractions-based post-processing, and for~$t$ such that $j = j_0(z) + t$, we e.g.\ need that
\begin{align}
  \frac{|\, \alpha_r \,|}{2^{m+\ell} r}
  &=
  \frac{|\, \{ rj \}_{2^{m+\ell}} \,|}{2^{m+\ell} r}
  =
  \frac{|\, \{ rj_0(z) + rt \}_{2^{m+\ell}} \,|}{2^{m+\ell} r}
  =
  \frac{|\, \{ \alpha_0(z) + rt \}_{2^{m+\ell}} \,|}{2^{m+\ell} r} \notag \\
  &=
  \frac{|\, \alpha_0(z) + rt \,|}{2^{m+\ell} r}
  =
  \frac{|\, \delta_z + t \,|}{2^{m+\ell}}
  <
  \frac{1}{2r^2}
  \quad \Rightarrow \quad
  2 \cdot |\, \delta_z + t \,| < \frac{2^{m+\ell}}{r^2}
  \label{eq:bound-t-rounding}
\end{align}
to be able to solve for $z/r$, where $\delta_z \in (-1/2, 1/2]$, see Sect.~\ref{sec:understanding-the-probability-distribution} and Sect.~\ref{sec:find-order}.
If~$2^{m+\ell}$ is much greater than $r^2$ this requirement is met for many $t \in [-B, B] \inset \mathbb Z$.
For all steps in~\refeq{eq:bound-t-rounding} to hold, it is also required that $B \in [1, B_{\max}) \inset \mathbb Z$, see~\refeq{eq:requirement-met-by-Bmax} in Sect.~\ref{sec:overview-our-contribution}.

Note furthermore that the lower bound in Lem.~\ref{lemma:probability-cm-smooth-d-uniform-z} stems from a worst-case analysis:
In practice, for most orders, the success probability is much higher than the bound indicates, and hence much higher than what the corresponding part of the lower bound in Thm.~\ref{th:main-theorem} --- and its various corollaries --- indicates.
A better lower bound can be achieved by imposing constraints on~$r$:
For instance by assuming~$r$ to be random, as random integers are unlikely to be very smooth, or by using side information on~$r$ if such information is available.

Finally, note that in the unlikely event that $d = \gcd(r, z)$ has a factor $q > cm$ so that we fail to recover~$r$, we still recover a limited set of candidates~$\tilde r_i$ for $\tilde r = r / d$.
This set includes the correct~$\tilde r$, but it is not trivial to identify~$\tilde r$ within the set.

Even so, the set may be useful:
For instance, we may use classical order-finding algorithms to attempt to find~$d$ as it is the order of $g^{\tilde r}$ --- the idea being to try to solve using all~$\tilde r_i$ in the set.\footnote{This may be combined with the techniques in this work for removing $cm$-smooth factors of~$d$.}
To take another example, when factoring an integer~$N$ via order finding, some --- and sometimes even all --- factors of~$N$ may be recovered using~\cite{completely} even if only $\tilde r = r / d$ is know as a consequence of~$d$ having a factor $q > cm$.
This follows from the analysis in~\cite{completely}, where it is stated that for
\begin{align*}
  N = \prod_{i = 1}^{n} p_i^{e_i},
\end{align*}
where $n \ge 2$, the~$p_i$ are pairwise distinct odd prime factors, and the~$e_i$ are positive integer exponents, we can afford to miss out on prime factors that divide~$p_i - 1$ for one~$p_i$ when attempting to guess a (sub-)multiple of $\lambda'(N) = \lcm(p_1 - 1, \, \ldots, \, p_n - 1)$.

If we miss out on several (large) prime factors that are associated with different prime factors~$p_i$, then these~$p_i$ may not be split apart --- but any other primes~$p_i$ will still be split apart.
Hence, we may still recover the partial or full factorization of~$N$ --- the idea again being to try to solve using all~$\tilde r_i$ in the set.
Factoring directly via~$\tilde r$ is beyond the scope of Cor.~\ref{cor:main-theorem-factor}, however, as it would complicate the corollary.

\subsection{Verifying the correctness of the result}
In this work, we give lower bounds on the success probability of recovering any order~$r$, and of completely factoring any integer~$N$, in a single run of the quantum part of Shor's order-finding algorithm.
That is to say, we give a guarantee that the correct result is returned with at least the probability indicated by the bound.
Otherwise, either no result, or a plausible yet incorrect result, is returned.

When factoring an integer~$N$, in the sense of computing a non-trivial factor~$f$ of~$N$, it is easy to verify the result:
Verify that $f \in (1, N) \inset \mathbb Z$ and that $f \mid N$.
Similarly, when computing a positive integer multiple~$r'$ of the order~$r$ of~$g$, it is easy to verify the result:
Verify that $r' \in \mathbb Z_{> 0}$ and that $g^{r'} = 1$.

When computing the order~$r$ of~$g$, or the complete factorization of~$N$‚ as in this work, verifying the result is a bit more involved, however, as explained below:

\subsubsection{Verifying that $r$ is the order of~$g$}
\label{sec:verify-computation-order-r}
To verify that~$r \in \mathbb Z_{> 0}$ such that $g^r = 1$ is the order of~$g$, factor~$r$ completely, e.g.~by using the approach described in Sect.~\ref{sec:notes-on-factoring-via-order-finding}:
For $\{ q_1, \, \ldots, \, q_k \}$ the set of distinct primes that divide~$r$, it then suffices to verify that $g^{r / q_i} \neq 1$ for all $i \in [1, k] \inset \mathbb Z$.

Two order-finding runs would then typically be required for large random~$r$; a single run in $\langle g \rangle$ to compute~$r$, and a single run in a cyclic subgroup of~$\mathbb Z_r^*$ to verify the computation by completely factoring~$r$.
Note that the completeness of the factorization must be verified, by first verifying that  $\{ q_1, \, \ldots, \, q_k \}$ is a set of pairwise distinct integers, and that $r = q_1^{e_1} \cdot \ldots \cdot q_k^{e_k}$ for $\{ e_1, \, \ldots, \, e_k \}$ a set of positive integer exponents, and by then verifying that $q_i$ is prime for all $i \in [1, k] \inset \mathbb Z$.

\subsubsection{Verifying that $q_i$ is a prime}
\label{sec:verify-computation-prime-qi}
Classical options for proving the primality of $q_i$ include using the ECPP (see  \cite{ecpp} for an overview) or AKS~\cite{aks} tests, where the former test is amongst the most efficient in practice~\cite{fkmw04, morain07} whilst the latter has a proven polynomial runtime.
If a proof of primality is not required, but only an extremely low probability of~$q_i$ not being prime, a classical probabilistic test such as Miller-Rabin~\cite{miller, rabin} may be used.

Quantum options include combining some variation of the Pocklington-Lehmer test~\cite{pocklington, lehmer} (see also Lucas~\cite[p.~441, §240]{lucas}) with a quantum factoring algorithm so as to compute the complete or partial factorizations required by the test.
This was originally proposed by Chau and Lo~\cite{cl97} with respect to using Shor's original factoring algorithm~\cite{shor94, shor97} and a variation of the test due to Brillhart et al.~\cite{bls75}.

By instead using the approach to factoring in Sect.~\ref{sec:notes-on-factoring-via-order-finding}, the test can be made more efficient:
Each complete or partial factorization required by the test may then be computed in a single quantum order-finding run with high probability of success.

\vspace{2mm}

\begin{notes}
  In practice when testing primality, as when factoring, the size of the integer and other side information typically determines the choice of algorithm.
\end{notes}

\subsubsection{Verifying that $\{ q_1, \, \ldots, \, q_n \}$ is the complete factorization of $N$}
In analogy with the procedure in Sect.~\ref{sec:verify-computation-order-r}, to verify that a set $\{ q_1, \, \ldots, \, q_n \}$ of pairwise distinct integers is the complete factorization of an integer~$N$, first verify that $N = q_1^{e_1} \cdot \ldots \cdot q_n^{e_n}$ for $\{ e_1, \, \ldots, \, e_n \}$ a set of positive integer exponents, and then verify that $q_i$ is prime for all $i \in [1, n] \inset \mathbb Z$ as described in Sect.~\ref{sec:verify-computation-prime-qi}.

\section{Summary and conclusion}
\label{sec:summary-conclusion}
We have derived a lower bound on the probability of successfully recovering the order~$r$ in a single run of the quantum part of Shor's order-finding algorithm.

The bound implies that by performing two limited searches in the classical post-processing part of the algorithm, a high success probability can be guaranteed, for any~$r$, without re-running the quantum part or increasing the exponent length when comparing to the exponent length used by Shor.
By using lattice-based post-processing and enumerating the lattice, the exponent length may in fact be slightly reduced.
The classical post-processing is efficient when accounting for searching.

In practice, the success probability is usually higher than the bound indicates, as our bound stems from a worst case analysis that holds for any~$r$ given only an upper bound~$m$ on the bit length of~$r$.
A better lower bound on the success probability can be achieved by assuming the bit length of~$r$ to be known, and by imposing constraints on~$r$:
For instance, by assuming~$r$ to be random.

Asymptotically, in the limit as~$r$ tends to infinity, the probability of successfully recovering~$r$ tends to one.
This when parameterizing the algorithm so that both the classical and quantum parts execute in polynomial time.
Already for moderate~$r$, a high success probability exceeding e.g.~$1 - 10^{-4}$ can be guaranteed.
This supports our previous results in~\cite[App.~A]{general} that were derived by means of simulations.

As a corollary, we have used our result and the classical post-processing in~\cite{completely} to derive a lower bound on the probability of factoring any integer~$N$ completely in a single order-finding run.
Again, the bound shows that a high success probability can be achieved even for moderate~$N$.
Asymptotically, in the limit as~$N$ tends to infinity, the probability of completely factoring~$N$ in a single run tends to one.
This when parameterizing the algorithm so that all parts execute in polynomial time.

\section*{Acknowledgments}
I am grateful to Johan Håstad for valuable comments and advice.
I thank Andreas Minne for proofreading early versions of this manuscript.
I thank Dennis Willsch for pointing out an issue in Alg.~4 in the first pre-print of this manuscript, and for useful discussions.
Funding and support for this work was provided by the Swedish NCSA that is a part of the Swedish Armed Forces.

\appendix
\section{Supplementary material}
In this appendix we provide some useful supplementary bounds and analyses.

\subsection{Simpler bounds on $|\, \alpha_r\,|$ and the search space in $j$}
\label{appendix:simpler-bound-search-space-j}

We know from~\refeq{eq:hw-requirement} in Sect.~\ref{sec:find-order} that we need $|\, \alpha_r \,| / 2^{m+\ell} < 1/(2r)$, for $\alpha_r = \{rj\}_{2^{m+\ell}}$, to find the convergent $z/r$ given~$j$ by expanding $j / 2^{m+\ell}$ in a continued fraction.

In this section, we give simple bounds on the probability of observing~$j$ yielding $|\, \alpha_r\,|$ of a given bit length.
These bounds enable us to coarsely capture the probability distribution in $|\, \alpha_r \,|$.
In turn, this enables us to understand what the probability is of e.g.\ finding $j_0(z)$ when searching~$j$, $j \pm 1$, $\ldots$, $j \pm B$ for some~$B$.

\subsubsection{Supporting claims}
It is helpful to first introduce a supporting definition and claim:
\begin{appdefinition}
  For $t > 0$ an integer, let $\rho(t)$ denote the probability of observing a frequency $j \in [0, 2^{m+\ell})$ that yields an argument~$\alpha_r$ such that $\left|\, \alpha_r \,\right| \in [2^{t-1}, 2^t)$.
\end{appdefinition}

\begin{appclaim}
  \label{claim:count-alpha}
  There are at most $2^{t}$ values of $j \in [0, 2^{m+\ell})$ that yield~$\alpha_r$ such that
  \begin{align*}
    \left|\, \alpha_r \,\right| \in [2^{t-1}, 2^t).
  \end{align*}
\end{appclaim}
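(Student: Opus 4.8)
The plan is to count the relevant frequencies by separating the count of \emph{attainable arguments} from the size of the fibres of the map $j \mapsto \alpha_r$. Write $d = \gcd(r, 2^{m+\ell})$; since $2^{m+\ell}$ is a power of two and $r < 2^m < 2^{m+\ell}$, we have $d = 2^{k}$ for some integer $k \ge 0$. The map $j \mapsto rj \bmod 2^{m+\ell}$ on $[0, 2^{m+\ell}) \inset \mathbb Z$ has image exactly the multiples of $d$ in $[0, 2^{m+\ell})$, and each value in the image has exactly $d$ preimages. Equivalently, as $j$ ranges over $[0, 2^{m+\ell}) \inset \mathbb Z$, the argument $\alpha_r$ — which is $rj \bmod 2^{m+\ell}$ reduced to $[-2^{m+\ell-1}, 2^{m+\ell-1})$ — takes each multiple of $d$ in $[-2^{m+\ell-1}, 2^{m+\ell-1})$ exactly $d$ times. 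Hence
\[
  \#\{\, j \in [0, 2^{m+\ell}) \inset \mathbb Z \,:\, |\, \alpha_r \,| \in [2^{t-1}, 2^t) \,\}
  \;=\;
  d \cdot M,
\]
where $M$ is the number of multiples of $d$ lying in $[-2^{m+\ell-1}, 2^{m+\ell-1})$ whose absolute value lies in $[2^{t-1}, 2^t)$.

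It then remains to show $d \cdot M \le 2^t$, which I would do by a short case split on the size of $d$, using that $d$ is a power of two. If $d > 2^{t-1}$, then in fact $d \ge 2^t$, so every nonzero multiple of $d$ has absolute value at least $2^t$ and hence fails $|\, \alpha \,| < 2^t$, while $\alpha = 0$ fails $|\, \alpha \,| \ge 2^{t-1}$; thus $M = 0$. If $d \le 2^{t-1}$, the positive multiples of $d$ with $|\, \alpha \,| \in [2^{t-1}, 2^t)$ are $d i$ for $i$ in the integer interval $[\, 2^{t-1}/d,\, 2^t/d \,)$, which contains exactly $2^{t-1}/d$ integers (both endpoints are integer powers of two and $2^{t-1}/d \ge 1$, so $0$ is not among the multiples counted); doubling for the sign, and noting that intersecting with $[-2^{m+\ell-1}, 2^{m+\ell-1})$ can only remove multiples, gives $M \le 2^t/d$. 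In both cases $d \cdot M \le 2^t$, proving the claim.

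I expect the only point requiring genuine care to be the fibre-counting identity: when $r$ is even not every residue in $[-2^{m+\ell-1}, 2^{m+\ell-1})$ is attainable as an $\alpha_r$, but the resulting loss of attainable arguments is exactly offset by the fact that each attainable argument is hit $d$ times — so the product $d \cdot M$ is what one must track, not $M$ alone. Everything else is elementary arithmetic with powers of two, and the degenerate regimes (such as $r$ odd, giving $d = 1$ and $M = 2^t$; or $2^t$ near or above $2^{m+\ell-1}$, which only shrinks $M$) are absorbed automatically by the same case analysis.
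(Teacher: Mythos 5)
Your proposal is correct and takes essentially the same route as the paper's proof: the paper likewise observes that the attainable arguments are exactly the multiples of $2^{\kappa_r}$ (your $d = \gcd(r, 2^{m+\ell})$), each hit by exactly $2^{\kappa_r}$ values of~$j$, and concludes via the same dichotomy on the size of $d$ relative to $2^{t-1}$. The only difference is cosmetic — you derive the fibre structure of $j \mapsto rj \bmod 2^{m+\ell}$ directly rather than citing it, and you spell out the endpoint bookkeeping, both of which are fine.
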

\begin{proof}
  Let $2^{\kappa_r}$ be the largest power of two to divide~$r$.
  By the analysis in~\cite[App.~A.3]{general}, only arguments~$\alpha_r$ that are multiples of $2^{\kappa_r}$ are admissible.
  There are $2^{\kappa_r}$ distinct~$j$ that yield each admissible argument~$\alpha_r$.

  Hence, there are either $2^t$ or no admissible arguments on the interval $[2^{t-1}, 2^t)$, depending on the size of $2^{\kappa_r}$ in relation to $2^{t}$, and so the claim follows.
\end{proof}

\subsubsection{Deriving the bounds}
We now proceed, in analogy with~\cite[Lem.~2]{ekera-pp}, to derive the bounds:

\begin{applemma}
  \label{lemma:bound-rho-1}
  For $r \in [2, 2^m)$, it holds that $\rho(t) \le 2^{m - t}$.
\end{applemma}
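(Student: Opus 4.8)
The plan is to bound $\rho(t)$ by combining a pointwise upper bound on the observation probability $P(\alpha_r)$ with the counting bound of Claim~\ref{claim:count-alpha}. Note first that since $t$ is a positive integer, every frequency $j$ that contributes to $\rho(t)$ satisfies $|\, \alpha_r \,| \ge 2^{t-1} \ge 1$, so in particular $\alpha_r \neq 0$ and the degenerate case $\alpha_r = 0$ never arises here.

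The first step I would carry out is to establish the pointwise bound $P(\alpha_r) \le r / (4 \alpha_r^2)$ valid whenever $\alpha_r \neq 0$. Starting from the closed form~\refeq{eq:P-non-zero}, each of the two numerators $1 - \cos(2 \pi \alpha_r (L+1)/2^{m+\ell})$ and $1 - \cos(2 \pi \alpha_r L / 2^{m+\ell})$ is at most $2$. For the common denominator, I would observe that $\theta_r = 2 \pi \alpha_r / 2^{m+\ell} \in [-\pi, \pi)$ because $\alpha_r \in [-2^{m+\ell-1}, 2^{m+\ell-1})$, so Claim~\ref{claim:cos} applies and gives $1 - \cos \theta_r \ge 2 \theta_r^2 / \pi^2 = 8 \alpha_r^2 / 2^{2(m+\ell)}$. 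Substituting these bounds into~\refeq{eq:P-non-zero} and using $\beta + (r - \beta) = r$ collapses both terms, yielding $P(\alpha_r) \le \frac{r}{2^{2(m+\ell)}} \cdot \frac{2}{1 - \cos \theta_r} \le \frac{r}{4 \alpha_r^2}$.

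The second step is to feed in the hypotheses $|\, \alpha_r \,| \ge 2^{t-1}$ and $r < 2^m$: these give $P(\alpha_r) \le r / (4 \cdot 2^{2t-2}) = r / 2^{2t} < 2^{m - 2t}$ for every admissible argument. The third step applies Claim~\ref{claim:count-alpha}, which says at most $2^t$ frequencies $j \in [0, 2^{m+\ell})$ yield an argument with $|\, \alpha_r \,| \in [2^{t-1}, 2^t)$. Since $\rho(t)$ is exactly the sum of $P(\alpha_r(j))$ over those frequencies, we conclude $\rho(t) \le 2^t \cdot r / 2^{2t} = r / 2^t < 2^{m - t}$, which proves the lemma (in fact with a strict inequality).

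I do not anticipate a genuine obstacle. The only points that require a little care are justifying $\theta_r \in [-\pi, \pi]$ so that Claim~\ref{claim:cos} may legitimately be invoked on the denominator, and noting that the restriction of $t$ to positive integers is what rules out $\alpha_r = 0$ and hence makes the closed form~\refeq{eq:P-non-zero} (rather than~\refeq{eq:P-zero}) the relevant one. Everything else is a short substitution.
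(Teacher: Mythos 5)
Your proposal is correct and follows essentially the same route as the paper: both bound each term of $P(\alpha_r)$ via Claim~\ref{claim:cos} to get $P(\alpha_r) \le r/(4\alpha_r^2) < 2^{m-2t}$, and then multiply by the count of at most $2^t$ admissible frequencies from Claim~\ref{claim:count-alpha}. The only cosmetic difference is that you start from the closed form~\refeq{eq:P-non-zero} while the paper bounds the geometric sums in~\refeq{eq:pr} directly; the estimates are identical.
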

\begin{proof}
For $0 \neq \left|\, \theta_r \,\right| \le \pi$, it holds for any $M > 0$ that
\begin{align*}
  \left|\, \sum_{b \, = \, 0}^{M - 1} \e^{\imag \theta_r b} \,\right|^2
  =
  \frac{1 - \cos(\theta_r M)}{1 - \cos(\theta_r)}
  \le
  \frac{2}{1 - \cos(\theta_r)}
  \le
  \frac{\pi^2}{\theta_r^2},
\end{align*}
where we have used Claim~\ref{claim:cos} in the last step.

Hence, it holds for $M \in \{L, L + 1\}$ and $|\, \alpha_r \,| \in [2^{t-1}, 2^t)$ that
\begin{align*}
  \refeq{eq:pr} \le
  \frac{r}{2^{2(m+\ell)}} \frac{\pi^2}{\theta_r^2} =
  \frac{r}{2^{2(m+\ell)}} \frac{\pi^2}{(2 \pi \alpha_r / 2^{m+\ell})^2} =
  \frac{r}{2^2 \alpha_r^2} <
  \frac{2^m}{2^2 \cdot 2^{2(t-1)}} =
  2^{m-2t}.
\end{align*}

It follows that the probability of observing~$j$ that yields~$\alpha_r$ such that $|\, \alpha_r \,| \in [2^{t-1}, 2^t)$ is at most $2^{m-t}$, as there are at most $2^{t}$ such~$j$ by Claim~\ref{claim:count-alpha} that each occur with probability at most $2^{m-2t}$, and so the lemma follows.
\end{proof}

\begin{applemma}
  \label{lemma:bound-rho-2}
  For $r \in [2^{m-1}, 2^m)$, it holds that $\rho(t) \le 2^{t+3-m}$.
\end{applemma}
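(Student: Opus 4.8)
The plan is to mimic the proof of Lem.~\ref{lemma:bound-rho-1}, but to bound the geometric sums in~\refeq{eq:pr} by the \emph{trivial} estimate $\bigl|\, \sum_{b=0}^{M-1} \e^{\imag \theta_r b} \,\bigr|^2 \le M^2$ rather than by the estimate coming from Claim~\ref{claim:cos}; this trivial bound is the sharper one precisely when~$r$ is large. Starting from the general probability formula~\refeq{eq:pr}, and noting that the first inner sum has $L+1$ terms and the second has~$L$ terms, the triangle inequality yields
\begin{align*}
  P
  \le
  \frac{\beta (L+1)^2 + (r - \beta) L^2}{2^{2(m+\ell)}}
  \le
  \frac{r (L+1)^2}{2^{2(m+\ell)}}
\end{align*}
for every frequency $j \in [0, 2^{m+\ell})$. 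Since $L = \floor{2^{m+\ell}/r} \le 2^{m+\ell}/r$ and $r < 2^{m+\ell}$ (as $\ell \ge 1$ and $r < 2^m$), we have $L + 1 < 2 \cdot 2^{m+\ell}/r$, and hence $P < 4/r$ for every~$j$.

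Next I would specialize to $r \in [2^{m-1}, 2^m)$. Using $r \ge 2^{m-1}$ in the per-frequency bound just obtained gives $P < 4 / 2^{m-1} = 2^{3-m}$ for every~$j$. By Claim~\ref{claim:count-alpha}, at most $2^t$ frequencies $j \in [0, 2^{m+\ell})$ yield an argument~$\alpha_r$ with $|\, \alpha_r \,| \in [2^{t-1}, 2^t)$; summing the per-frequency bound over these frequencies gives $\rho(t) \le 2^t \cdot 2^{3-m} = 2^{t+3-m}$, which is the claimed inequality.

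I do not expect a genuine obstacle here: once the trivial per-frequency estimate $P < 4/r$ is in place, the statement follows by a one-line union bound, and the only mildly delicate step is $L+1 < 2 \cdot 2^{m+\ell}/r$, which merely requires $2^{m+\ell}/r > 1$. It is worth remarking that Lem.~\ref{lemma:bound-rho-1} and this lemma are complementary: for~$r$ near $2^m$, the bound $2^{t+3-m}$ proved here beats $2^{m-t}$ from Lem.~\ref{lemma:bound-rho-1} for small~$t$ --- i.e.\ exactly in the regime of small $|\, \alpha_r \,|$ that governs whether $j_0(z)$ and its nearby frequencies can be solved for~$z/r$.
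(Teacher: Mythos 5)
Your proposal is correct and follows essentially the same route as the paper's proof: bound the geometric sums in~\refeq{eq:pr} trivially by $M^2$ with $M \in \{L, L+1\}$, deduce a per-frequency bound of $2^{3-m}$ using $r \ge 2^{m-1}$, and conclude with Claim~\ref{claim:count-alpha}. The only (immaterial) difference is that you first derive the universal estimate $P < 4/r$ via $L + 1 < 2 \cdot 2^{m+\ell}/r$ and then specialize, whereas the paper bounds $(L+1)^2 < 2^{2\ell+3}$ directly from $r \ge 2^{m-1}$.
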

\begin{proof}
For $\theta_r \in \mathbb R$, it holds for any $M > 0$ that
\begin{align*}
  \left|\, \sum_{b \, = \, 0}^{M - 1} \e^{\imag \theta_r b} \,\right|^2
  \le
  M^2,
\end{align*}
since the left-hand side is the square norm of a sum of~$M$ unit vectors.

Furthermore, for $\delta = \beta / r \in [0, 1)$, it holds that
\begin{align*}
  L^2 < (L + 1)^2
  & =
  \left( \floor{\frac{2^{m+\ell}}{r}} + 1 \right)^2
  =
  \left( \frac{2^{m+\ell}}{r} - \delta + 1 \right)^2 \\
  &\le
  \left( 2^{\ell + 1} + 1 - \delta \right)^2
  \le
  \left( 2^{\ell + 1} + 1 \right)^2
  <
  2^{2\ell + 3}
\end{align*}
where we have used that~$\ell$ is a positive integer in the last step, and furthermore that $r \ge 2^{m-1}$.
Hence, it holds for $M \in \{L, L + 1\}$ that
\begin{align*}
  \refeq{eq:pr} \le
  \frac{r}{2^{2(m+\ell)}} (L + 1)^2 <
  \frac{r}{2^{2(m+\ell)}} 2^{2 \ell + 3} < 2^{3-m}.
\end{align*}

It follows that the probability of observing~$j$ that yields~$\alpha_r$ such that $|\, \alpha_r \,| \in [2^{t-1}, 2^t)$ is at most $2^{t+3-m}$, as there are at most $2^{t}$ such~$j$ by Claim~\ref{claim:count-alpha} that each occur with probability at most $2^{3-m}$, and so the lemma follows.
\end{proof}

\begin{applemma}
  \label{lemma:bound-rho}
  For $r \in [2^{m-1}, 2^m)$, it holds that $\rho(t) \le \min(2^{m-t},\, 2^{t+3-m})$.
\end{applemma}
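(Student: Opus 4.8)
The plan is to observe that this lemma is an immediate consequence of the two preceding lemmas. First I would note that the hypothesis $r \in [2^{m-1}, 2^m)$ implies in particular that $r \in [2, 2^m)$, since $m \ge 2$ (indeed $r \ge 2$ forces $m \ge 2$ whenever $r \in [2^{m-1}, 2^m)$). Hence Lem.~\ref{lemma:bound-rho-1} applies and gives $\rho(t) \le 2^{m-t}$. Second, the hypothesis $r \in [2^{m-1}, 2^m)$ is exactly the hypothesis of Lem.~\ref{lemma:bound-rho-2}, so that lemma applies directly and gives $\rho(t) \le 2^{t+3-m}$.

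Combining the two bounds, $\rho(t)$ is at most both $2^{m-t}$ and $2^{t+3-m}$, hence at most their minimum, which is precisely the claimed bound $\rho(t) \le \min(2^{m-t},\, 2^{t+3-m})$.

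There is no genuine obstacle here: the work was already done in Lem.~\ref{lemma:bound-rho-1} and Lem.~\ref{lemma:bound-rho-2}, where the two regimes — $\alpha_r$ small (so the geometric sum is controlled by the $\pi^2/\theta_r^2$ bound from Claim~\ref{claim:cos}) and $\alpha_r$ large (so one instead uses the trivial $M^2$ bound on the number of unit vectors together with $L+1 < 2^{\ell+3/2}$) — are handled separately. The only thing worth checking carefully is that the range conditions on $r$ in the two cited lemmas are both satisfied under the single hypothesis $r \in [2^{m-1}, 2^m)$; once that is verified, taking the pointwise minimum of the two upper bounds finishes the proof. I would present it as a two-sentence proof citing Lem.~\ref{lemma:bound-rho-1} and Lem.~\ref{lemma:bound-rho-2}.
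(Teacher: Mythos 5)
Your proposal is correct and matches the paper's proof exactly: the paper likewise proves the lemma by combining Lem.~\ref{lemma:bound-rho-1} and Lem.~\ref{lemma:bound-rho-2}, whose hypotheses are both satisfied when $r \in [2^{m-1}, 2^m)$ (recalling $r \ge 2$). Taking the minimum of the two upper bounds is all that is needed.
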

\begin{proof}
  The lemma follows by combining Lem.~\ref{lemma:bound-rho-1} and Lem.~\ref{lemma:bound-rho-2}.
\end{proof}

It follows from~\ref{lemma:bound-rho} that we are likely to observe~$j$ yielding an argument~$\alpha_r$ such that $|\, \alpha_r \,|$ is approximately of length~$m$ bits, for~$m$ the bit length of~$r$.

Hence, we can find the optimal~$j$ by searching small offsets:
If we offset~$j$ by some small integer~$t$, we offset $|\, \alpha_r \,|$ by $rt$, assuming~$\ell$ to be sufficiently large so that no modular reductions occur.
We do not need to try a very large offset to find the optimal $j = j_0(z)$, yielding an argument $\alpha_0(z) \in (-r/2, r/2] \inset \mathbb Z$.

\subsection{Order finding in Shor's factoring algorithm}
\label{appendix:notes-shor-order-finding-factoring}
Shor's original motivation for introducing the order-finding algorithm in~\cite{shor94, shor97} was to use it to factor integers, via a randomized reduction to order finding.
When Shor describes how to select parameters for the order-finding algorithm, he therefore does so from a factoring context, using the integer~$N$ to be factored as a baseline.

Specifically, for~$N$ an odd integer with at least two distinct prime factors, Shor selects~$g$ uniformly at random from~$\mathbb Z_N^*$ and quantumly computes the order~$r$ of~$g$.
Shor then uses~$r$ to attempt to split~$N$ into a product of two non-trivial factors.

For $m + \ell$ the exponent length, Shor requires that $N^2 \le 2^{m+\ell} < 2 N^2$ \cite[p.~1498]{shor97}, so that $2^{m+\ell} \ge N^2 > r^2$.
We analogously require that $2^{m+\ell} > r^2$ when solving for~$r$ in a single run using continued fractions-based or lattice-based post-processing.
To ensure that $2^{m+\ell} > r^2$ in the context of factoring $N$, we do however use $N/2$ rather than $N$ as an upper bound on $r$, see Claim~\ref{claim:carmichael} --- i.e.\ we set $m+\ell$ to the bit length\footnote{Note that since~$N$ is an odd composite, Shor's requirement that $N^2 \le 2^{m+\ell} < 2 N^2$ simply sets $m+\ell$ to the bit length of $N^2$. We instead set $m+\ell$ to the bit length of $(N/2)^2$.} of $(N/2)^2$ rather than to the bit length of $N^2$.

Hence, we have $2^{m+\ell} \ge (N/2)^2 > r^2$, or equivalently $2^{m+\ell+2} \ge N^2$, so our exponent is two bits shorter than Shor's exponent.
If we would instead have used~$N^2$ as an upper bound, our exponent would have been of the same length as Shor's.

It follows that we do not achieve an increased success probability in our analysis at the expense of increasing the exponent length $m+\ell$.
Note also that we may slightly further reduce $m+\ell$ by using lattice-based post-processing, see App.~\ref{appendix:lattice-based-post-processing}.

\subsubsection{Supporting claims}
\label{appendix:support-claims-carmichael}

\carmichaelclaim*
\begin{proof}
  By the unique factorization theorem
  \begin{align*}
    N = \prod_{i \, = \, 1}^n p_i^{e_i}
    \quad \Rightarrow \quad
    \lambda(N) = \lcm((p_1 - 1) p_1^{e_1 - 1}, \, \ldots, \, (p_n - 1) p_n^{e_n - 1})
  \end{align*}
  for~$p_i$ pairwise distinct odd primes, and~$e_i$ some positive integers, for $i \in [1, n] \inset \mathbb Z$.

  Since $p_i - 1$ is even for all $i \in [1, n] \inset \mathbb Z$, it follows that
  \begin{align*}
    \lambda(N) &= \lcm((p_1 - 1) p_1^{e_1 - 1}, \, \ldots, \, (p_n - 1) p_n^{e_n - 1}) \\
               &\le 2 \prod_{i\, = \,1}^n \frac{(p_i - 1)}{2} p_i^{e_i - 1}
                < 2^{1-n} \prod_{i\, = \,1}^n p_i^{e_i} = 2^{1-n} N
  \end{align*}
  and so the claim follows.
\end{proof}

Note that the idea of using~$N/2$ to bound~$r$ from above when factoring via order finding was seemingly first introduced by Gerjuoy~\cite{gerjuoy}, see Sect.~\ref{sec:earlier-works}.
A similar more general lemma was also later used by Bourdon and Williams~\cite[Lem.~3]{bourdon}.

\section{Continued fractions-based post-processing}
\label{appendix:continued-fractions-based-post-processing}
In this appendix, we show that when expanding $j_0(z) / 2^{m+\ell}$ in a continued fraction, the last convergent~$p/q$ with denominator~$q < 2^{(m+\ell)/2}$ is equal to~$z/r$ if $2^{m+\ell} > r^2$.

This statement implies that when solving not only~$j$ but also $j \pm 1, \, \ldots, \, j \pm B$ for $\tilde r = r / d$ where $d = \gcd(r, z)$, in the hope of thus solving~$j_0(z)$ for~$\tilde r$, it suffices to consider a single candidate for~$\tilde r$ for each offset in~$j$ considered.

Note that this statement is analogous to Shor's original statement~\cite[p.~1500]{shor97} that when $2^{m+\ell} \ge N^2 > r^2$ the last convergent~$p/q$ with denominator~$q < N$ in the expansion of $j_0(z) / 2^{m+\ell}$ must be equal to~$z/r$.
(We have merely generalized it to account for us selecting $m+\ell$ so that $2^{m+\ell} > r^2$.\footnote{In the context of factoring $N$, we use that $r < N/2$, see Claim~\ref{claim:carmichael}, and select the least $m+\ell$ such that $2^{m+\ell} \ge (N/2)^2 > r^2$. This implies that we cannot pick the last convergent with $q < N$. Instead, we pick the last convergent with $q < N/2$, or more generally with $q < 2^{(m+\ell)/2}$.})
It is furthermore analogous to the statement in Lem.~\ref{lemma:lattice-recover-tilde-r-shortest} in App.~\ref{appendix:lattice-based-post-processing} in the context of lattice-based processing.

\subsection{Preliminaries}
Before proceeding, we first need to introduce two standard supporting claims:

\begin{appclaim}
  \label{claim:convergent-existence}
  For $x \in \mathbb R$, $p \in \mathbb Z$ and $q \in \mathbb Z_{\ge 1}$, the convergent $p/q$ is in the continued fraction expansion of~$x$ if
  \begin{align*}
    \left|\, x - \frac{p}{q} \,\right| < \frac{1}{2q^2}.
  \end{align*}
\end{appclaim}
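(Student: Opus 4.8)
The statement is Legendre's classical criterion for a fraction to appear among the continued-fraction convergents, and the plan is to prove it along the textbook lines. First I would reduce to the coprime case: if $p/q$ in lowest terms is $p'/q'$ with $q' \le q$, then the hypothesis gives $\left|\, x - p'/q' \,\right| < 1/(2q^2) \le 1/(2q'^2)$, so it suffices to show that $p'/q'$ is a convergent of~$x$; hence I assume $\gcd(p,q) = 1$ from now on. Then I would write
\begin{align*}
  x - \frac{p}{q} = \frac{\varepsilon\theta}{q^2}
  \quad \text{ with } \quad
  \varepsilon \in \{-1,+1\}, \quad \theta \in [0, \tfrac{1}{2}),
\end{align*}
the hypothesis being precisely $\theta < 1/2$. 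If $\theta = 0$ then $x = p/q$ is the final convergent of its own finite expansion and the claim is immediate, so assume $0 < \theta < 1/2$.

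Next I would expand $p/q$ as a finite continued fraction $[a_0; a_1, \ldots, a_n]$ with convergents $p_k/q_k$, so that $p_n/q_n = p/q$. Every rational has exactly two such expansions, of opposite parity in~$n$ (via $[a_0;\ldots,a_n] = [a_0;\ldots,a_n-1,1]$ for $a_n \ge 2$, and $[a_0] = [a_0-1;1]$ in the integer case), and I would pick the parity for which $p_n q_{n-1} - p_{n-1} q_n = -\varepsilon$; this is possible since the determinant of consecutive convergents is $(-1)^{n-1}$, so one parity realizes each sign. I then define the real number~$\omega$ by
\begin{align*}
  x = \frac{p_n\omega + p_{n-1}}{q_n\omega + q_{n-1}}
\end{align*}
and solve for it. Using $x q_n - p_n = q_n(x - p_n/q_n) = \varepsilon\theta/q_n$ together with the sign choice, a short computation gives
\begin{align*}
  \omega = \frac{1}{\theta} - \frac{q_{n-1}}{q_n},
\end{align*}
and since $0 \le q_{n-1}/q_n \le 1$ (because $q_{n-1} \le q_n$ for every~$n$) and $1/\theta > 2$, I conclude $\omega > 1$.

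Finally, since $\omega > 1$, its continued-fraction expansion begins with an integer at least~$1$, so appending it to $[a_0;\ldots,a_n]$ produces a genuine continued-fraction expansion of~$x$: all partial quotients past~$a_n$ are $\ge 1$, and the expansion does not collapse precisely because $\omega \neq 1$. In that expansion $p_n/q_n = p/q$ occurs as a convergent, which is what we wanted. The one place that needs real care is the parity/sign step — one must check that the chosen expansion is a legitimate continued fraction and that the determinant identity has the stated form — and I expect that to be the main (though still routine) obstacle. It is worth noting that there is no slack in the hypothesis: $\theta = 1/2$ would permit $\omega = 1$, whence $[a_0;\ldots,a_n,1] = [a_0;\ldots,a_n+1]$ and $p/q$ would fail to be a convergent, which is exactly why the bound must be a strict inequality.
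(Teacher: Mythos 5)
Your proof is correct: it is the classical Legendre criterion argument, and the paper itself does not spell out a proof but simply cites Hardy and Wright, Thm.~184, whose proof proceeds exactly as you describe (reduce to lowest terms, choose the parity of the finite expansion of $p/q$ to match the sign of $x - p/q$, solve for the tail $\omega$ and show $\omega > 1$). Nothing to add.
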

\begin{proof}
  See \cite[Thm.~184 on p.~153]{hw} for the proof.
\end{proof}

\begin{appclaim}
  \label{claim:convergent-uniqueness}
  For~$x \in \mathbb R$ and $L \in \mathbb Z_{> 1}$, there is at most one convergent~$p/q$ with~$p, q$ coprime integers such that $q \in (0, L)$, $p \in (0, q)$ and
  \begin{align*}
    \left|\, x - \frac{p}{q} \,\right| \le \frac{1}{2 L^2}.
  \end{align*}
\end{appclaim}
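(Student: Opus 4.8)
The plan is to argue by contradiction using a standard separation argument for rationals with bounded denominators. Suppose there were two \emph{distinct} convergents $p_1/q_1$ and $p_2/q_2$ of~$x$, both in lowest terms with $q_1, q_2 \in (0, L)$, each satisfying $|\, x - p_i/q_i \,| \le 1/(2L^2)$. The goal is to squeeze $|\, p_1/q_1 - p_2/q_2 \,|$ between two incompatible bounds, one from the hypothesis and one from the arithmetic of fractions.

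For the upper bound, I would simply apply the triangle inequality: since both convergents lie within $1/(2L^2)$ of~$x$,
\begin{align*}
  \left|\, \frac{p_1}{q_1} - \frac{p_2}{q_2} \,\right|
  \le
  \left|\, \frac{p_1}{q_1} - x \,\right| + \left|\, x - \frac{p_2}{q_2} \,\right|
  \le
  \frac{1}{L^2}.
\end{align*}
For the lower bound, note that because the two fractions are distinct, the integer $p_1 q_2 - p_2 q_1$ is nonzero, hence $|\, p_1 q_2 - p_2 q_1 \,| \ge 1$, and therefore
\begin{align*}
  \left|\, \frac{p_1}{q_1} - \frac{p_2}{q_2} \,\right|
  =
  \frac{|\, p_1 q_2 - p_2 q_1 \,|}{q_1 q_2}
  \ge
  \frac{1}{q_1 q_2}
  >
  \frac{1}{L^2},
\end{align*}
where the last strict inequality uses $q_1, q_2 \in (0, L)$, so that $q_1 q_2 < L^2$. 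Combining the two displays yields $1/L^2 < 1/L^2$, a contradiction, so at most one such convergent can exist.

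The argument is almost entirely routine; the only point requiring a little care is justifying that the two convergents, if they were different, really are distinct as rational numbers, so that $p_1 q_2 - p_2 q_1 \neq 0$. This is immediate because convergents of a continued fraction are by construction in lowest terms with strictly increasing denominators, so two different convergents are never equal as fractions. I would also remark in passing that the coprimality of $p, q$ and the constraint $p \in (0, q)$ play no essential role in the bound — the statement is really one about any two distinct rationals with denominators in $(0, L)$ — and that the hypothesis $L \in \mathbb Z_{>1}$ is only needed so that the interval $(0, L)$ can contain a positive integer denominator at all.
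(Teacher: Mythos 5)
Your proof is correct and follows essentially the same route as the paper's: a triangle-inequality upper bound of $1/L^2$ on the separation of the two convergents, contradicted by the lower bound $|\, p_1 q_2 - p_2 q_1 \,| / (q_1 q_2) \ge 1/(q_1 q_2) > 1/L^2$. The only cosmetic difference is that the paper passes through $q_1 q_2 \le (L-1)^2$ using integrality of the denominators, whereas you bound $q_1 q_2 < L^2$ directly; both are valid.
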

\begin{proof}
  Suppose the contrary that there is a second convergent $p'/q' \neq p/q$ with~$p', q'$ coprime integers such that $q' \in (0, L)$, $p' \in (0, q')$ and
  \begin{align*}
    \left|\, x - \frac{p'}{q'} \,\right| \le \frac{1}{2 L^2}.
  \end{align*}

  By the triangle inequality, it must then be that
  \begin{align}
    \label{eq:claim-convergent-uniqueness-eq1}
    \left|\, \frac{p}{q} - \frac{p'}{q'} \,\right|
    =
    \left|\, \left( x - \frac{p'}{q'} \right) - \left( x - \frac{p}{q} \right) \,\right|
    \le
    \left|\, x - \frac{p'}{q'} \,\right| + \left|\, x - \frac{p}{q} \,\right|
    \le
    \frac{1}{L^2}.
  \end{align}
  At the same time
  \begin{align}
    \label{eq:claim-convergent-uniqueness-eq2}
    \left|\, \frac{p}{q} - \frac{p'}{q'} \,\right|
    =
    \left|\, \frac{p \cdot q'}{q \cdot q'} - \frac{p' \cdot q}{q' \cdot q} \,\right|
    =
    \frac{|\, p \cdot q' - p' \cdot q \,|}{q' \cdot q}
    \ge
    \frac{1}{(L-1)^2}
    >
    \frac{1}{L^2},
  \end{align}
  as $q' \cdot q \in [1, (L-1)^2] \inset \mathbb Z$ and $0 \neq p \cdot q' - p' \cdot q \in \mathbb Z$.

  The claim follows from the contradiction between~\refeq{eq:claim-convergent-uniqueness-eq1} and~\refeq{eq:claim-convergent-uniqueness-eq2}.
\end{proof}

\subsection{Identifying the convergent $z/r$}
\continuedfractionslemma*
\begin{proof}
  It follows from Claim~\ref{claim:convergent-existence}, and the fact that
  \begin{align*}
    \left|\, \frac{j_0(z)}{2^{m+\ell}} - \frac{z}{r} \,\right|
    =
    \left|\, \frac{rj_0(z)}{2^{m+\ell} r} - \frac{2^{m+\ell} z}{2^{m+\ell} r} \,\right|
    =
    \frac{\left|\, \{r j_0(z)\}_{2^{m+\ell}} \,\right|}{2^{m+\ell} r}
    \le
    \frac{1}{2 \cdot 2^{m+\ell}}
    <
    \frac{1}{2r^2},
  \end{align*}
  that the convergent $p/q = z/r$ must occur in the continued fraction expansion of $x = j_0(z) / 2^{m+\ell}$, where we note explicitly that $q \le r < 2^{(m+\ell)/2}$ as $2^{m+\ell} > r^2$.

  Trivially $j_0(z) = 0$ if and only if $z = 0$, in which case $p/q = 0/1 = z/r$ is the only convergent in the continued fraction expansion of $x = 0$.
  Suppose that $z \neq 0$:

  By Claim~\ref{claim:convergent-uniqueness}, there is then at most one convergent $p/q$ such that
  \begin{align*}
    \left|\, \frac{j_0(z)}{2^{m+\ell}} - \frac{p}{q} \,\right| \le \frac{1}{2 \cdot 2^{m+\ell}}
  \end{align*}
  with $p, q$ coprime integers such that $q \in (0, 2^{(m+\ell)/2})$ and $p \in (0, q)$.
  This convergent $p/q$ must be equal to $z/r$, with $p = z/d$ and $q = r/d = \tilde r$ for $d = \gcd(r, z)$.

  Hence, as successive convergents $p/q$ in the expansion of $x = j_0(z) / 2^{m+\ell}$ yield increasingly good approximations to~$x$ and therefore must have strictly increasing denominators, we will recover $z/r$ if we pick the last convergent in the expansion with denominator $q < 2^{(m+\ell)/2}$, and so the lemma follows.
\end{proof}

\section{Lattice-based post-processing}
\label{appendix:lattice-based-post-processing}
In this appendix, we use lattice-based post-processing to recover $\tilde r = r / \gcd(r, z)$ from an optimal frequency $j = j_0(z)$ for any $z \in [0, r) \inset \mathbb Z$ in the setting where~$m$ is selected so that $2^m > r$, and where $\ell = m - \Delta$ for some $\Delta \in [0, m) \inset \mathbb Z$.

To this end, we essentially follow~\cite{general}, except that we specifically consider and analyze the two-dimensional case, and that we do so under the assumption that~$j$ is optimal.
We bound the number of vectors that must at most be enumerated in the lattice to guarantee that~$\tilde r$ may be recovered from one of the vectors enumerated.

\subsection{Earlier related works}
Before proceeding, let us review some earlier works on post-processing the output from Shor's order-finding algorithm, and their respective relations to this work:

\subsubsection{Notes on the relation to Seifert's work}
As stated in Sect.~\ref{sec:earlier-works}, Seifert~\cite{seifert} explores tradeoffs by letting $\ell \sim m/s$ for some integer~$s > 1$.
Each run then provides at least\footnote{If~$m$ is the bit length of~$r$, the algorithm yields $\sim \ell$ bits of information on~$r$.
If~$m$ is greater than the bit length of~$r$, the algorithm yields more than $\sim \ell$ bits of information on~$r$.}~$\sim \ell$ bits of information on~$r$, so~$\sim s$ runs are required to ensure there is sufficient information available to solve for~$r$.

Seifert~\cite{seifert} first performs $n \ge s$ runs of the quantum order-finding algo\-rithm with $\ell = m/s$ in the hope of obtaining a set of~$n$ good frequencies $\{ j_1, \, \ldots, \, j_n \}$.
He then jointly post-processes this set of frequencies by generalizing Shor's original continued fractions-based post-processing algorithm to higher dimensions.

In~\cite[App.~A and Sect.~6.2]{general}, Ekerå instead uses lattice-based post-processing, that is adapted from~\cite{ekera-pp, ekera-hastad}, for both Shor's and Seifert's algorithms.
Furthermore, Ekerå relaxes the requirement on the frequencies by capturing the probability distribution induced by the quantum algorithm.
See~\cite[App.~A and Sect.~6.2]{general}, the literature review in Sect.~\ref{sec:earlier-works-simulations}, and App.~\ref{appendix:simpler-bound-search-space-j}, for further details.

The post-processing algorithm that we introduce in this app\-en\-dix is modeled upon~\cite{general} via~\cite{ekera-pp, ekera-hastad}, but it solves a single optimal frequency $j_0(z)$ for $\tilde r = r/d$, from which~$r$ may then be recovered via Alg.~\ref{alg:recover-r} or Alg.~\ref{alg:recover-r-tree} when $d = \gcd(r, z)$ is $cm$-smooth.
By searching offsets in the frequency observed, we find $j_0(z)$ with high probability.
This search is feasible to mount when post-processing a single frequency, as the search space is then small.

\subsubsection{Notes on the relation to Koenecke's and Wocjan's work}
Koenecke and Wocjan~\cite{koenecke-wocjan} observe that the problem of finding the convergent $z/r$ in the continued fraction expansion in Shor's algorithm may be perceived as a lattice problem, and be solved using a slightly different lattice-based post-processing:

Specifically, they seek to recover~$r$ from two optimal frequencies $j_1 = j_0(z_1)$ and $j_2 = j_0(z_2)$, such that $z_1$ and $z_2$ are coprime, returned from two separate runs of the quantum part of Shor's original order-finding algorithm.
They hence require at least two runs of the quantum part, with an exponent of length as in Shor's original algorithm, so $\ell \sim m$.
Their lattice basis is different from that in~\cite{general}.

The post-processing algorithm in this appendix --- that stems from the post-processing in~\cite{general} via~\cite{ekera-pp, ekera-hastad} --- requires only a single optimal frequency, and hence only a single run of the quantum part, provided it yields a frequency that is close enough to an optimal frequency $j_0(z)$ for it to be found by searching.
It recovers $\tilde r = r/d$, from which~$r$ may be recovered when $d = \gcd(r, z)$ is $cm$-smooth.

\subsubsection{Notes on the relation to Knill's work}
As stated in Sect.~\ref{sec:earlier-works}, Knill~\cite{knill} explores tradeoffs between the exponent length, the search space in the classical post-processing, and the success probability, in the context of using continued fractions-based post-processing:

For $\ell = m - \Delta$ for some small~$\Delta$, Knill essentially proposes to solve an optimal frequency $j = j_0(z)$ for convergents with denominators on successive limited intervals using Lehmer's algo\-rithm~\cite[Alg.~1.3.13 on p.~22]{cohen} so as to recover~$z/r$.
This is similar to the lattice-based post-processing that we introduce in this appendix.

Knill says to run the quantum part of Shor's order-finding algorithm twice for the same~$g$ to obtain two optimal frequencies $j_1 = j_0(z_1)$ and $j_2 = j_0(z_2)$, to post-process these independently to obtain $z_1 / r$ and $z_2 / r$, and to then take the least common multiple of the denominators of the convergents as the candidate for~$r$.

The whole process then requires at least two runs.
As stated above, our post-processing algorithm requires only a single optimal frequency, and hence only a single run of the quantum part, provided it yields a frequency that is close enough to an optimal frequency $j_0(z)$ for it to be found by searching.
It recovers~$\tilde r = r/d$, from which~$r$ may be recovered when $d = \gcd(r, z)$ is $cm$-smooth.

\subsection{Preliminaries}
We follow~\cite{general}, and let~$\mathcal L$ be the lattice spanned by $\vec b_1 = (j, 1/2)$ and $\vec b_2 = (2^{m+\ell}, 0)$, where $j = j_0(z)$ is an optimal frequency for some peak index $z \in [0, r) \inset \mathbb Z$.

Note that, compared to~\cite{general}, we have scaled the second component of $\vec b_1$ slightly by a factor of $1/2$ as $|\, \alpha_r \,| = |\, \alpha_0(z) \,| \le r/2$ when~$j = j_0(z)$.
This yields slightly better constants in the analysis:
In particular, for $\Delta = 0$ it makes the lattice-based post-processing perform on par with continued fractions-based post-processing.

All vectors in~$\mathcal L$ may be written on the form
\begin{align*}
  \vec v(m_1, m_2)
  =
  m_1 \vec b_1 - m_2 \vec b_2
  =
  (m_1 j - 2^{m+\ell} m_2, m_1 / 2) \in \mathcal L
\end{align*}
for $m_1, m_2 \in \mathbb Z$.
In particular, for $d = \gcd(r, z)$, the vector
\begin{align*}
  \vec u = \vec v(r/d, z/d) = (rj - 2^{m+\ell} z, r / 2) / d = (\alpha_0(z), r / 2) / d = (\alpha_0(z) / d, \tilde r / 2) \in \mathcal L,
\end{align*}
and it has $\tilde r = r/d$ as its second component.

Furthermore, as $d \ge 1$, $r < 2^m$ and $|\, \alpha_r \,| = |\, \alpha_0(z) \,| \le r/2$, we have that
\begin{align}
  |\, \vec u \,| = \frac{\sqrt{\alpha_0(z)^2 + (r/2)^2}}{d^2} \le \sqrt{\left( \frac{r}{2} \right)^2 + \left( \frac{r}{2} \right)^2} = \frac{r}{\sqrt{2}} < 2^{m - \frac{1}{2}}, \label{eq:lattice-bound-u}
\end{align}
where $|\, \vec x \,|$ denotes the Euclidean norm of $\vec x \in \mathcal L$, both above and in what follows.

The idea is now to enumerate all vectors in~$\mathcal L$ that are within a circle of radius $2^{m-\frac{1}{2}}$ centered at the origin to find~$\vec u$ and hence~$\tilde r$.
Alg.~\ref{alg:filter-tilde-r} may be used to filter the candidates for~$\tilde r$ thus generated when~$d$ is $cm$-smooth.
A better option is to use the optimized filtering algorithm in App.~\ref{appendix:lattice-based-post-processing-filtering-candidates-efficiently} that leverages the fact that all candidates for~$\tilde r$ stem from vectors that are in~$\mathcal L$.
Once~$\tilde r$ has been found, the order~$r$ may be recovered from~$\tilde r$ using Alg.~\ref{alg:recover-r} or~\ref{alg:recover-r-tree} when~$d$ is $cm$-smooth, see Sect.~\ref{sec:find-order-recover-r-from-tilde-r}.

Note that there are other possible approaches:
We could e.g.\ accept a larger $|\, \alpha_r \,|$, as in~\cite{general}, to avoid first searching for the optimal frequency $j_0(z)$ close to the frequency~$j$ observed, and then solving at most all of these frequencies using lattice-based techniques.
This at the expense of enumerating at most all vectors within a much larger circle in~$\mathcal L$ --- but at the benefit of only performing a single such large enumeration, as opposed to many small enumerations.

Our objective in this appendix is to keep the post-processing simple to analyze, and to align it with the analysis in the main part of the paper.
Therefore, we take the three-step approach of first searching for $j_0(z)$ given some close frequency, then recovering~$\tilde r$ from $j = j_0(z)$, and finally recovering~$r$ from~$\tilde r$.

\subsection{Notation and supporting claims}
Up to signs, let~$\vec s_1$ of norm $\lambda_1$ be a shortest non-zero vector in~$\mathcal L$, and let~$\vec s_2$ of norm $\lambda_2 \ge \lambda_1$ be the shortest non-zero vector in~$\mathcal L$ that is linearly independent to~$\vec s_1$.
Note that~$\vec s_1$ and~$\vec s_2$ may be found efficiently using Lagrange's algorithm~\cite{lagrange, nguyen}.

Furthermore, let $\vec s^{\perp}_2$ of norm $\lambda_2^{\perp}$, and $\vec s^{\parallel}_2$ of norm $\lambda_2^{\parallel}$, be the components of~$\vec s_2$ that are orthogonal and parallel to~$\vec s_1$, respectively.

\begin{appclaim}
  \label{claim:lattice-det}
  It holds that $\lambda_1 \lambda_2^{\perp} = 2^{m+\ell-1}$.
\end{appclaim}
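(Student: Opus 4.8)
The plan is to relate both sides of the claimed identity to the covolume (determinant) of the lattice~$\mathcal L$. First I would compute $\det \mathcal L$ directly from the given basis: since~$\mathcal L$ is spanned by $\vec b_1 = (j, \tfrac12)$ and $\vec b_2 = (2^{m+\ell}, 0)$, the absolute value of the determinant of the matrix with rows $\vec b_1$ and $\vec b_2$ is $\left|\, j \cdot 0 - \tfrac12 \cdot 2^{m+\ell} \,\right| = 2^{m+\ell-1}$, and this quantity is independent of the choice of basis.

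Next I would use the standard fact that in a two-dimensional lattice the successive minima are attained by a basis; that is, $\vec s_1$ and $\vec s_2$, as defined just above the claim, form a basis of~$\mathcal L$. This is precisely the output guarantee of Lagrange's algorithm, already invoked in the preceding paragraph via~\cite{lagrange, nguyen}, so $\left|\, \det(\vec s_1, \vec s_2) \,\right| = \det \mathcal L = 2^{m+\ell-1}$.

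Finally, I would identify $\left|\, \det(\vec s_1, \vec s_2) \,\right|$ with the area of the fundamental parallelogram spanned by~$\vec s_1$ and~$\vec s_2$, and evaluate that area by the base-times-height formula: taking~$\vec s_1$ as the base, the corresponding height is exactly the length~$\lambda_2^{\perp}$ of the component~$\vec s^{\perp}_2$ of~$\vec s_2$ orthogonal to~$\vec s_1$, so the area equals $\lambda_1 \lambda_2^{\perp}$. Combining the two expressions for the area yields $\lambda_1 \lambda_2^{\perp} = 2^{m+\ell-1}$, as claimed. There is essentially no obstacle here: the only step that warrants a citation rather than a one-line computation is the assertion that the two successive minima of a planar lattice are realised by a lattice basis, which is classical and underlies the correctness of Lagrange's algorithm already referenced.
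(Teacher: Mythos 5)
Your proposal is correct and follows essentially the same route as the paper, which likewise identifies $\lambda_1 \lambda_2^{\perp}$ with the area of the fundamental parallelogram, i.e.\ with $\det \mathcal L = 2^{m+\ell-1}$. You merely spell out the intermediate steps (computing $\det \mathcal L$ from the given basis, and noting that $\vec s_1, \vec s_2$ form a basis so that base times height applies) that the paper's one-line proof leaves implicit.
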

\begin{proof}
  The claim follows from the fact that $\lambda_1 \lambda_2^{\perp} = \det \mathcal L = 2^{m+\ell-1}$ is the area of the fundamental parallelogram in~$\mathcal L$.
\end{proof}

\begin{appclaim}
  \label{claim:lattice-ineq}
  It holds that $\lambda_2^{\parallel} \le \lambda_1 / 2$ and as a consequence that $\lambda_2^{\perp} \ge \sqrt{3} \, \lambda_2 / 2$.
\end{appclaim}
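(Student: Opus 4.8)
The plan is to read Claim~\ref{claim:lattice-ineq} as the assertion that $\{\vec s_1, \vec s_2\}$ is, in effect, a Lagrange-reduced basis of~$\mathcal L$: the component of~$\vec s_2$ along~$\vec s_1$ has length at most~$\lambda_1/2$. Concretely, I would write $\mu = \langle \vec s_2, \vec s_1 \rangle / \langle \vec s_1, \vec s_1 \rangle$, so that $\vec s^{\parallel}_2 = \mu \, \vec s_1$, $\vec s^{\perp}_2 = \vec s_2 - \mu \, \vec s_1$, and $\lambda_2^{\parallel} = |\, \mu \,| \, \lambda_1$. The first inequality $\lambda_2^{\parallel} \le \lambda_1/2$ is then equivalent to $|\, \mu \,| \le 1/2$, which I would establish by a standard size-reduction / exchange argument.

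First I would suppose, for contradiction, that $|\, \mu \,| > 1/2$. Let $k = \round{\mu} \neq 0$ be the nearest integer to~$\mu$, and put $\vec s_2' = \vec s_2 - k \, \vec s_1$. Since~$k$ is an integer, $\vec s_2' \in \mathcal L$ and $\vec s_2'$ is still linearly independent from~$\vec s_1$. Its component orthogonal to~$\vec s_1$ is unchanged, equal to~$\vec s^{\perp}_2$, while its parallel component is $(\mu - k)\vec s_1$; hence $|\, \vec s_2' \,|^2 = (\lambda_2^{\perp})^2 + (\mu - k)^2 \lambda_1^2$. The rounding convention gives $|\, \mu - k \,| \le 1/2$, and since $|\, \mu \,| > 1/2$ this bound is strict relative to~$|\, \mu \,|$, so $|\, \vec s_2' \,|^2 < (\lambda_2^{\perp})^2 + \mu^2 \lambda_1^2 = \lambda_2^2$. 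This contradicts the minimality of~$\lambda_2$ among the norms of lattice vectors linearly independent from~$\vec s_1$. Therefore $|\, \mu \,| \le 1/2$, i.e.\ $\lambda_2^{\parallel} \le \lambda_1/2$.

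For the stated consequence, I would use the orthogonal decomposition $\lambda_2^2 = (\lambda_2^{\perp})^2 + (\lambda_2^{\parallel})^2$ together with the first part and the hypothesis $\lambda_1 \le \lambda_2$, which give $\lambda_2^{\parallel} \le \lambda_1/2 \le \lambda_2/2$. Hence $(\lambda_2^{\perp})^2 = \lambda_2^2 - (\lambda_2^{\parallel})^2 \ge \lambda_2^2 - \lambda_2^2/4 = 3\lambda_2^2/4$, so $\lambda_2^{\perp} \ge \sqrt{3}\,\lambda_2/2$.

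I do not anticipate a genuine obstacle: the only points needing a word of care are that $\vec s_2 - k\vec s_1$ stays linearly independent from~$\vec s_1$ (immediate, as~$k \in \mathbb Z$), that the strict inequality $(\mu-k)^2 < \mu^2$ holds when $|\,\mu\,| > 1/2$ (immediate from $|\,\mu - \round{\mu}\,| \le 1/2$), and that the argument invokes only the defining extremal properties of~$\vec s_1$ and~$\vec s_2$, so it is unaffected by any tie-breaking in their choice.
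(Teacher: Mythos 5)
Your proof is correct and follows essentially the same route as the paper: both define $\mu = \langle \vec s_1, \vec s_2 \rangle / |\, \vec s_1 \,|^2$, establish $|\, \mu \,| \le 1/2$ by the size-reduction exchange $\vec s_2 \mapsto \vec s_2 - \round{\mu}\,\vec s_1$, and then derive $\lambda_2^{\perp} \ge \sqrt{3}\,\lambda_2/2$ from the Pythagorean decomposition together with $\lambda_1 \le \lambda_2$. Your version is slightly more explicit in spelling out why the exchange contradicts the extremal definition of $\vec s_2$, but the argument is the same.
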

\begin{proof}
  For $\text{proj}_{\vec s_1} (\vec s_2)$ the projection of~$\vec s_2$ onto~$\vec s_1$, we have that
  \begin{align*}
    \mu &= \text{proj}_{\vec s_1} (\vec s_2) = \frac{\langle \vec s_1, \vec s_2 \rangle}{|\, \vec s_1 \,|^2},
    &
    \vec s_2^{\parallel} &= \mu \vec s_1,
    &
    \vec s_2^{\perp} &= \vec s_2 - \vec s_2^{\parallel} = \vec s_2 - \mu \vec s_1.
  \end{align*}

  It must be that $|\, \mu \,| \le 1/2$.
  Otherwise~$\vec s_1$ and $\vec s'_2 = \vec s_2 - \round{\mu} \cdot \vec s_1$ form a basis for~$\mathcal L$, with $|\, \vec s'_2 \,| < |\, \vec s_2 \,|$.
  This is inconsistent with~$\vec s_1$ and~$\vec s_2$ forming a reduced basis up to sign.
  It follows that $\lambda_2^{\parallel} = |\, \vec s_2^{\parallel} \,| = |\, \mu \vec s_1 \,| \le \lambda_1 / 2$.
  Furthermore
  \begin{align*}
    \lambda_2^2
    =
    (\lambda_2^{\perp})^2 + (\lambda_2^{\parallel})^2
    \leq
    (\lambda_2^{\perp})^2 + \lambda_1^2/4
    \leq
    (\lambda_2^{\perp})^2 + \lambda_2^2/4
    \quad \Rightarrow \quad
    (\lambda_2^{\perp})^2
    \ge
    3 \lambda_2^2 / 4
  \end{align*}
  which implies $\lambda_2^{\perp} \ge \sqrt{3} \, \lambda_2 / 2$ by taking the root, and so the claim follows.
\end{proof}

\begin{appclaim}
  \label{claim:size-lambda-two-perp-u}
  Suppose that $2^{m+\ell - 1} > |\, \vec u \,|^2$.
  Then $\lambda_2^{\perp} > |\, \vec u \,|$.
\end{appclaim}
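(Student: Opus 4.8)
The plan is to derive the bound directly from the determinant identity for~$\mathcal L$ established in Claim~\ref{claim:lattice-det}, using only that~$\vec u$ is a non-zero lattice vector.

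First I would record that $\vec u = (\alpha_0(z)/d, \tilde r/2) \in \mathcal L$ has second component $\tilde r/2 \ge 1/2$, so $\vec u \neq \vec 0$. Since~$\lambda_1$ is by definition the norm of a shortest non-zero vector of~$\mathcal L$, this immediately gives $0 < \lambda_1 \le |\, \vec u \,|$, where positivity of~$\lambda_1$ holds because~$\mathcal L$ is a full-rank lattice and hence has no accumulation of lattice points at the origin.

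Next I would invoke Claim~\ref{claim:lattice-det} to write $\lambda_2^{\perp} = 2^{m+\ell-1} / \lambda_1$, which is legitimate as $\lambda_1 > 0$. Combining with $\lambda_1 \le |\, \vec u \,|$ yields $\lambda_2^{\perp} \ge 2^{m+\ell-1} / |\, \vec u \,|$, and the hypothesis $2^{m+\ell-1} > |\, \vec u \,|^2$ then gives
\[
  \lambda_2^{\perp} \ge \frac{2^{m+\ell-1}}{|\, \vec u \,|} > \frac{|\, \vec u \,|^2}{|\, \vec u \,|} = |\, \vec u \,|,
\]
which is the claimed inequality.

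There is no real obstacle here: the statement is a one-line consequence of Claim~\ref{claim:lattice-det} together with the minimality of~$\lambda_1$. The only points worth checking are that~$\lambda_1$ is strictly positive (so that the division is valid) and that the strict inequality is placed correctly; the latter is guaranteed by the strictness of the hypothesis $2^{m+\ell-1} > |\, \vec u \,|^2$.
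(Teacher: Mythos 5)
Your proposal is correct and follows essentially the same route as the paper: use $\lambda_1 \le |\, \vec u \,|$ together with the determinant identity $\lambda_1 \lambda_2^{\perp} = 2^{m+\ell-1}$ from Claim~\ref{claim:lattice-det} and the hypothesis to conclude $\lambda_2^{\perp} \ge 2^{m+\ell-1}/|\, \vec u \,| > |\, \vec u \,|$. The extra remarks on $\vec u \neq \vec 0$ and $\lambda_1 > 0$ are fine but not needed beyond what the paper already assumes.
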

\begin{proof}
  As $\lambda_1 \leq |\, \vec u \,|$, we have $\lambda_2^{\perp} = 2^{m+\ell-1}/\lambda_1 \geq 2^{m+\ell-1} / |\, \vec u \,| > |\, \vec u \,|$ where we have used Claim~\ref{claim:lattice-det}, and the supposition in the claim in the last step.
\end{proof}

\subsection{Bounding the complexity of the enumeration}
If~$\ell$ is sufficiently large as a function of~$r$ and~$m$, then we can immediately recover~$\tilde r$ by reducing the basis $(\vec b_1, \vec b_2)^{\text{T}}$ for~$\mathcal L$ to $(\vec s_1, \vec s_2)^{\text{T}}$ with Lagrange's algorithm~\cite{lagrange, nguyen}:

\latticelemmashortest*
\begin{proof}
  As $|\, \vec u \,| \le r / \sqrt{2}$ by~\refeq{eq:lattice-bound-u}, the supposition in the lemma that $2^{m+\ell} > r^2$ implies that $2^{m+\ell-1} > |\, \vec u \,|^2$, and by Claim~\ref{claim:size-lambda-two-perp-u}, $\lambda_2^{\perp} > |\, \vec u \,|$ if $2^{m+\ell-1} > |\, \vec u \,|^2$.

  Hence, $\lambda_2^{\perp} > |\, \vec u \,|$, so it must be that $\vec u$ is a multiple of~$\vec s_1$.
  In fact, it must be that $\vec u = (rj - 2^{m+\ell} z, r / 2) / \gcd(r, z)$ is equal to~$\vec s_1$ up to sign, as the two components of $\vec u$ are coprime when scaled up by a factor of two, and so the lemma follows.
\end{proof}

By Lem.~\ref{lemma:lattice-recover-tilde-r-shortest}, we can immediately recover~$\tilde r$ by reducing the basis for the lattice provided that $2^{m+\ell} > r^2$.
This is analogous to the situation that arises when solving using continued fractions-based post-processing.
As $2^m > r$, it suffices to pick $\ell \ge m$ to meet the requirement.
If~$\ell$ is less than~$m$ --- say that $\ell = m - \Delta$ for some $\Delta \in [0, m) \inset \mathbb Z$ --- then we can still find~$\tilde r$ by enumerating at most $\deltaexpr$ vectors in~$\mathcal L$ that lie within a ball of a radius $2^{m - \frac{1}{2}}$ centered at the origin:

\latticelemma*
\begin{proof}
There are two cases that we treat separately:
\begin{enumerate}
  \item Suppose $\lambda_2^{\perp} \ge 2^{m - \frac{1}{2}}$:
  Then $\lambda_2^{\perp} > |\, \vec u \,|$ by~\refeq{eq:lattice-bound-u}, so $\vec u = (rj - 2^{m+\ell} z, r / 2) / d$ must be a multiple of~$\vec s_1$.
  In fact, $\vec u$ is equal to~$\vec s_1$ up to sign, as the two components of~$\vec u$ are coprime when scaled up by a factor of two, so we find $\tilde r / 2 = r / (2d)$ and hence~$\tilde r$ up to sign in the second component of~$\vec s_1$.

  In this case there is hence no need to enumerate~$\mathcal L$:
  For $\vec s_1 = (s_{1,1}, s_{1,2})$, it suffices to include $2 \cdot |\, s_{1,2} \,| = \tilde r$ in the set of candidates.

  \item Suppose $\lambda_2^{\perp} < 2^{m - \frac{1}{2}}$: \label{case:lemma-lattice-recover-tilde-r-case-two-enumerate}
  In this case, we enumerate all vectors on the form
  \begin{align*}
    \vec w(m_1, m_2) = m_1 \vec s_1 + m_2 \vec s_2 = (w_1, w_2) \in \mathcal L
  \end{align*}
  for $m_1, m_2 \in \mathbb Z$ such that $|\, \vec w(m_1, m_2) \,| < 2^{m-\frac{1}{2}}$.
  Then~$\vec u$ is amongst the vectors enumerated, as $|\, \vec u \,| < 2^{m-\frac{1}{2}}$ by~\refeq{eq:lattice-bound-u}.
  For $\vec w = \vec u$, it holds that $w_2 = \tilde r / 2$, so including~$2 w_2$ in the set of candidates includes~$\tilde r$.

  As $|\, m_2 \,| < 2^{m-\frac{1}{2}} / \lambda_2^{\perp}$, we need to consider at most $1 + 2 \cdot 2^{m - \frac{1}{2}} / \lambda_2^{\perp}$ values of~$m_2$.
  For each value of~$m_2$, we need to consider at most $1 + 2 \cdot 2^{m - \frac{1}{2}} / \lambda_1$ values of~$m_1$.
  The number of vectors to enumerate is hence at most
  \begin{align}
    M
    &=
    (1 + 2^{m + \frac{1}{2}} / \lambda_1) (1 + 2^{m + \frac{1}{2}} / \lambda_2^{\perp}) \notag \\
    &<
    (1 + 2^{m + \frac{3}{2}} / (\sqrt{3} \lambda_1)) (1 + 2^{m + \frac{1}{2}} / \lambda_2^{\perp}) \label{eq:M-scale} \\
    &<
    3^2 \cdot (2^{m + \frac{1}{2}} / (\sqrt{3} \lambda_1)) (2^{m - \frac{1}{2}} / \lambda_2^{\perp}) \label{eq:M-assumption} \\
    &=
    3 \sqrt{3} \cdot 2^{2m} / (\lambda_1 \lambda_2^{\perp}) \label{eq:M-det}
    =
    6 \sqrt{3} \cdot 2^{m-\ell}
    =
    \deltaexpr
  \end{align}
  where, in~\refeq{eq:M-det}, we have used Claim~\ref{claim:lattice-det}.

  As for the inequality in step~\refeq{eq:M-assumption}, we supposed $\lambda_2^{\perp} < 2^{m - \frac{1}{2}}$, so $2^{m + \frac{1}{2}} / \lambda_2^{\perp} > 2$ which implies that $1 + 2^{m + \frac{1}{2}} / \lambda_2^{\perp} < (\frac{1}{2}+1) \cdot 2^{m + \frac{1}{2}} / \lambda_2^{\perp} = 3 \cdot 2^{m - \frac{1}{2}} / \lambda_2^{\perp}$.

  Furthermore, by Claim~\ref{claim:lattice-ineq}, we have that $\lambda_1 \le \lambda_2 \le 2 \lambda_2^{\perp} / \sqrt{3} < 2^{m + \frac{1}{2}} / \sqrt{3}$, so $2^{m + \frac{3}{2}} / (\sqrt{3} \lambda_1) > 2$, which implies that $1 + 2^{m + \frac{3}{2}} / (\sqrt{3} \lambda_1) < 3 \cdot 2^{m + \frac{1}{2}} / (\sqrt{3} \lambda_1)$.

  (This is why we scaled up the main term in the first factor in~\refeq{eq:M-scale} by $2 / \sqrt{3}$.)
\end{enumerate}

As there are no vectors to enumerate in the first case, and at most~$\deltaexpr$ vectors to enumerate in the second case, the lemma follows.
\end{proof}

\subsubsection{Notes on improving the bound on the enumeration complexity}
\label{appendix:lattice-based-post-processing-improved-bound}
Consider the enumeration in case~\ref{case:lemma-lattice-recover-tilde-r-case-two-enumerate} in the proof of Lem.~\ref{lemma:lattice-recover-tilde-r} in the previous section:

If $\vec w(m_1, m_2) = (w_1, w_2) \in \mathcal L$ is within the circle to be enumerated, then so is~$-\vec w$, but it suffices to find $\vec w$.
One way to avoid including $-\vec w$ in the enumeration is to first iterate over $m_2$ as in the proof, and to then for each $m_2$ only iterate over $m_1$ that yield non-negative $w_2$.
This essentially enumerates only the top semicircle, and hence improves the bound in Lem.~\ref{lemma:lattice-recover-tilde-r} by approximately a factor of two.

It is possible to obtain an even better bound, by e.g.~using that
\begin{align}
  |\, w_1 \,| &\le |\, \alpha_0(z) \,| \le r/2 < 2^{m-1},
  &
  0 \le w_2 &\le r/2 < 2^{m-1}, \label{eq:restrict-w}
\end{align}
to further restrict $(w_1, w_2)$, and hence $(m_1, m_2)$, when performing the enumeration.

In summary, the number of vectors in~$\mathcal L$ that need to be enumerated to find~$\vec u$ is $\ordo(2^\Delta)$.
The constant of $6 \sqrt{3}$ in Lem.~\ref{lemma:lattice-recover-tilde-r} may be slightly improved, at the expense of slightly complicating the procedure and the analysis.

\subsubsection{Notes on filtering the candidates for $\tilde r$ efficiently}
\label{appendix:lattice-based-post-processing-filtering-candidates-efficiently}
As described in Sect.~\ref{sec:solving-candidate-set-for-r}, the set of candidates for~$\tilde r$ generated by enumerating~$\mathcal L$ may be passed to Alg.~\ref{alg:recover-r} or Alg.~\ref{alg:recover-r-tree} to be solved for~$r$, or to Alg.~\ref{alg:recover-multiple-of-r} to be solved for a positive integer multiple~$r'$ of~$r$.
Prior to passing the candidates for~$\tilde r$ to any of these algorithms, it is advantageous to first filter the candidates for~$\tilde r$.
One option is to use Alg.~\ref{alg:filter-tilde-r} that performs at most one $m$-bit exponentiation per candidate.

At minimum, the two requirements in~\refeq{eq:restrict-w} should be checked before passing~$2 w_2$ as a candidate for~$\tilde r$ to Alg.~\ref{alg:filter-tilde-r}, so as to avoid performing exponentiations in step~\ref{alg:filter-tilde-r-step:test} of Alg.~\ref{alg:filter-tilde-r} for candidates that can immediately be trivially dismissed.

The amount of exponentiation work that needs to be performed may be further reduced by leveraging the structure of~$\mathcal L$:
For $\vec s_1 = (s_{1,1}, s_{1,2})$ and $\vec s_2 = (s_{2,1}, s_{2,2})$, we may pre-compute $x_{1} = x^{2 s_{1,2}}$ and $x_{2} = x^{2 s_{2,2}}$, where as in Alg.~\ref{alg:filter-tilde-r} we let
\begin{align*}
  x = g^e
  \quad \text{for} \quad
  e = \prod_{q \in \mathcal P(cm)} q^{\lfloor \log_q cm \rfloor}.
\end{align*}

For $\vec w(m_1, m_2) = m_1 \vec s_1 + m_2 \vec s_2 = (w_1, w_2)$, we then have that $x^{2 w_2} = x_{1}^{m_1} x_{2}^{m_2}$, allowing us to test if $x^{2 w_2} = x_{1}^{m_1} x_{2}^{m_2} = 1$ for each candidate $2 w_2$ for~$\tilde r$ in step~\ref{alg:filter-tilde-r-step:test} of Alg.~\ref{alg:filter-tilde-r} by raising $x_1, x_2$ to small $m_1, m_2$, instead of by raising~$x$ to~$2w_2$.

\subsubsection{Notes on efficiently solving a range of offsets in $j$ for $\tilde r$}
\label{appendix:lattice-based-post-solving-range-of-offsets-in-j-efficiently}
In practice, we do not know if the frequency~$j$ observed is optimal, so we solve not only~$j$ but also $j \pm 1, \, \ldots, \, j \pm B$ for~$\tilde r$ in the hope of thus solving $j_0(z)$ for~$\tilde r$.

To do this efficiently using lattice-based post-processing, for~$j$ the frequency observed, we may first setup the basis $(\vec b_1(j), \vec b_2)^{\text{T}}$ for the lattice $\mathcal L(j)$ where
\begin{align*}
  \vec b_1(j) = (j, 1/2)
  \quad \text{ and } \quad
  \vec b_2 = (2^{m+\ell}, 0),
\end{align*}
and Lagrange-reduce it to obtain $(\vec s_1(j), \vec s_2(j))^{\text{T}}$.
In this process, we may easily also compute and return row multiples $\nu_{1,1}(j), \, \nu_{1,2}(j), \, \nu_{2,1}(j), \, \nu_{2,2}(j) \in \mathbb Z$ such that
\begin{align*}
  \vec s_1(j) &= \nu_{1,1}(j) \cdot \vec b_1(j) + \nu_{1,2}(j) \cdot \vec b_2, \\
  \vec s_2(j) &= \nu_{2,1}(j) \cdot \vec b_1(j) + \nu_{2,2}(j) \cdot \vec b_2.
\end{align*}

When reducing the basis for $\mathcal L(j \pm k)$, for $k = 1, \, \ldots, \, B$, we may then recursively use the row multiples computed when reducing the basis for the lattice $\mathcal L(j \pm (k - 1))$ to compute the basis $(\vec b'_1(j \pm k), \, \vec b'_2(j \pm k))^{\text{T}}$ for $\mathcal L(j \pm k)$ where
\begin{align*}
  \vec b'_1(j \pm k) &= \nu_{1,1}(j \pm (k - 1)) \cdot \vec b_1(j \pm k) + \nu_{1,2}(j \pm (k - 1)) \cdot \vec b_2, \\
  \vec b'_2(j \pm k) &= \nu_{2,1}(j \pm (k - 1)) \cdot \vec b_1(j \pm k) + \nu_{2,2}(j \pm (k - 1)) \cdot \vec b_2,
\end{align*}
and then Lagrange-reduce this basis to obtain the basis $(\vec s_1(j \pm k), \, \vec s_2(j \pm k))^{\text{T}}$ and the row multiples $\nu_{1,1}(j \pm k)$, $\nu_{1,2}(j \pm k)$, $\nu_{2,1}(j \pm k)$ and $\nu_{2,2}(j \pm k)$.

This is typically much faster than independently Lagrange-reducing the bases $(\vec b_1(j \pm k), \vec b_2)^{\text{T}}$ for $\mathcal L(j \pm k)$ for $k = 1, \, \ldots, \, B$, since $(\vec b'_1(j \pm k), \, \vec b'_2(j \pm k))^{\text{T}}$ is easy to compute and heuristically likely to already be close to Lagrange-reduced.

Given the reduced bases~$(\vec s_1(j \pm k), \, \vec s_2(j \pm k))^{\text{T}}$ for~$\mathcal L(j \pm k)$ for $k = 0, \, \ldots, \, B$, we may finally proceed as previously outlined to solve $j \pm k$ for~$\tilde r$ by enumerating short vectors in $\mathcal L(j \pm k)$, or by considering only the shortest vector in $\mathcal L(j \pm k)$.

\section{Proofs of supporting lemmas and claims}
\label{appendix:proofs}
In this appendix we provide proofs for supporting lemmas and claims.

\subsection{Approximating $P(\alpha_r)$ by $\widetilde{P}(\alpha_r)$}
\label{appendix:proofs-approximating}

\mvtclaim*
\begin{proof}
  By the mean value theorem (MVT), we have for some $t \in (u, v)$ that
  \begin{align*}
    \cos'(t) = \frac{\cos(u) - \cos(v)}{u - v}
  \end{align*}
  which implies
  \begin{align*}
    |\, \cos(u) - \cos(v) \,| &=   |\, u - v \,| \cdot |\, \cos'(t) \,| \\
                              &< |\, u - v \,| \cdot \max\left( |\,u\,|, |\,v\,| \right) % N.B.: <= to handle special case of zero
  \end{align*}
  since $|\, \cos'(t) \,| = |\, \sin(t) \,| \le |\, t \,|$, again by the MVT when $t \neq 0$ as
  \begin{align*}
    |\, \sin(t) \,| = |\, \sin(t) - \sin(0) \,| &= |\, t \,| \cdot |\, \cos(t') \,| \le |\, t \,|
  \end{align*}
  for some $t' \in (0, t)$, and trivially when $t = 0$, and so the claim follows.
\end{proof}

\steponeclaimone*
\begin{proof}
  For $\delta = \beta / r \in [0, 1)$, it holds that
  \begin{align*}
    \frac{L + 1}{2^{m+\ell}}
    &=
    \frac{1}{2^{m+\ell}} \left( \floor{\frac{2^{m+\ell}}{r}} + 1 \right)
    =
    \frac{1}{2^{m+\ell}} \left( \frac{2^{m+\ell}}{r} - \delta + 1 \right)
    =
    \frac{1}{r} + \frac{1 - \delta}{2^{m+\ell}} \\
    &<
    \frac{1}{r} + \frac{1 - \delta}{2r}
    \le
    \frac{1}{r} + \frac{1}{2r}
    =
    \frac{3}{2r},
  \end{align*}
  as $r < 2^m$ and $\ell \ge 1$, from which the claim follows.
\end{proof}

\steponeclaimtwo*
\begin{proof}
  For $\delta = \beta / r \in [0, 1)$, it holds that
  \begin{align*}
    \left|\, \frac{L+1}{2^{m+\ell}} - \frac{1}{r} \,\right|
    &=
    \left|\, \frac{1}{2^{m+\ell}} \left( \floor{\frac{2^{m+\ell}}{r}} + 1 \right) - \frac{1}{r} \,\right|
    =
    \left|\, \frac{1}{2^{m+\ell}} \left( \frac{2^{m+\ell}}{r} - \delta + 1 \right) - \frac{1}{r} \,\right| \\
    &=
    \left|\, \frac{1}{r} + \frac{1 - \delta}{2^{m+\ell}} - \frac{1}{r} \,\right|
    =
    \left|\, \frac{1 - \delta}{2^{m+\ell}} \,\right|
    \le
    \frac{1}{2^{m+\ell}},
  \end{align*}
  and analogously that
  \begin{align*}
    \left|\, \frac{L}{2^{m+\ell}} - \frac{1}{r} \,\right|
    &=
    \left|\, \frac{1}{2^{m+\ell}} \floor{\frac{2^{m+\ell}}{r}} - \frac{1}{r} \,\right|
    =
    \left|\, \frac{1}{2^{m+\ell}} \left( \frac{2^{m+\ell}}{r} - \delta \right) - \frac{1}{r} \,\right| \\
    &=
    \left|\, \frac{1}{r} - \frac{\delta}{2^{m+\ell}} - \frac{1}{r} \,\right|
    =
    \left|\, \frac{\delta}{2^{m+\ell}} \,\right|
    <
    \frac{1}{2^{m+\ell}},
  \end{align*}
  and so the claim follows.
\end{proof}

\steponeclaimcos*
\begin{proof}
  For $\phi = 0$, the claim trivially holds.
  Given the series expansion
  \begin{align*}
    \label{eq:cos-proof-inequality}
    \cos \phi
    =
    \sum_{k \, = \, 0}^\infty
      \frac{(-1)^k \phi^{2k}}{(2k)!}
    = 1 - \frac{\phi^2}{2!} + \frac{\phi^4}{4!} - \frac{\phi^6}{6!} + \ordo(\phi^8)
  \end{align*}
  we have for $0 < |\, \phi \,| \le \pi$ that
  \begin{align*}
    \frac{\phi^2}{2} - \frac{\phi^4}{4!} \le 1 - \cos \phi \le \frac{\phi^2}{2}
  \end{align*}
  as the series is alternating, and as it holds for any $k \ge 1$ that
  \begin{align*}
    \frac{\phi^{2(k+1)}}{(2(k+1))!} < \frac{\phi^{2k}}{(2k)!}.
  \end{align*}
  This proves the upper bound on $1 - \cos \phi$.

  As for the lower bound, it holds that
  \begin{align}
    \frac{2 \phi^2}{\pi^2}
    \le
    \frac{\phi^2}{2} - \frac{\phi^4}{4!}
    \le
    1 - \cos \phi
  \end{align}
  for $0 < |\, \phi \,| \le \varphi = \frac{2}{\pi} \sqrt{3(\pi^2 - 4)}$, as
  \begin{align*}
    \frac{2 \phi^2}{\pi^2}
    \le
    \frac{\phi^2}{2} - \frac{\phi^4}{4!}
    \quad\Rightarrow\quad
    \phi^2
    \le
    4! \left( \frac{1}{2} - \frac{2}{\pi^2} \right)
    =
    \frac{12}{\pi^2} \left( \pi^2 - 4 \right).
  \end{align*}

  For $|\, \phi \,| \ge \varphi$, the sign of the derivative
  \begin{align*}
    \frac{\dd}{\dd \phi}
    \left(
      1 - \cos \phi - \frac{2\phi^2}{\pi^2}
    \right)
    =
    \sin \phi - \frac{4\phi}{\pi^2}.
  \end{align*}
  is $\sgn(\phi)$, whilst $1 - \cos \phi > 2\phi^2 / \pi^2 > 0$ when $|\, \phi \,| = \varphi$.

  Hence, $2\phi^2 / \pi^2$ is less than but approaching $1 - \cos \phi$ in $|\, \phi \,|$ when $|\, \phi \,| = \varphi$.
  The crossover point where $2\phi^2 / \pi^2 = 1 - \cos \phi$ occurs when $|\, \phi \,| = \pi$, as is easy to verify.
  This proves the lower bound on $1 - \cos \phi$, and so the claim follows.
\end{proof}

\steponeclaimcostwo*
\begin{proof}
  For $\phi = 0$, the claim trivially holds.
  Given the series expansion
  \begin{align*}
    \cos \phi
    =
    \sum_{k \, = \, 0}^\infty
      \frac{(-1)^k \phi^{2k}}{(2k)!}
    = 1 - \frac{\phi^2}{2!} + \frac{\phi^4}{4!} - \frac{\phi^6}{6!} + \ordo(\phi^8)
  \end{align*}
  we have for $0 < |\, \phi \,| \le \pi$ that
  \begin{align*}
    \frac{\phi^2}{2!} - \frac{\phi^4}{4!} \le 1 - \cos \phi \le \frac{\phi^2}{2!} - \frac{\phi^4}{4!} + \frac{\phi^6}{6!}
  \end{align*}
  as the series is alternating, and as it holds for any $k \ge 1$ that
  \begin{align*}
    \frac{\phi^{2(k+1)}}{(2(k+1))!} < \frac{\phi^{2k}}{(2k)!}.
  \end{align*}

  It follows that
  \begin{align*}
    -\frac{\phi^4}{4!} \le (1 - \cos \phi) - \frac{\phi^2}{2} \le - \frac{\phi^4}{4!} + \frac{\phi^6}{6!}
    \quad
    \Rightarrow
    \quad
    \left|\, (1 - \cos \phi) - \frac{\phi^2}{2} \,\right| \le \frac{\phi^4}{4!}
  \end{align*}
  and so the claim follows.
\end{proof}

\subsection{Proving approximate uniformity}
\label{appendix:proofs-proving-uniformity}

\sumtrigamma*
\begin{proof}
  Divide the sum into three partial sums
  \begin{align*}
    \sum_{t = -B}^{B}
    \frac{1}{(\alpha_0 + rt)^2}
    =
    \sum_{t = -\infty}^{\infty}
    \frac{1}{(\alpha_0 + rt)^2}
    -
    \sum_{t = -\infty}^{-B - 1}
    \frac{1}{(\alpha_0 + rt)^2}
    -
    \sum_{t = B + 1}^{\infty}
    \frac{1}{(\alpha_0 + rt)^2}.
  \end{align*}

  By~\cite[Sect.~6.4.10 on p.~260]{abst}, for $x > 0$, we have that
  \begin{align*}
    \psi^{(n)}(x) = (-1)^{n+1} n! \sum_{t \, = \, 0}^{\infty} (x + t)^{-n-1}
    \quad
    \Rightarrow
    \quad
    \psi'(x) = \psi^{(1)}(x) = \sum_{t \, = \, 0}^{\infty} \frac{1}{(x + t)^2}
  \end{align*}
  for~$\psi^{(n)}$ the~$n$:th derivative of~$\psi$.
  It follows that
  \begin{align*}
    \psi'(1 + B + \alpha_0/r) &= \sum_{t \, = \, 0}^\infty \frac{1}{((1 + B + \alpha_0/r) + t)^2}
                               = \sum_{t \, = \, B + 1}^\infty \frac{1}{(\alpha_0/r + t)^2} \\
                              &= \sum_{t \, = \, B + 1}^\infty \frac{r^2}{r^2 (\alpha_0/r + t)^2}
                               = \sum_{t \, = \, B + 1}^\infty \frac{r^2}{(\alpha_0 + rt)^2}
  \end{align*}
  and analogously that
  \begin{align*}
    \psi'(1 + B - \alpha_0/r) &= \sum_{t \, = \, 0}^\infty \frac{1}{((1 + B - \alpha_0/r) + t)^2}
                               = \sum_{t \, = \, B + 1}^\infty \frac{1}{(-\alpha_0/r + t)^2} \\
                              &= \sum_{t \, = \, B + 1}^\infty \frac{r^2}{r^2 (-\alpha_0/r + t)^2}
                               = \sum_{t \, = \, B + 1}^\infty \frac{r^2}{(-\alpha_0 + rt)^2} \\
                              &= \sum_{t \, = \, -\infty}^{-B - 1} \frac{r^2}{(-\alpha_0 - rt)^2}
                               = \sum_{t \, = \, -\infty}^{-B - 1} \frac{r^2}{(\alpha_0 + rt)^2}.
  \end{align*}

  It now only remains to show that
  \begin{align*}
    \sum_{t = -\infty}^{\infty}
    \frac{1}{(\alpha_0 + rt)^2}
    =
    \frac{1}{r^2} \frac{2\pi^2}{1 - \cos(2 \pi \alpha_0 / r)}.
  \end{align*}
  To this end, we use that by~\cite[Sect.~4.3.92 on p.~75]{abst} it holds that
  \begin{align*}
    \csc^2(x) = \sum_{t = -\infty}^{\infty} \frac{1}{(x + t \pi)^2}
    \quad
    \text{assuming}
    \quad
    x \not\in \{ \pi u \,|\, u \in \mathbb Z\},
  \end{align*}
  where we note when comparing to~\cite{abst} that the sign of the term $t \pi$ in the denominator is arbitrary since we sum over all $t \in \mathbb Z$.
  It follows that
  \begin{align*}
    \pi^2 \csc^2(\pi x) = \sum_{t = -\infty}^{\infty} \frac{1}{(x + t)^2}
    \quad
    \text{assuming}
    \quad
    x \not\in \mathbb Z,
  \end{align*}
  which in turn implies, assuming $x/r \not\in \mathbb Z$, that
  \begin{align*}
    \pi^2 \csc^2(\pi x / r)
    &=
    \sum_{t = -\infty}^{\infty} \frac{1}{(x / r + t)^2}
     =
    \sum_{t = -\infty}^{\infty} \frac{r^2}{r^2 (x / r + t)^2}
     =
    \sum_{t = -\infty}^{\infty} \frac{r^2}{(x + rt)^2}
  \end{align*}
  from which it follows that
  \begin{align*}
    \sum_{t = -\infty}^{\infty} \frac{1}{(\alpha_0 + rt)^2}
    =
    \frac{\pi^2}{r^2} \csc^2(\pi \alpha_0 / r)
    =
    \frac{1}{r^2} \frac{2 \pi^2}{1-\cos(2 \pi \alpha_0 / r)}
  \end{align*}
  where $\alpha_0 / r \notin \mathbb Z$ as $\alpha_0$ is non-zero and on $(-r/2, r/2]$, and so the lemma follows.
\end{proof}

\trigamma*
\begin{proof}
  As in the formulation, let $x \in \mathbb R$ and $x > 0$.
  By~\cite[Sect.~6.4.10 on p.~260]{abst},
  \begin{align*}
    \psi^{(n)}(x) = (-1)^{n+1} n! \sum_{k \, = \, 0}^{\infty} (x + k)^{-n-1}
    \quad
    \Rightarrow
    \quad
    \psi'(x) = \psi^{(1)}(x) = \sum_{k \, = \, 0}^{\infty} \frac{1}{(x + k)^2}
  \end{align*}
  for~$\psi^{(n)}$ the~$n$:th derivative of~$\psi$.
  Let $s \in \mathbb R$ and $s > 1$.
  As Nemes~\cite[Sect.~1]{nemes}~states,
  \begin{align*}
    \zeta(s, x) = \sum_{k \, = \, 0}^{\infty} \frac{1}{(x + k)^s}
    \quad
    \Rightarrow
    \quad
    \psi'(x) = \zeta(2, x),
  \end{align*}
  where $\zeta(s, x)$ is the Hurwitz zeta function.
  We now follow Nemes~\cite{nemes}:

  Let~$N$ be a positive integer.
  By~\cite[eq.~(1.3)]{nemes},
  \begin{align*}
    \zeta(s, x)
    =
    \frac{1}{2} x^{-s} + \frac{x^{1-s}}{s-1} + x^{1-s}
    \left(
      \sum_{n=1}^{N-1} \frac{B_{2n}}{(2n)!} \frac{(s)_{2n-1}}{x^{2n}} + R_N(s, x)
    \right)
  \end{align*}
  for~$B_{2n}$ the $2n$:th Bernoulli number, and $(u)_v = \Gamma(u+v) / \Gamma(u)$ the Pochhammer symbol.
  As for the remainder term $R_N(s, x)$, by~\cite[Thm.~1.3]{nemes},
  \begin{align*}
    R_N(s, x) = \frac{B_{2N}}{(2N)!} \frac{(s)_{2N-1}}{x^{2N}} \theta_N(s, x)
  \end{align*}
  where $\theta_N(s, x) \in (0, 1)$.
  For $s = N = 2$ and any real $x > 0$, it follows that
  \begin{align*}
    \psi'(x)
    =
    \zeta(2, x)
    &=
    \frac{1}{2x^2} + \frac{1}{x} + \frac{1}{x}
    \left(
      \frac{B_{2}}{2} \frac{(2)_{1}}{x^2} + R_2(2, x)
    \right) \\
    &=
    \frac{1}{x} + \frac{1}{2x^2} + \frac{1}{6 x^{3}} + \frac{R_2(2, x)}{x}
    <
    \frac{1}{x} + \frac{1}{2x^2} + \frac{1}{6 x^{3}}
  \end{align*}
  as $(2)_1 = 2$ and $B_2 = \frac{1}{6}$, and where we have used that
  \begin{align*}
    R_2(2, x)
    =
    \frac{B_{4}}{4!} \frac{(2)_{3}}{x^{4}} \theta_2(2, x)
    =
    -\frac{1}{30x^4} \theta_2(2, x) < 0
  \end{align*}
  as $(2)_{3} = 4!$ and $B_4 = -\frac{1}{30}$, and so the claim follows.
\end{proof}


\begin{thebibliography}{99}
  \bibitem{abst} M.\ Abramowitz and I.A.\ Stegun:
  {Handbook of Mathematical Functions With Formulas, Graphs and Mathematical Tables,}
  10th printing (1972), U.S.\ Government Printing Office.

  \bibitem{ecpp} A.O.L.\ Atkin and F.\ Morain:
  {Elliptic curves and primality proving,}
  Math.\ Comput.\ 61(203) (1993), 29--68.
  (See also inria-00075302.)

  \bibitem{aks} M.\ Agrawal, N.\ Kayal and N.\ Saxena:
  {PRIMES is in P,}
  Ann.\ Math.\ 160(2) (2004), 781--793.

  \bibitem{bourdon} P.S.\ Bourdon and H.T.\ Williams:
  {Sharp probability estimates for Shor's order-finding algorithm,}
  Quantum Inf.\ Comput.\ 7(5--6) (2007), 522--550.
  (See also quant-ph/0607148v3.)

  \bibitem{bls75} J.\ Brillhart, D.H.\ Lehmer and J.L.\ Selfridge:
  {New Primality Criteria and Factorizations of $2^m \pm 1$,}
  Math.\ Comput.\ 29(130) (1975), 620--647.

  \bibitem{clsia} J.M.\ Chappell, M.A.\ Lohe, L.\ von Smekal, A.\ Iqbal and D.\ Abbott:
  {A Precise Error Bound for Quantum Phase Estimation,}
  PLoS ONE 6(5):e19663 (2011).

  \bibitem{cl97} H.F.\ Chau and H.-K.\ Lo:
  {Primality Test via Quantum Factorization,}
  Int.\ J.\ Mod.\ Phys.\ C 8(2) (1997), 131--138.
  (See also quant-ph/9508005v3 and IASSNS-HEP-95/69.)

  \bibitem{cemm97} R.\ Cleve, A.\ Ekert, C.\ Macchiavello and M.\ Mosca:
  {Quantum algorithms revisited,}
  Proc.\ R.\ Soc.\ Lond.\ A 454(1969) (1998), 339--354.

  \bibitem{cohen} H.\ Cohen:
  {A Course in Computational Algebraic Number Theory,}
  Graduate Texts in Mathematics~138, 1st edition (1993), Springer-Verlag Berlin Heidelberg.

  \bibitem{davis} E.D.\ Davis:
  {Benchmarks for quantum computers from Shor's algorithm,}
  ArXiv 2111.13856v1 (2021).

  \bibitem{einarsson} G.\ Einarsson:
  {Probability Analysis of a Quantum Computer,}
  ArXiv quant-ph/0303074v1 (2003).

  \bibitem{completely} M.\ Ekerå:
  {On completely factoring any integer efficiently in a single run of an order-finding algorithm,}
  Quantum Inf.\ Proc.\ 20:205 (2021), 1--14.

  \bibitem{general} M.\ Ekerå:
  {Quantum algorithms for computing general discrete logarithms and orders with tradeoffs,}
  J.\ Math.\ Cryptol.\ 15(1) (2021), 359--407.

  \bibitem{ekera-pp} M.\ Ekerå:
  {On post-processing in the quantum algorithm for computing short discrete logarithms,}
  Des.\ Codes Cryptogr.\ 88(11) (2020), 2313--2335.

  \bibitem{ekera-hastad} M.\ Ekerå and J.\ Håstad:
  {Quantum algorithms for computing short discrete logarithms and factoring RSA integers.}
  In: PQCrypto 2017. Lecture Notes in Computer Science (LNCS) 10346 (2017), 347--363.

  \bibitem{fkmw04} J.\ Franke, T.\ Kleinjung, F.\ Morain and T.\ Wirth:
  {Proving the Primality of Very Large Numbers with fastECPP.}
  In: ANTS 2004. Lecture Notes in Computer Science (LNCS) 3076 (2004), 194--207.

  \bibitem{gerjuoy} E.\ Gerjuoy:
  {Shor's factoring algorithm and modern cryptography. An illustration of the capabilities inherent in quantum computers,}
  Am.\ J.\ Phys.\ 73(6):521 (2005), 521--540.

  \bibitem{hw} G.H.\ Hardy and E.M.\ Wright:
  {An Introduction to the Theory of Numbers,}
  4th edition (1960), with corrections up to 1975, Oxford University Press.

  \bibitem{knill} E.\ Knill:
  {On Shor's quantum factor finding algorithm: Increasing the probability of success and tradeoffs involving the Fourier transform modulus,}
  Tech.\ Rep.\ LAUR-95-3350 (1995), Los Alamos National Laboratory.

  \bibitem{koenecke-wocjan} A.\ Koenecke and P.\ Wocjan:
  {Recovering the period in Shor's algorithm with Gauss' algorithm for lattice basis reduction,}
  ArXiv 1210.3003v2 (2013).

  \bibitem{lagrange} J.-L.\ Lagrange:
  {Recherches d'arithmétique, Nouveaux Mémoires de l'Académie royale des Sciences et Belles-Lettres de Berlin, années 1773 et 1775.}
  In: Œuvres de Lagrange, tome 3 (1869), 695--795. (Retrieved via Gallica.)

  \bibitem{lehmer} D.H.\ Lehmer:
  Tests for primality by the converse of Fermat's theorem.
  Bull.\ Amer.\ Math.\ Soc.\ 33(3) (1927), 327--340.

  \bibitem{lucas} E.\ Lucas:
  {Théorie des nombres,}
  tome 1 (1891),
  Gauthier-Villars et fils, du Beaureau des Longitudes, de l'École Polytechnique, Paris.

  \bibitem{mcanally} D.\ McAnally:
  {A Refinement of Shor's Algorithm,}
  ArXiv quant-ph/0112055v4 (2002).

  \bibitem{miller} G.L.\ Miller:
  {Riemann's hypothesis and tests for primality,}
  J.\ Comput.\ Syst.\ Sci.\ 13(3) (1976), 300--317.

  \bibitem{morain07} F.\ Morain:
  {Implementing the asymptotically fast version of the elliptic curve primality proving algorithm,}
  Math.\ Comp.\ 76(257) (2007), 493--505.

  \bibitem{nemes} G.\ Nemes:
  {Error bounds for the asymptotic expansion of the Hurwitz zeta function,}
  Proc.\ R.\ Soc.\ A.\ 473(2203):20170363 (2017), 1--16.

  \bibitem{nguyen} P.Q.\ Nguyen:
  {Hermite's Constant and Lattice Algorithms.}
  In: {The LLL Algorithm: Survey and Applications} (2010), 19--69, Springer Berlin Heidelberg.

  \bibitem{pocklington} H.C.\ Pocklington:
  {The Determination of the Prime or Composite Nature of Large Numbers by Fermat's Theorem.}
  In: Proc.\ Camb.\ Philos.\ Soc.\ 18(1) (1914), 29--30.

  \bibitem{proos-zalka} J.\ Proos and C.\ Zalka:
  {Shor's discrete logarithm quantum algorithm for elliptic curves,}
  ArXiv quant-ph/0301141v2 (2004).

  \bibitem{rabin} M.O.\ Rabin:
  {Probabilistic algorithm for testing primality,}
  J.\ Number Theory 12(1) (1980), 128--138.

  \bibitem{rsa} R.L.\ Rivest, A.\ Shamir and L.\ Adleman:
  {A method for obtaining digital signatures and public-key cryptosystems,}
  Commun.\ ACM 21(2) (1978), 120--126.

  \bibitem{seifert} J.-P.\ Seifert:
  {Using Fewer Qubits in Shor's Factorization Algorithm via Simultaneous Diophantine Approximation.}
  In: CT-RSA 2001. Lecture Notes in Computer Science (LNCS) 2020 (2001), 319--327.

  \bibitem{shor94} P.W.\ Shor:
  {Algorithms for Quantum Computation: Discrete Logarithms and Factoring.}
  In: Proceedings of the 35th Annual Symposium on Foundations of Computer Science, SFCS '94 (1994), 124--134.

  \bibitem{shor97} P.W.\ Shor:
  {Polynomial-time algorithms for prime factorization and discrete logarithms on a quantum computer,}
  SIAM J.\ Comput.\ 26(5) (1997), 1484--1509.
\end{thebibliography}
\end{document}